\setlist[enumerate]{itemsep=0mm}
\newcommand{\Set}[2]{%
  \{#1 : #2 \}%
} 
\let\originalleft\left
\let\originalright\right
\renewcommand{\left}{\mathopen{}\mathclose\bgroup\originalleft}
\renewcommand{\right}{\aftergroup\egroup\originalright}
\newcommand\Graph{\Gamma_{\!\se}}
\newcommand\sGraph{\Gamma_{\!\s}}
\newcommand\ArchCl{\Pi_{\se}}
\newcommand\ArchCls{\Pi_{\s}}
\def\AA{\mathcal{A}} 
 \def\CC{\mathcal{C}}
  \def\FF{\mathcal{F}}
 \def\KK{\mathcal{K}} \def\LL{\mathcal{L}}
\def\PP{\mathcal{P}}  	
 \def\UU{\mathcal{U}}
  \def\NNN{\mathbb{N}} 
\def\RRR{\mathbb{R}}
\def\<{\left<} \def\>{\right>} 
\def\({\left(} \def\){\right)}
\DeclareMathOperator\rInt{ri}
\DeclareMathOperator\Span{span}
\DeclareMathOperator\supp{supp}
\theoremstyle{plain}
\newtheorem{thm}{Theorem}[section]
\newtheorem{prop}[thm]{Proposition}
\newtheorem{cor}[thm]{Corollary}
\newtheorem{lemma}[thm]{Lemma}
\theoremstyle{definition}
\newtheorem{example}[thm]{Example}
\newtheorem*{original standing assumptions}{Original Harsanyi Assumptions}
\newtheorem*{standing assumptions}{Standing assumptions}
\newtheorem{rem}[thm]{Remark}
\newcommand\ov[1]{\overline{#1}}
\def\inv{^{-1}}
\def\Geq{\succsim}
\def\sGeq{\succ}
\def\Eq{\sim}
\def\nsGeq{\not\sGeq}
\def\GeqV{\succsim_V}
\def\sGeqV{\succ_{V}}
\def\EqV{\sim_V}
\def\aa{\alpha}
\def\bb{\beta}
\def\ee{\epsilon}
\def\kk{\kappa}
\def\ll{\lambda}
\def\mm{\mu}
\newcommand{\restr}[1]{\lower.45ex\hbox{$|$}\lower.5ex\hbox{}_{#1}} 
\DeclareMathOperator{\aff}{aff} 			
\DeclareMathOperator{\rai}{rai} 			
\DeclareMathOperator{\cone}{cone} 		
\newcommand*{\textoverline}[1]{$\overline{\hbox{#1}}\m@th$}
\title{Expected utility theory on mixture spaces without the completeness axiom%
\thanks{We are grateful for extremely helpful advice from two anonymous reviewers.  We also thank Efe Ok for invaluable discussion of the history of the subject.  
David McCarthy thanks the Research Grants Council of the Hong Kong Special Administrative Region, China (HKU 750012H) for support. Teruji Thomas thanks the Leverhulme trust for funding through the project `Population Ethics: Theory and Practice' (RPG-2014-064). This is a heavily revised version of a preprint `Representation of strongly independent preorders by sets of scalar-valued functions' MPRA. Paper No. 79284 (2017). Declarations of interest: none.}
}
\author{David McCarthy%
\footnote{Corresponding author, Dept. of Philosophy, University of Hong Kong, Hong Kong, 
\href{mailto:mccarthy@hku.hk}{mccarthy@hku.hk}} 
\and 
Kalle Mikkola%
\footnote{Dept. of Mathematics and Systems Analysis, Aalto University, Finland,
\href{mailto:kalle.mikkola@iki.fi}{kalle.mikkola@iki.fi}} 
\and 
Teruji Thomas%
\footnote{Global Priorities Institute, University of Oxford, United Kingdom, 
\href{mailto:teru.thomas@oxon.org}{teru.thomas@oxon.org}}
}
\def\se{\Geq}
\def\e{\Eq}
\def\s{\sGeq}
\def\seV{\GeqV}
\def\sV{\sGeqV}
\def\eV{\EqV}
\numberwithin{equation}{section}
\begin{document}
\date{\vspace{-5ex}}

\date\today

\maketitle







\begin{abstract}
\noindent A mixture preorder is a preorder on a mixture space (such as a convex set) that is compatible with the mixing operation. In decision theoretic terms, it
satisfies the central expected utility axiom of strong independence. We consider when a mixture preorder has a multi-representation that consists of real-valued, mixture-preserving functions. If it does, it must satisfy the mixture continuity axiom of \citet{HM1953}. 
Mixture continuity is sufficient for a mixture-preserving multi-representation when the dimension of the mixture space is countable, but not when it is uncountable.   
Our strongest positive result is that mixture continuity is sufficient in conjunction with a novel axiom we call countable domination, which constrains the order complexity of the mixture preorder in terms of its Archimedean structure. 
We also consider what happens when   the mixture space is given its natural weak topology. Continuity (having closed upper and lower sets) and closedness (having a closed graph) are stronger than mixture continuity. We show that continuity is necessary but not sufficient for a mixture preorder to have a mixture-preserving multi-representation. Closedness is also necessary; we leave it as an open question whether it is sufficient. We end with results concerning the existence of mixture-preserving multi-representations that consist entirely of strictly increasing functions, and a  uniqueness result.

\smallskip
\noindent \textbf{Keywords.} 
Expected utility; incompleteness; mixture spaces; 
multi-representation; continuity; Archimedean structures.

\smallskip
\noindent \textbf{JEL Classification.} D81.
\end{abstract}

\section{Introduction}\label{s:intro}

The importance of allowing for incomplete preferences is by now beyond dispute. In the context of expected utility, \citet[p. 630]{vNM1953} themselves remarked, of the completeness axiom, that it is ``very dubious, whether the idealization of reality which treats this postulate as a valid one, is appropriate or even convenient''. In the first systematic treatment of expected utility without the completeness axiom, \citet[p. 446]{rA1962} wrote that while all the expected utility axioms are descriptively implausible,  the completeness axiom alone is ``hard to accept even from the normative viewpoint''. 
With normative questions 
especially in mind, we address the problem of representing incomplete preferences
by sets of utility functions. 

Following \citet{rA1962} and \citet{SB1998},  we suppose that preferences are given by a preorder on a mixture space, in the sense of \citet{mH1954}. 
A mixture space is a set $M$ together with a mixing operation, so that for any elements $x$ and $y$ in $M$ and $\aa \in [0,1]$, the element $x \aa y$ of $M$ is understood to be a mixture of $x$ and $y$ in which $x$ is given weight $\aa$ and $y$ weight $1-\aa$. 
We give the standard axiomatization of mixture spaces in section~\ref{s:main}. For now, the best known example involving uncertainty is when $M$ is the set of 
probability measures
on some outcome space, and $x \aa y$ is taken to be the 
probability measure
$\aa x + (1-\aa)y$. 
More generally, any convex set, and thus any vector space, is a mixture space, with the mixing operation defined by the same formula.

Given a possibly incomplete preorder $\se$ on mixture space $M$, a \emph{multi-representation} is a 
nonempty set $\UU$ of functions $M\to \RRR$ such that $x\se y$ if and only if, for all $u\in\UU$, $u(x)\geq u(y)$.%
 \footnote{The concept of a multi-representation of a preorder was introduced in \citet{eO2002}, but the general idea goes back much further. In decision theory, \citet{tB1986} is perhaps the earliest explicit example, but in the guise of a single vector-valued function, rather than a family of scalar-valued functions, multi-representations were envisioned but not developed in \citet[pp. 19--20]{vNM1953}. There is no reason, however, why the general concept of a multi-representation has to stipulate that the codomain is the real numbers. For an example in which it is taken to be a linearly ordered abelian group, see \citet{mP2013}.}

It is natural to require that the functions $u$ respect the mixing operation. 
A function $u \colon M \to M'$ between mixture spaces is {\em mixture preserving} when $u(x \aa y) = u(x)\aa u(y)$. 
In a multi-representation, as we have defined it, 
$M'$ 
is the vector space of 
real numbers. So the question we consider is 
under what conditions a preorder $\se$ on $M$ has a {\em mixture-preserving multi-representation}; that is, under what conditions does it satisfy 

\newlist{MR}{enumerate}{1}
\setlist[MR]{label=\textbf{MR},ref={\text{MR}}}

\begin{MR}[align=parleft, leftmargin=!,itemsep=0pt,labelsep=14pt] 
\item\label{MR} There is a nonempty 
set $\UU$ of 
mixture-preserving functions $M\to\RRR$, such that for all $x,y\in M$, 
\[
	x \se y \iff u(x) \geq u(y) \text{ for all } u \in \UU.
\]
\end{MR}

It is well known that \emph{any} mixture space is isomorphic to a convex set.  Using this fact,  our question is mathematically equivalent to the question of when a preorder on a  convex set has a multi-representation  consisting of 
affine (or even linear) 
functionals on the ambient vector space, 
restricted to the convex set. 
We will exploit this equivalence in proofs (see section~\ref{s:mspace}), but we follow \citet{pM2001} in thinking that mixture spaces are conceptually more fundamental for decision theory. For example, it is often easier to verify that an algebraic structure of interest to decision theorists is a mixture space  than to show directly that it is isomorphic to a convex set.

Much of the literature on mixture-preserving multi-representations has focussed on specific types of mixture spaces. 
Besides sets of probability measures (with different possible assumptions about the underlying measurable space), 
examples include sets of Savage-acts, at least given mild structural assumptions \citep{GMMS2003}; Anscombe-Aumman acts; charges (i.e. finitely additive measures); and vector-valued measures representing imprecise probabilities. Mixture-preserving multi-representations themselves come in a variety of forms. 
In the popular Anscombe-Aumman setting, for example, 
incomplete preferences may be a matter of incomplete beliefs, incomplete tastes, or both, and multi-representations can reflect these distinctions.%
\footnote{For examples involving incomplete beliefs, see \citet{tB1986, tB2002, GMMS2003}. For tastes, see \citet{DMO2004, EO2006, oE2008, oE2014, lG2017, HOR2019, dB2020}. For beliefs and tastes, see \citet{SSK1995, rN2006, OOR2012}. For closely related examples, see \citet{MM2008} (interval-valued representations), \citet{GK2012, GK2013} (nonstandard preorders), and  \citet{yH2012} (justifiable choice).}  



While one could consider these different frameworks one at a time, taking into account their special features, we think it is interesting to consider the unifying question of when one may obtain a mixture-preserving multi-representation of a preorder on an abstract mixture space. This fits with the appealing methodology of assuming as little mathematical structure as possible, and addressing general questions with general tools.


To introduce our main results, let us mention two 
axioms that must clearly be satisfied for \ref{MR} to hold, i.e.~for the existence of a mixture-preserving multi-representation.%
\footnote{We define these axioms in section \ref{s:main}.  Slightly different versions of the two axioms are common in the literature; we clarify some of the relationships in appendix \ref{s:WCon}.} 
First, the preorder must be 
what we call a `mixture preorder': it must satisfy what is arguably the central axiom of expected utility theory, strong independence. 
Strong independence is not in general a natural assumption for preferences on mixture spaces; few people's preferences satisfy it on the simplex whose points denote different proportions of coffee, milk, and sugar.  But it is a plausible normative requirement in the examples of mixture spaces introduced above, which all involve uncertainty. 
Second, it is not hard to show that if a mixture preorder has a mixture-preserving multi-representation, it must satisfy the mixture continuity axiom of  \citet{HM1953}.

The result which sets the stage for our discussion, Theorem~\ref{t:MC not suff}, shows, we think rather surprisingly, that  mixture continuity is not sufficient for a mixture preorder to have a mixture-preserving multi-representation. However, mixture preorders that satisfy mixture continuity without having a mixture-preserving multi-representation must be rather complicated; for example, Theorem~\ref{t:count} shows that  they must have uncountably infinite dimension. 
This raises the question of whether there are normatively natural ways of strengthening or supplementing mixture continuity that do guarantee 
\ref{MR}.

Our strongest positive result, Theorem~\ref{t:CD}, shows that, in combination with mixture continuity, an axiom we call `countable domination' is sufficient for
a mixture preorder to satisfy \ref{MR}.
We provide two interpretations of this axiom. 
First, it is a member of a natural but apparently novel family of decision-theoretic axioms that constrain what we call the `Archimedean structure' of the preorder. 
Another axiom in this family is the standard Archimedean axiom, which is much stronger than countable domination. 
Second, countable domination may be seen as a dimensional restriction on mixture preorders that is much 
less demanding than the requirement of countable dimension.

Our strongest negative result, Theorem~\ref{t:top}, 
considers what happens if we impose a topology on mixture spaces and upgrade mixture continuity to a stronger continuity condition. 
It notes that any mixture preorder that 
satisfies \ref{MR} must be both continuous and closed in the weak topology, understood as the coarsest topology on the mixture space in which the 
real-valued 
mixture-preserving functions 
are continuous. 
However, more surprisingly, it also shows that being continuous is not sufficient for \ref{MR}. 
We leave it as an open question whether being closed is sufficient. 

Section~\ref{s:main} states our axioms more formally and presents our main results. Section~\ref{s:related lit} relates them to the most immediately relevant literature, 
showing how they extend results of \cite{SB1998} 
and answer a question posed by \citet{DMO2004}. 
Section~\ref{s:CD} discusses the interpretation of countable domination. Section~\ref{s:discuss} 
provides proofs of our main results; 
it emphasizes the central ideas, 
appealing to a series of auxiliary results whose proofs we defer to appendix \ref{s:aux proofs}. Section~\ref{s:refine} refines our results by considering two topics. Section~\ref{s:strict} presents results concerning the existence of mixture-preserving multi-representations that 
consist entirely of strictly increasing functions,    
and relates them (in section~\ref{s:strict related lit}) to results by \citet{rA1962, DMO2004, oE2014} and \citet{lG2017}. 
Section~\ref{s:unique} presents a uniqueness result for mixture-preserving multi-representations that is an abstract version of the uniqueness result of \citet{DMO2004}. Appendix \ref{s:WCon} explains the connection between our independence and mixture-continuity axioms and slightly different ones common in the literature. Appendix \ref{s:weak dom} provides a geometrical interpretation of our discussion of Archimedean structures. And, as we mentioned, appendix \ref{s:aux proofs} contains proofs of the auxiliary results.

Finally, we acknowledge the centrality to our results of the work of \citet{vK1953}.

%

\section{Main results}\label{s:main}
 A {\em mixture space} is a nonempty set $M$ together with a 
{\em mixing operation} $m \colon M \times M \times [0,1] \to M$ that satisfies 
axioms shortly to be described. 
As is customary, when the mixing operation is understood, 
we write $x\aa y$ for $m(x,y,\aa)$. 
The axioms are then: 
(i) $x\aa y=y(1-\aa)x$; (ii) $x\aa x=x$; (iii) if $x\aa z = y\aa z$ for some $\aa \neq 0$, then $x=y$; and (iv) $x\aa(y\bb z) = (x \tfrac{\aa}{\aa+\bb-\aa\bb}y)(\aa+\bb-\aa\bb)z$ if $\aa$ and $\bb$ are not both zero.%
\footnote{These are a reordering of the axioms given by \citet{mH1954}. Mixture {\em sets}, as used for expected utility theory in e.g. \citet{HM1953} and \citet{pF1970, pF1982} are more general. Terminology varies; \citet{pM2001} uses `non-degenerate mixture sets' for what we are calling mixture spaces. In our terminology, despite the greater generality of mixture sets, Mongin recommends focussing on mixture spaces for the development of decision theory.} 
These axioms abstract features of convex subsets of vector spaces, where the mixing operation is given by $x \aa y = \aa x + (1-\aa)y$. The first three are self-explanatory, and the last is an associativity axiom. 

We will need the notion of the dimension of a mixture space. The standard definition \citep{mH1954} reduces to the case of convex sets (see section \ref{s:mspace}). However, it is more in the spirit of our focus on mixture spaces to provide a characterisation directly in terms of the mixture-space structure. 
Given a mixture space $M$, say that $M' \subset M$ is a {\em mixture subspace} of $M$ if it is a mixture space under the mixing operation inherited from $M$. 
For any nonempty $A \subset M$, let $M(A)$ be the smallest mixture subspace of $M$ containing $A$. 
Say that $A$ is \emph{mixture independent} if, for any 
nonempty 
$A_1,A_2\subset A$, $A_1\cap A_2=\emptyset\implies M(A_1)\cap M(A_2)= \emptyset$.%
\footnote{This is analogous to the following characterisation of linear independence of a subset $B$ of a vector space: for any $B_1,B_2\subset B$, $B_1\cap
B_2=\emptyset \implies \Span(B_1)\cap\Span(B_2)=\{0\}$.}
We define the {\em dimension} of $M$,
written $\dim M$, 
to be $|A|-1$ for any maximal mixture-independent subset $A$. 
In section~\ref{s:prelim} we show this is well defined and equivalent to the customary definition.

A {\em mixture preorder} is a preorder $\se$ on a mixture space $M$ that is compatible with the mixing operation in that it satisfies the following axiom:
\newlist{SI}{enumerate}{1}
\setlist[SI]{label=\textbf{SI},ref={\text{SI}}}

\begin{SI}[align=parleft, leftmargin=!,itemsep=0pt,labelsep=14pt] 
        \item\label{SI} For $x$, $y$, $z \in M$, and $\aa\in (0,1)$,
$
x\se y \iff x \aa z \se y \aa z.
$   
\end{SI}
\noindent A {\em preordered mixture space} is a pair $(M, \se)$ where $M$ is a mixture space and $\se$ is a mixture  preorder on $M$. 
When $M$ is a convex set of probability measures, \ref{SI} is {\em strong independence}, arguably the central axiom of expected utility theory.


We are interested in the question: when does a mixture preorder have a mixture-preserving multi-representation?

Consider the following axiom, introduced by \citet{rA1962}.%
\footnote{\label{n:aumann}However, \citet{rA1962} regarded \ref{MC} as too strong for his purposes, and instead focussed on, in our labelling: 

\newlist{Au}{enumerate}{1}
\setlist[Au]{label=\textbf{Au},ref={\text{Au}}}

\begin{Au}[align=parleft, leftmargin=!,itemsep=0pt,labelsep=14pt] 
        \item\label{Au} For $x$, $y$, $z \in M$, if $x \aa y \s z  \text{ for all } \aa\in (0,1]$, then 
 $z \nsGeq y$.
\end{Au}
This axiom is strong enough to rule out, for example, the lexicographic ordering of the unit square. But as well as being weaker than \ref{MC}, 
for mixture preorders, \ref{Au} is also weaker than the axiom \ref{Ar} discussed below.  
We discuss \ref{Au} further in 
section~\ref{s:strict related lit}.}

\newlist{MC}{enumerate}{1}
\setlist[MC]{label=\textbf{MC},ref={\text{MC}}}

\begin{MC}[align=parleft, leftmargin=!,itemsep=0pt,labelsep=14pt] 
        \item\label{MC} For $x$, $y$, $z \in M$, if $x \aa y \s z \text{ for all } \aa\in (0,1]$, then $y \se z$.
\end{MC}

As Aumann noted, 
for mixture preorders, \ref{MC} is 
equivalent to 
the well-known {\em mixture continuity} axiom of \citet{HM1953}, 
that $\Set{\aa \in [0,1]}{x \aa y \se z}$ 
and $\Set{\aa \in [0,1]}{z \se x\aa y}$ 
are closed in $[0,1]$ for all $x,y,z\in M$.%
\footnote{See section \ref{s:related lit} and appendix \ref{s:WCon}  for further clarification of the connection between \ref{MC}, the Herstein-Milnor axiom \ref{HM}, and the related axiom \ref{WCon} used by \citet{SB1998} and \citet{DMO2004}. In particular, we explain in Remark \ref{r:MC} why they are all equivalent for mixture preorders.}

Our interest in the axiom \ref{MC} is prompted by the trivial observation, recorded in the following, that \ref{MC} is necessary for \ref{MR}. However, to our surprise, \ref{MC} is not sufficient:
\begin{thm}\label{t:MC not suff}
For any preordered mixture space $(M, \se)$, 
\[
	\ref{MR} \implies \ref{MC},
\]
but the implication is not reversible. 
\end{thm}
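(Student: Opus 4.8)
The plan is to treat the two assertions separately: the implication \ref{MR} $\implies$ \ref{MC} is immediate, whereas refuting its converse calls for an explicit (and necessarily high-dimensional) counterexample. For the easy direction, suppose $\se$ has a mixture-preserving multi-representation $\UU$ and that $x \aa y \s z$ for every $\aa \in (0,1]$. In particular $x \aa y \se z$, so for each $u \in \UU$ we have $u(x \aa y) \ge u(z)$; as $u$ is mixture preserving and $\RRR$ carries its usual mixing, this is $\aa\, u(x) + (1-\aa)\, u(y) \ge u(z)$. Each $u$ is thus affine in $\aa$ along the segment, so letting $\aa \downarrow 0$ preserves the weak inequality and yields $u(y) \ge u(z)$. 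Since this holds for all $u \in \UU$, we conclude $y \se z$, which is \ref{MC}.

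To refute the reverse implication it suffices to exhibit a single preordered mixture space, and I would look for one on a vector space $V$ (itself a mixture space). There, by the standard cone representation of strong independence (cf.\ section~\ref{s:mspace}), a mixture preorder is the same datum as a convex cone $C = \Set{v \in V}{v \se 0}$, with $x \se y \iff x - y \in C$. Two reformulations then make everything geometric. First, an affine functional is a linear functional plus a constant, and the constant cancels in $u(x) \ge u(y)$; hence \ref{MR} holds iff $C$ coincides with the set of vectors at which every linear functional that is nonnegative on $C$ takes a nonnegative value, i.e.\ iff $C$ equals its bipolar $C^{\circ\circ}$ (polars taken in the full algebraic dual of $V$). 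Second, in its Herstein--Milnor form \ref{MC} asserts that $\Set{\aa}{\aa w + p \in C}$ is closed for all $w, p \in V$, i.e.\ that $C$ is \emph{lineally closed}: it meets every line in a relatively closed set. So it is enough to produce a lineally closed convex cone with $C \neq C^{\circ\circ}$.

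This last step is the crux. In finite dimensions lineal closedness coincides with closedness and forces $C = C^{\circ\circ}$, and by Theorem~\ref{t:count} the same conclusion survives in every countable dimension; hence any example must have uncountable dimension, and its non-closedness cannot be detected on any single line. I would build $C$ by transfinite recursion along a well-ordering of the algebraic dual (whose size dictates the uncountable dimension), at each stage adjoining vectors so as to spoil one more candidate separating functional, while preserving (a) the convex-cone property, (b) lineal closedness, and (c) the exclusion of a fixed vector $v_0$. The limit cone is then lineally closed, yet no nonzero functional both supports $C$ and separates $v_0$, so $v_0 \in C^{\circ\circ} \setminus C$ and \ref{MR} fails. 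The genuinely hard part is reconciling (b) with (c): lineal closedness forbids $C$ from being ``open'' in any linear direction, so the recursion must enlarge $C$ without ever creating a line along which $C$ is non-closed, even as the global power to separate $v_0$ is destroyed; this is where the order-theoretic analysis of \citet{vK1953} does the work. By construction the resulting example carries very rich Archimedean structure and in particular violates the countable-domination hypothesis of Theorem~\ref{t:CD}, as that theorem requires.
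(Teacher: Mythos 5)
Your first half is exactly the paper's argument: pass to $u(x\aa y)=\aa u(x)+(1-\aa)u(y)\ge u(z)$ and let $\aa\downarrow 0$. Your reduction of the converse to finding an algebraically closed (``lineally closed'') convex cone $C$ in an uncountable-dimensional vector space with $C\neq C^{\circ\circ}$ is also the paper's reduction, via Proposition~\ref{p:cone}\ref{p:cone:MC} and \ref{p:cone:MR}. Up to that point there is nothing to object to.

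The gap is that you never actually produce such a cone. The transfinite-recursion scheme you sketch --- well-order the algebraic dual and at each stage adjoin vectors to kill one more separating functional while preserving convexity, lineal closedness, and $v_0\notin C$ --- is precisely the part that needs proof, and as described it faces two concrete obstacles. First, lineal closedness is not preserved under increasing unions at limit stages: if $(v,w]\subset\bigcup_\alpha C_\alpha$, the uncountably many points of the segment may be scattered across different stages, so no single $C_\alpha$ witnesses the hypothesis and the limit cone can fail to contain $v$. Second, to ``spoil'' a functional $f$ with $f(v_0)<0$ you must adjoin some $c$ with $f(c)<0$, and you must do this for all such functionals (of which there are at least $2^{\dim V}$) while keeping $v_0$ outside the cone they generate; nothing in your sketch controls the interaction among the adjoined generators, and you explicitly defer this crux to ``the order-theoretic analysis of Klee.'' The paper runs no recursion at all: it writes down one explicit cone, $K=\cone\Set{y_A+b_0}{A\subset B_1 \text{ nonempty and finite}}$ with $y_A=|A|^{-2}\sum_{b\in A}b$, over an uncountable basis $B$ with $B_1=B\setminus\{b_0\}$ (Example~\ref{ex:klee}). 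Algebraic closedness follows because $K$ meets each subspace $\Span(A\cup\{b_0\})$ in a finitely generated, hence algebraically closed, cone; and $b_0\in\ov K\setminus K$ follows from a pigeonhole argument: any linear $f\geq 0$ on $K$ with $f(b_0)<0$ satisfies $f(b)<n$ for some fixed $n$ and infinitely many $b\in B_1$ (by uncountability), whence $f(y_A+b_0)<f(b_0)+n/|A|<0$ for $|A|$ large enough. Until you either carry out your recursion in full or adopt an explicit construction of this kind, the counterexample --- and hence the non-reversibility half of the theorem --- is not established.
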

The failure of reversibility is in fact quite general.
\begin{thm}\label{t:klee}
Every mixture space of uncountable dimension has a mixture preorder that satisfies \ref{MC} but violates \ref{MR}. 
\end{thm}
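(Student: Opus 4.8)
The plan is to translate \ref{MR} and \ref{MC} into the language of convex cones and then import a single hard fact from convex geometry. Since every mixture space is isomorphic to a convex set, I realize the given $M$ as a convex set $K$ spanning a real vector space $W$ with $\dim W=\dim M$, which is uncountable. I will specify a mixture preorder on $K$ by choosing a wedge (a convex cone $C\subseteq W$ with $0\in C$ and $C+C\subseteq C$) and setting $x\se y\iff x-y\in C$. Strong independence \ref{SI} is then automatic, because for $\aa\in(0,1)$ one has $(x\aa z)-(y\aa z)=\aa(x-y)$, which lies in $C$ iff $x-y$ does. Two observations drive everything. First, \ref{MC} holds iff $C$ is \emph{linearly closed}, i.e. $C\cap\l$ is closed in $\l$ for every line $\l\subseteq W$, since each set $\{\aa:x\aa y\se z\}$ is the preimage of $C$ under the affine map $\aa\mapsto\aa(x-y)+(y-z)$. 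Second, \ref{MR} holds iff $C$ equals its double dual cone $C^{**}$, i.e. iff $C$ is the intersection of the half-spaces $\{v:\phi(v)\ge0\}$ over all linear $\phi\in W^*$ with $\phi|_C\ge0$; here I use that on a convex set the mixture-preserving real functions are exactly the restrictions of affine functionals, and that $K-K$ is absorbing in $W$ (being a convex symmetric spanning set), so that a multi-representation, which only constrains differences $x-y\in K-K$, nonetheless pins down $C$ on all of $W$ after rescaling into $K-K$.

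With this dictionary the theorem reduces to one statement: in a vector space of uncountable dimension there is a wedge $C$ that is linearly closed yet satisfies $C\ne C^{**}$. This is precisely the regime where Theorem~\ref{t:count} fails, so the construction must use uncountability essentially, and it is here that I would invoke the work of \citet{vK1953}. The key input is a linearly closed convex cone that is not the intersection of the closed half-spaces containing it, together with a witnessing point $w_0\in C_0^{**}\setminus C_0$; such a cone $C_0\subseteq W_0$ exists in a space $W_0$ of dimension $\aleph_1$, it underlies Theorem~\ref{t:MC not suff}, and its existence rests on the convex-geometric results of \citet{vK1953}.

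It then remains to spread this single example across an arbitrary uncountable-dimensional $M$. Since $\aleph_1\le\dim W$, I split $W=W_0\oplus W_1$ and set $C:=\pi_0^{-1}(C_0)$ (that is, $C_0+W_1$), where $\pi_0\colon W\to W_0$ is the projection. This $C$ is a wedge; it is linearly closed because $\pi_0$ carries lines to lines or points and $C_0$ is linearly closed; and since $\pm W_1\subseteq C$, every $\phi\in W^*$ with $\phi|_C\ge0$ must vanish on $W_1$ and hence factor as $\psi\circ\pi_0$ with $\psi|_{C_0}\ge0$, whence $C^{**}=\pi_0^{-1}(C_0^{**})$. In particular $w_0\in C^{**}\setminus C$, so $C\ne C^{**}$. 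Feeding $C$ back through the dictionary produces a mixture preorder on $M$ that satisfies \ref{MC} (as $C$ is linearly closed) but violates \ref{MR} (as $C\ne C^{**}$), which is what the theorem asserts.

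I expect the genuine difficulty to be concentrated entirely in the imported convex-geometry fact, namely the construction of a linearly closed wedge with $C\ne C^{**}$ in dimension $\aleph_1$; the cardinality sensitivity—that a countable family of countable ordinals is bounded—is what must be exploited, and no such cone exists in countable dimension by Theorem~\ref{t:count}. The surrounding steps, namely the cone/continuity dictionary, the absorbing-set rescaling that turns \ref{MR} into the equation $C=C^{**}$, and the projection argument that transports the example to every uncountable-dimensional mixture space, should all be routine once that fact is in hand.
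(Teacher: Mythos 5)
Your reduction to cones is correct and is exactly the route the paper takes: the efficient embedding, the equivalence of \ref{MC} with algebraic (linear) closedness of the positive cone, and the equivalence of \ref{MR} with the cone being an intersection of closed half-spaces (Proposition~\ref{p:cone}). Your extra projection step $C=\pi_0^{-1}(C_0)$ for spreading a dimension-$\aleph_1$ example to arbitrary uncountable dimension is valid but unnecessary, since the construction one needs works verbatim over any uncountable basis.

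The genuine gap is that the entire mathematical content of the theorem is the ``single hard fact'' you import, and as stated it cannot simply be quoted from \citet{vK1953}. Klee constructs a convex \emph{set} that is algebraically closed but not weakly closed; he does not produce a convex \emph{cone} with these properties, and the passage from set to cone is not formal (taking the cone generated by an algebraically closed set destroys algebraic closedness in general). The paper's actual work lies in modifying Klee's construction: with an uncountable basis $B$, a distinguished $b_0$, and $y_A=|A|^{-2}\sum_{b\in A}b$ for finite $A\subset B\setminus\{b_0\}$, one takes $K=\cone\Set{y_A+b_0}{A \text{ finite, nonempty}}$ and proves both that $K$ is algebraically closed (by reducing to finitely generated subcones) and that $b_0$ lies in the weak closure of $K$ but not in $K$ (Example~\ref{ex:klee} and Proposition~\ref{p:klee}). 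The non-closedness argument is a pigeonhole on the uncountable set $B\setminus\{b_0\}$: any separating functional $f$ with $f(K)\ge 0$ and $f(b_0)<0$ must be bounded above by some $n$ on infinitely many basis vectors, and the $|A|^{-2}$ normalization then forces $f(y_A+b_0)<0$ for large $A$. Your stated intuition for the cardinality mechanism (``a countable family of countable ordinals is bounded'') does not match this argument, which suggests the construction is not actually in hand; also, appealing to the example ``underlying Theorem~\ref{t:MC not suff}'' is circular, since in the paper that theorem's failure of reversibility is itself deduced from the present one. Until the cone is exhibited and verified, the proof is incomplete.
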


This raises the question: how might \ref{MC} be strengthened to guarantee a mixture-preserving multi-representation? 
We will consider a range of conditions that are stronger than \ref{MC}. Some we will show are sufficient for a mixture-preserving multi-representation, but not necessary. Others are necessary, but not sufficient. 
We do not know of a 
nontrivial 
condition that is necessary and sufficient, 
but one of our results will suggest a natural candidate.%
\footnote{\label{n:quotients}We note in passing that, if $(M,\se)$ is a preordered mixture space, then the quotient $M/{\e}$ is also naturally a mixture space with a mixture preorder $\se'$, and $\se'$ is actually a partial order ($x\e' y\implies x=y$).  For many purposes it suffices to consider $M/{\e}$ rather than $M$. In particular, it is not hard to see that $\se$ satisfies \ref{MR} if and only if $\se'$ does. But we will focus on $M$ itself.}

A first sufficient condition for \ref{MR} is suggested by 
Theorem~\ref{t:klee}: we simply strengthen \ref{MC} by assuming in addition that $\dim M$ is countable.
 (Recall that \emph{countable} means either finite or countably infinite.)

\begin{thm}\label{t:count}
For any preordered mixture space $(M, \se)$, 
\[
	\ref{MC} \,\&\,\dim M \text{ is countable} \implies \ref{MR},
\]
but the implication is not reversible.
\end{thm}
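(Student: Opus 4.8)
The plan is to prove the two halves separately. For the positive implication, the goal is to produce, from a mixture-continuous mixture preorder on a countable-dimensional space, a family $\UU$ of mixture-preserving functions $M \to \RRR$ that jointly represent $\se$. I would first exploit the equivalence (promised in the excerpt) between mixture spaces and convex sets, so that $M$ may be taken to be a convex subset of a vector space $W$ with $\se$ a strongly independent, mixture-continuous preorder, and mixture-preserving functions correspond to restrictions of affine (equivalently, after a normalization, linear) functionals. Since $\dim M$ is countable, the ambient span of $M$ may be taken to be a countable-dimensional vector space. The representation will then amount to showing that for each pair $x \nLeq y$ there is a mixture-preserving $u \in \UU$ with $u(x) > u(y)$; equivalently, working with the difference $v = x - y$ in the vector space, I would want a linear functional $u$ with $u(v) > 0$. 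The set of such ``good differences'' forms a convex cone $C$ (this is exactly where \ref{SI} is used, to guarantee that $\se$ translates into a cone in $W$), and the problem reduces to separating points of $C$ from $0$, or from the opposite cone, by linear functionals.

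The heart of the argument is a separation/extension theorem for convex cones in countable-dimensional vector spaces. The natural tool is the work of \citet{vK1953} flagged at the end of the introduction, which should provide exactly the algebraic separation needed in the absence of a topology. Concretely, for each $v$ with $v \notin \cl$-type closure of the negative cone, I would construct a linear functional that is nonnegative on $C$ and strictly positive on $v$. In infinite dimensions this is where things can fail, so countability is essential: a countable-dimensional space can be exhausted by an increasing chain of finite-dimensional subspaces $W_1 \subset W_2 \subset \cdots$, on each of which finite-dimensional separation (an elementary supporting-hyperplane argument) applies, and one extends the functional step by step along the chain, using \ref{MC} to control the boundary behavior at each finite stage. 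The mixture continuity axiom is what prevents the lexicographic-type pathologies that Theorem~\ref{t:klee} shows can occur in uncountable dimension: it ensures that the relevant cones are ``almost closed'' in each finite-dimensional slice, so that the separating functionals chosen at successive stages can be made compatible and pieced together into a single functional on all of $W$.

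I expect the main obstacle to be precisely this inductive extension: one must choose, at each finite stage, a separating functional that both witnesses the desired strict inequality and can be extended to the next stage without destroying nonnegativity on $C$. Naively extending a functional may force it to take a negative value somewhere on the cone at the next level, and the book-keeping needed to avoid this — choosing functionals in a uniformly bounded or suitably normalized family and passing to a limit or a consistent selection — is the technical crux. Mixture continuity enters here to guarantee that the obstructions form a closed condition at each finite stage, so that a compatible choice always exists; this is the step where I would lean most heavily on the auxiliary machinery (and on \citet{vK1953}) rather than on a direct computation.

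For the non-reversibility, the plan is simply to exhibit a single preordered mixture space satisfying \ref{MR} but with $\dim M$ uncountable. Since \ref{MR} is clearly preserved when one takes an uncountable-dimensional space together with a very simple mixture-preserving multi-representation — for instance, a convex set of uncountable dimension equipped with the preorder induced by one fixed nonzero linear functional, or more robustly by the full coordinatewise (product) order, which is manifestly mixture-preserving and multi-represented by the coordinate projections — the implication $\ref{MR} \implies \dim M \text{ countable}$ fails. Concretely I would take $M$ to be an uncountable-dimensional convex set, let $\UU$ consist of the coordinate functionals, and define $\se$ by $x \se y \iff u(x) \ge u(y)$ for all $u \in \UU$; then $\ref{MR}$ holds by construction while $\dim M$ is uncountable, which refutes reversibility. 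Verifying that this $\se$ is genuinely a mixture preorder and that the representation is faithful is routine, so the whole non-reversibility direction is short.
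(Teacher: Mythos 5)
Your overall strategy coincides with the paper's: pass to an efficient embedding $M\subset V$, translate \ref{MC} into algebraic closedness of the positive cone $C$ (Proposition~\ref{p:cone}), and reduce \ref{MR} to separating each $v\notin C$ from $C$ by a linear functional that is nonnegative on $C$; your counterexample to reversibility (an uncountable-dimensional space with a coordinatewise order represented by the coordinate projections) is a correct and routine variant of the paper's, which uses complete indifference. The gap is in the one step that carries the real mathematical weight: you never establish that, in a countable-dimensional space, an algebraically closed convex cone admits the required separating functionals. The inductive scheme you sketch --- separate in each finite-dimensional $W_n$ and extend along the chain $W_1\subset W_2\subset\cdots$ --- has exactly the failure mode you yourself flag: a functional nonnegative on $C\cap W_n$ and negative at $v$ need not extend to one nonnegative on $C\cap W_{n+1}$, and without a core point of $C$ there is no general device for making the successive choices compatible. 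Asserting that \ref{MC} makes ``the obstructions form a closed condition at each finite stage'' is not an argument; \ref{MC} only delivers algebraic closedness of $C$, which you have already used.

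The paper closes this gap globally rather than inductively (Proposition~\ref{p:count}): algebraic closedness of $C$ implies that $C\cap W$ is Euclidean-closed in every finite-dimensional subspace $W$, hence $C$ is closed in the \emph{finite} topology on $V$; by \citet{vK1953} (see also \citet{KK1963}) the finite topology on a countable-dimensional space is locally convex, so the strong separating hyperplane theorem applies at once and exhibits $C$ as an intersection of weakly closed half-spaces, whence \ref{MR} by Proposition~\ref{p:cone}. (Alternatively, \citet[p.~194]{gK1969} reaches the same conclusion via the algebraic separation theorem.) Your instinct to invoke \citet{vK1953} is right, but the relevant result is this local-convexity statement, not a stepwise extension principle; as written, the crux of the positive direction remains unproved.
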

However, the assumption of countable dimension is clearly much stronger than necessary. 
We will give some examples in section \ref{s:CD}: in particular, Example~\ref{ex:combo} provides two simple ways in which a preordered mixture space of countable dimension that satisfies \ref{MC}, and consequently \ref{MR}, can be blown up to one of arbitrarily large dimension that still satisfies both \ref{MC} and \ref{MR}. 

Instead, our weakest sufficient condition
involves an apparently novel 
axiom that we call \emph{countable domination} (\ref{CDom}). 
We state it now but will discuss its significance at length in section \ref{s:CD}; in short,  it strictly weakens the assumption that $\dim M$ is countable, and can also be seen as a much weaker form of the standard Archimedean axiom.

Let $\Graph\subset M\times M$ be the graph of the mixture preorder $\se$: it consists of pairs $(x,y)$ with $x\se y$. For any $(x,y)$ and $(s,t)$ in $\Graph$, say that $(x,y)$  \emph{weakly dominates} $(s,t)$ if $x\aa t\se y\aa s$ for some $\aa\in (0,1)$. 
The relation of weak domination is a preorder on $\Graph$ (see appendix~\ref{s:weak dom}). 
A natural interpretation is that when  $(x,y)$ weakly dominates $(s,t)$,  the (weakly positive) difference in value between $s$ and $t$ is at most finitely many times greater than that between $x$ and $y$. Our axiom is 
\newlist{CDom}{enumerate}{1}
\setlist[CDom]{label=\textbf{CD},ref={\text{CD}}}
\begin{CDom}[align=parleft, leftmargin=!,itemsep=0pt,labelsep=14pt] 
        \item\label{CDom}  There is a countable set $D \subset \Graph$ such that 
       each
        $(s,t)\in \Graph$ is weakly dominated by some $(x,y)\in D$. 
\end{CDom}

Our strongest positive result is

\begin{thm}\label{t:CD}
For any preordered mixture space $(M, \se)$,
\[
\ref{MC} \,\&\, \ref{CDom}  \implies \ref{MR},
\] 
but the implication is not reversible. 
\end{thm}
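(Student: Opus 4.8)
The plan is to pass to the convex-set picture and recast everything in terms of a cone. By the isomorphism of mixture spaces with convex sets (section~\ref{s:mspace}), I would take $M$ to be a convex subset of a real vector space $V$, so that mixture-preserving functions $M\to\RRR$ are exactly restrictions of linear functionals. Let $C$ be the convex cone generated by the differences $\{s-t : s\se t\}$, and write $u\le_C w$ for $w-u\in C$. Using \ref{SI} one checks that $x\se y\iff x-y\in C$, so that \ref{MR} becomes the separation statement $C=C^{++}$, i.e.\ every $v\notin C$ admits a linear functional $u$ with $u\ge 0$ on $C$ and $u(v)<0$; and one checks that \ref{MC} is exactly the statement that $C$ is \emph{line-closed} (meets every line in a closed set). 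The problem thus reduces to showing: a line-closed cone satisfying \ref{CDom} is the intersection of the positive half-spaces containing it.

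Next I would extract the order structure hidden in \ref{CDom}. Writing $D=\{(x_n,y_n)\}$ and $d_n=x_n-y_n\in C$, a short computation ($x_n\aa t\se y_n\aa s \iff \aa d_n-(1-\aa)(s-t)\in C$) turns ``$(x_n,y_n)$ weakly dominates $(s,t)$'' into ``$s-t\le_C\lambda d_n$ for some $\lambda>0$''. Hence, with $f_k:=d_1+\dots+d_k$, the cone decomposes as an increasing union
\[
C=\bigcup_{k}\hat C_k,\qquad \hat C_k:=\{c\in C:\ c\le_C\lambda f_k\ \text{for some}\ \lambda>0\}.
\]
Setting $V_k:=\Span\hat C_k$, I would verify $\hat C_k=C\cap V_k$ and that $f_k$ is an \emph{order unit} of $(V_k,\hat C_k)$ (every element of $V_k$ is order-bounded by a multiple of $f_k$). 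Note that the $V_k$ are generally of uncountable dimension, so—unlike Theorem~\ref{t:count}—there is no reduction to countable dimension; the order unit must do the work instead. Given $v\notin C$: if $v\notin\Span C=\bigcup_kV_k$, a functional vanishing on $\Span C$ and negative at $v$ separates it, so I may assume $v\in V_k\setminus\hat C_k$ for some $k$.

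Now I would separate $v$ inside $V_k$ and then extend. Equip $V_k$ with the order-unit seminorm of $f_k$. The first key point is that line-closedness of $C$ forces $\hat C_k$ to be \emph{closed} in this norm: if $w$ is a norm-limit of points of $\hat C_k$ then $w+\ee f_k\in C$ for every $\ee>0$, and applying line-closedness along the line $\{w+tf_k\}$ gives $w\in C$, hence $w\in\hat C_k$. Since $\hat C_k$ is then norm-closed with interior point $f_k$, Hahn--Banach yields a continuous, hence positive, functional $u_k$ on $V_k$ with $u_k\ge 0$ on $\hat C_k$ and $u_k(v)<0$. The hard part will be extending $u_k$ to a globally positive functional: $V_k$ is \emph{not} majorizing in $V_{k+1}$ (the new generator $d_{k+1}$ is typically not dominated by $f_k$), so the Kantorovich--M.~Riesz positive-extension theorem does not apply directly. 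I would split each step $V_j\to V_{j+1}$ in two. First extend across the single vector $f_{j+1}$: the admissible values $u(f_{j+1})$ form a nonempty interval, because for cone relations $w_1+\lambda_1 f_{j+1},\,w_2-\mu_2 f_{j+1}\in\hat C_{j+1}$ with $w_i\in V_j$ and $\lambda_1,\mu_2>0$, the combination $\mu_2w_1+\lambda_1w_2$ lies in $\hat C_j$, which forces the relevant $\sup\le\inf$ inequality. Second, the enlarged domain now contains $f_{j+1}$, an order unit of $V_{j+1}$, hence is majorizing there, so Kantorovich extends the functional positively to all of $V_{j+1}$. Iterating over $j\ge k$ and taking the union gives a positive functional on $\Span C$ with $u(v)<0$, which extends arbitrarily to $V$; this establishes $C=C^{++}$ and thus \ref{MR}.

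Finally, for the failure of reversibility I would exhibit a mixture preorder satisfying \ref{MR} but not \ref{CDom}. Take $V=\RRR^{(I)}$ (finitely supported families) with $I$ uncountable, under the coordinatewise order. This is multi-represented by the coordinate projections, which are mixture preserving, so \ref{MR}—and a fortiori \ref{MC}—holds. But \ref{CDom} fails: any countable family $\{d_n\}$ has $\bigcup_n\supp d_n$ countable, so the unit vector at any coordinate outside this set is not dominated by any $\lambda d_n$, i.e.\ not weakly dominated by any element of a countable $D\subset\Graph$.
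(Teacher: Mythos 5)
Your reduction to the cone picture, your translation of \ref{CDom} into the existence of the dominating sequence $d_n$, the observation that algebraic closedness of $C$ makes each $\hat C_k$ closed in the order-unit seminorm of $f_k$, the separation of $v$ from $\hat C_k$ inside $V_k$, and the counterexample to reversibility are all sound (the counterexample is essentially Example~\ref{ex:CDnotSD} with $S$ uncountable; the paper's official one, Example~\ref{ex:cantor}, uses $\RRR^\NNN$ and a diagonal argument instead). The gap is in the extension step, exactly at the point you flag as ``the hard part.'' Your one-vector extension across $f_{j+1}$ requires a real number $t=u(f_{j+1})$ with $\sup S_-\le t\le \inf S_+$, where $S_-=\{-u(w)/\lambda : w\in V_j,\ \lambda>0,\ w+\lambda f_{j+1}\in C\}$ and $S_+=\{u(w')/\mu : w'\in V_j,\ \mu>0,\ w'-\mu f_{j+1}\in C\}$. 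Your convexity computation shows every element of $S_-$ is $\le$ every element of $S_+$, but that only suffices when both sets are nonempty. If $S_+\neq\emptyset$ then $\mu f_{j+1}\le_C w'\le_C \nu f_j$ for some $w'\in V_j$, which forces $d_{j+1}\in\hat C_j$ and makes the step vacuous; so in every nontrivial step $S_+=\emptyset$, and you must instead show $\sup S_-<\infty$, i.e.\ that $u_j$ is bounded above on the convex set $(f_{j+1}-C)\cap V_j$. Nothing in the argument gives this: that set is unbounded in the order-unit seminorm of $f_j$ whenever $V_j$ is infinite-dimensional, and the automatic norm-continuity of a positive functional is no help on a norm-unbounded set. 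Concretely, if $C$ contains elements $f\pm e_n$ and $f+d-\sum_{n\le N}e_n$ for all $N$ (with $f,d,e_1,e_2,\dots$ independent), then $u(e_n)=u(f)=1$ is positive on $C\cap V_f$ but $\sum_{n\le N}e_n\le_C f+d$ with $u\left(\sum_{n\le N}e_n\right)=N\to\infty$, so no positive extension across $f+d$ exists; and because the offending elements accumulate along no single line, algebraic closedness of $C$ does not rule this configuration out in infinite dimensions (this is precisely the Klee phenomenon your Theorem~\ref{t:klee} exploits). In finite dimensions algebraic closedness would force the limit directions into $C$ and kill such examples, which is why the obstruction is invisible there.

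The paper sidesteps this entirely by never climbing the tower $V_1\subset V_2\subset\cdots$. Given $w\notin C$, it forms $Z_w=\Span(Z\cup\{w\})$ where $Z$ is the \emph{whole} countable cofinal set at once. This space has countable dimension, so $C\cap Z_w$ is weakly closed by Proposition~\ref{p:count} (via Klee's result that the finite topology on a countable-dimensional space is locally convex), giving the separating functional; and $Z_w$ is automatically cofinal in $\Span(C\cup\{w\})$, so Kantorovich's theorem applies in a single step with no single-vector extension needed. The moral is that the countability in \ref{CDom} should be spent on making the \emph{separating subspace} countable-dimensional and cofinal, not on enumerating an exhaustion of $C$ by order-unit pieces; your $V_k$, being possibly uncountable-dimensional and non-cofinal, give a functional with no reason to extend positively. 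Unless you can prove $\sup S_-<\infty$ at each step (which the example above indicates is false in general), the proposed route does not go through.
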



Instead of adding to \ref{MC} a condition such as \ref{CDom},  
we might impose a topology on the mixture space, and upgrade \ref{MC} to a stronger continuity condition.

Given an arbitrary topological space $M$, we say that a preorder $\se$ on $M$ is {\em continuous} if, for all $x\in M$, the sets  $\Set{y \in M}{y \se x}$ and $\Set{y \in M}{x \se y}$ are closed in $M$. A stronger continuity-like condition that is sometimes used is that the graph $\Graph$ is closed in the product topology on $M\times M$; in this case we simply say that $\se$ is {\em closed}.%
\footnote{In the study of arbitrary preorders on  topological spaces, the  distinction between these two forms of continuity is standard, but terminology varies. For example, \citet{EO2011} use `semicontinuous' and `continuous' for our `continuous' and `closed' respectively.
\citet{BH2016} use `semi-closed' and `closed'.} 
Thus we study the following axioms. 

\newlist{Con}{enumerate}{1}
\setlist[Con]{label=\textbf{Con},ref={\text{Con}}}

\begin{Con}[align=parleft, leftmargin=!,itemsep=0pt,labelsep=14pt] 
        \item\label{Con}  $\se$ is continuous.  
\end{Con}

\newlist{Cl}{enumerate}{1}
\setlist[Cl]{label=\textbf{Cl},ref={\text{Cl}}}

\begin{Cl}[align=parleft, leftmargin=!,itemsep=0pt,labelsep=14pt] 
        \item\label{Cl}  $\se$ is closed.  
\end{Cl}

Specific examples of mixture spaces (like sets of probability measures) may suggest specific topologies 
(see section~\ref{s:related lit}).   
However, we will focus on what we call the \emph{weak topology},  which makes sense for any mixture space. By definition, it is 
the coarsest topology (i.e. containing the fewest open sets) such that all the mixture-preserving functions $M \to \RRR$ are continuous. 
See Remark \ref{rem:weak} below for more on our terminology. The interest of the weak topology comes from the fact that it makes both \ref{Cl} and \ref{Con} into necessary conditions for \ref{MR}, as the following elaboration of Theorem~\ref{t:MC not suff} explains.

\begin{thm}\label{t:top}
For any preordered mixture space $(M, \se)$ in which $M$ has the weak topology, 
\[
\ref{MR}  \implies \ref{Cl} \implies \ref{Con} \implies \ref{MC}, 
\]
but the second and third implications are not reversible. 
\end{thm}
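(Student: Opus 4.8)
The plan is to prove the chain of implications directly from the definition of the weak topology, and then to argue that both non-reversibilities are genuinely infinite-dimensional phenomena whose witnesses are the real content of the theorem. Three continuity facts drive everything. (a) By definition of the weak topology, every mixture-preserving $u\colon M\to\RRR$ is continuous. (b) For fixed $z$ the insertion maps $w\mapsto(w,z)$ and $w\mapsto(z,w)$ from $M$ to $M\times M$ are continuous, their components being the identity and a constant. (c) Each mixture line $\aa\mapsto x\aa y$, viewed as a map $[0,1]\to M$, is continuous: for every mixture-preserving $u$ we have $u(x\aa y)=\aa u(x)+(1-\aa)u(y)$, which is continuous in $\aa$, so continuity of $\aa\mapsto x\aa y$ follows from the universal property of the initial topology defining the weak topology.

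Given these, I would argue as follows. For $\ref{MR}\Rightarrow\ref{Cl}$: if $\UU$ multi-represents $\se$, then $\Graph=\bigcap_{u\in\UU}\{(x,y)\in M\times M: u(x)-u(y)\ge 0\}$, and each set in the intersection is the preimage of $[0,\infty)$ under the map $(x,y)\mapsto u(x)-u(y)$, which is continuous on $M\times M$ by (a); an intersection of closed sets is closed, so $\Graph$ is closed. For $\ref{Cl}\Rightarrow\ref{Con}$: for fixed $z$ the sets $\{w:w\se z\}$ and $\{w:z\se w\}$ are the preimages of $\Graph$ under the two insertion maps of (b), hence closed; this step uses nothing about mixtures and holds in any topological space. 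For $\ref{Con}\Rightarrow\ref{MC}$: fixing $x,y,z$, the sets $\{\aa\in[0,1]: x\aa y\se z\}$ and $\{\aa\in[0,1]: z\se x\aa y\}$ are the preimages, under the continuous mixture line of (c), of the two contour sets of $z$, which are closed by \ref{Con}; thus the Herstein--Milnor mixture-continuity sets are closed, and since $\se$ is a mixture preorder this is equivalent to \ref{MC}, as recorded just before the theorem.

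The substance lies in the two non-reversibilities, and here I would first pin down where counterexamples can possibly live. In countable (in particular finite) dimension, Theorem~\ref{t:count} gives $\ref{MC}\Rightarrow\ref{MR}$, so, combined with the chain just proved, all four conditions $\ref{MR},\ref{Cl},\ref{Con},\ref{MC}$ coincide; hence any preorder separating them must have uncountable dimension. To refute the third implication I would construct a mixture preorder of uncountable dimension satisfying \ref{MC} but having a contour set that is not weakly closed---most naturally by refining the uncountable-dimensional construction already used for Theorem~\ref{t:klee} (which yields \ref{MC} without \ref{MR}) so that some upper or lower set acquires a weak limit point it does not contain. To refute the second implication I would exhibit a mixture preorder that is \ref{Con} but whose graph is not closed; by the contrapositive of $\ref{MR}\Rightarrow\ref{Cl}$ such an example automatically also witnesses that \ref{Con} does not suffice for \ref{MR}.

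I expect the construction and verification of these two examples to be the main obstacle. The difficulty is that the easiest mixture preorders---the translation-invariant ones given by $x\se y\iff x-y\in K$ for a convex cone $K$ in the ambient vector space---cannot separate the conditions: for them $\{w:w\se z\}=z+K$ and $\Graph$ is the preimage of $K$ under the weakly continuous map $(x,y)\mapsto x-y$, so \ref{Con} and \ref{Cl} both reduce to weak closedness of $K$. The separating preorders must therefore be genuinely non-homogeneous, and establishing their properties requires controlling the weak topology generated by all mixture-preserving functionals well enough to locate exactly which limit points the contour sets, or the graph, fail to contain.
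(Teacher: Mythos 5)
Your proof of the chain $\ref{MR}\Rightarrow\ref{Cl}\Rightarrow\ref{Con}\Rightarrow\ref{MC}$ is correct and is essentially the paper's own argument (Lemma~\ref{l:top} plus the two facts that, in the weak topology, mixture-preserving functions and the maps $\aa\mapsto x\aa y$ are continuous). But the entire substance of the theorem is the two non-reversibility witnesses, and the proposal does not produce either of them; you say so yourself (``I expect the construction and verification of these two examples to be the main obstacle''). Moreover, your final paragraph contains a claim that, taken at face value, would block your own plan for the third implication: you assert that translation-invariant preorders $x\se y\iff x-y\in K$ ``cannot separate the conditions'' and that ``the separating preorders must therefore be genuinely non-homogeneous.'' That is true only for separating \ref{Con} from \ref{Cl} (and from \ref{MR}); it is false for separating \ref{MC} from \ref{Con}. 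Indeed the paper's counterexample to the third implication is exactly a homogeneous one, and no ``refinement'' of the Theorem~\ref{t:klee} construction is needed: take $M=V$ with the preorder whose positive cone is the Klee cone $K$ of Example~\ref{ex:klee} (algebraically closed but not weakly closed). Then \ref{MC} holds by Proposition~\ref{p:cone}\ref{p:cone:MC}, while the upper contour set of $0$ is literally $K$, which is not weakly closed, so \ref{Con} fails.

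The genuine gap is the counterexample to the second implication, which you correctly identify as the hard case but leave unconstructed. Your diagnosis of why it is hard is right --- on a vector space with a cone-induced preorder, \ref{Con} and \ref{Cl} coincide (this is the paper's Remark~\ref{rem:klee}) --- but getting past that obstruction requires a specific non-homogeneous construction on a proper convex subset, and nothing in the proposal indicates how to find one. The paper's Example~\ref{ex:klee2} takes $M=\Set{(v,w)}{v\in V^+,\ w\in V_v}\subset V\times V$, where $V_v$ is the \emph{finite-dimensional} span of $b_0$ and the basis vectors supporting $v$, with $(x,y)\se(v,w)$ iff $x=v$ and $y-w\in K\cap V_v$. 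The point of the construction is that each upper or lower contour set is a translate of $\{0\}\times(K\cap V_v)$, which is weakly closed because $K\cap V_v$ is an algebraically closed cone in a finite-dimensional space (Proposition~\ref{p:count}); so \ref{Con} holds. Yet $\Graph$ is not closed: writing $y_\aa=x_\aa+\ll_\aa b_0$ for a net $(y_\aa)$ in $K$ converging weakly to $b_0$, the net $(x_\aa,y_\aa;x_\aa,0)$ lies in $\Graph$ and converges to $(0,b_0;0,0)\notin\Graph$. Locating a single bad limit point of the graph while simultaneously keeping every contour set trapped in a finite-dimensional slice is the idea your proposal is missing.
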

\noindent 
As before, the displayed implications are easily proved and essentially well known; the novelty lies in the failures of reversibility. 
In particular, Theorem \ref{t:top} shows that \ref{Con} is still not sufficient for \ref{MR}. This is our strongest negative result;  
it is somewhat delicate because  \ref{Con}, unlike \ref{MC}, \emph{does} entail \ref{MR} when,
for example, 
$M$ is a vector space  (see Remark~\ref{rem:klee}). 
For us, it is an open question whether \ref{Cl} and \ref{MR} are equivalent. 
Of course, by Theorem \ref{t:CD},   all four conditions are equivalent when \ref{CDom} holds.

\begin{rem}\label{rem:weak}
A vector space $V$ is a mixture space, so, as we have defined it, the weak topology on $V$ is the coarsest one that makes every 
mixture-preserving function $V \to \RRR$ continuous. This is equivalent to the more standard definition of the weak topology on a vector space as the coarsest one that makes every linear functional on $V$ continuous, since a function on $V$ is mixture preserving if and only if it is affine (i.e. linear plus a constant).

In the vector space case, there are, of course, a variety of weak topologies, each induced by a given subspace of linear functionals. Similarly, there are a variety of weak topologies on mixture spaces, corresponding to subspaces of mixture-preserving functions. But unless otherwise stated, we will not be discussing other weak topologies, hence our use of the term {\em the} weak topology. Other basic features of the weak topology on a mixture space are noted in Lemma~\ref{l:MS} in appendix~\ref{s:aux proofs}.
\end{rem}

Following discussion of our axiom \ref{CDom} in section~\ref{s:CD}, section~\ref{s:discuss} presents proofs of the above results, while relegating technical work to appendix \ref{s:aux proofs}. Section~\ref{s:refine} refines the picture in two ways. First, if $(M,\se)$ is a preordered mixture space, we say that a function $f\colon M\to \RRR$ is {\em increasing} if $x\se y$ implies $f(x)\geq f(y)$, and \emph{strictly increasing} if, in addition, $x\s y$ implies $f(x)>f(y)$. A mixture-preserving multi-representation clearly consists of functions that are increasing, but they need not be strictly increasing. Section~\ref{s:strict} gives results concerning the existence of mixture-preserving multi-representations that contain only strictly increasing functions. Second, section~\ref{s:unique} provides a uniqueness result for mixture-preserving multi-representations that is essentially an abstract version of the uniqueness result given by \citet{DMO2004}.

\subsection{Related literature}\label{s:related lit}
In section~\ref{s:intro} we noted the wide variety of  types of mixture spaces, and forms of mixture-preserving multi-representations, that have been discussed. While it would be desirable to consider whether our abstract results have applications in all of those areas, that project lies well beyond the scope of this article. Instead, we will first discuss how our results improve on those of \citet{SB1998}, and then present one application: we explain how one of our results solves a problem left open by the influential work of \citet{DMO2004}.

Our basic objects of study are preorders on mixture spaces that satisfy \ref{SI} and \ref{MC}. It is common---and is done so specifically by Shapley and Baucells, and Dubra {\em et al}---to focus on a slightly different set of basic axioms; we refer to these as `independence' (\ref{Ind}), which is strictly weaker than \ref{SI},  and `weak continuity' (\ref{WCon}), which is strictly stronger than \ref{MC}. However,  our axioms \ref{SI} and \ref{MC} are together equivalent to their axioms \ref{Ind} and \ref{WCon}. This equivalence seems to have been known already by Shapley and Baucells (see their note 1), but since formal statements and proofs are hard to find, we provide details in appendix~\ref{s:WCon}. 
 For ease of comparison, we take the liberty of presenting their results in terms of our axioms and terminology.

Shapley and Baucells used a standard embedding theorem to show that any mixture preorder is naturally associated with an essentially unique convex cone. We explain this technique, which we will also use, in section~\ref{s:mspace}. 
They called a mixture preorder `proper' if its cone has a nonempty relative algebraic interior; see section~\ref{s:t:CD} for the definition. 
Their main result 
on mixture-preserving multi-representations showed that every proper mixture preorder that satisfies \ref{MC} also satisfies \ref{MR}. 
As Shapley and Baucells observed, properness holds automatically when the mixture space is finite-dimensional.  
Thus they effectively proved a weaker version of our Theorem~\ref{t:count}, in which `countable' is replaced by `finite'. More importantly, our Theorem~\ref{t:CD} strengthens their main result, as our axiom \ref{CDom} is much weaker than their assumption of properness. 
Indeed, properness is equivalent to a strengthening of \ref{CDom} that we call `singleton domination' (\ref{SDom}), to be introduced in section \ref{s:CD}.

The assumption of properness was criticized by \citet[p. 127]{DMO2004}: ``Unfortunately, it is not at all easy to see what sort of a primitive axiom on a preference relation would support such a technical requirement.'' 
Our axioms \ref{CDom} and \ref{SDom} are not subject to this kind of criticism.  They are formulated directly in terms of the preorder, and, as we explain in section \ref{s:CD}, they are members of a  natural family of axioms that place limits on the complexity of the preorder in terms of its Archimedean structure. The standard Archimedean axiom is 
a much stronger 
axiom of this type.

\citet{DMO2004} consider the mixture space $M=P(X)$ of Borel probability measures on a compact metric space $X$. Let $C(X)$ be the set of continuous functions $X \to \RRR$. 
They endow $P(X)$ with the narrow topology 
(or what Dubra {\em et al} call the topology of weak convergence): 
the coarsest topology such that all the functions $P(X)\to\RRR$, defined by integrating against functions in $C(X)$, are continuous.%
\footnote{When $X$ is finite, the narrow topology is equal to what we have called the weak topology; when $X$ is infinite, it is more coarse, i.e. contains fewer open sets, strengthening \ref{Con} and \ref{Cl}. 
As well as by \citeauthor{DMO2004}, this strengthened form of \ref{Cl} is used in the context of multi-representations by e.g. \citet{GMMS2003, OOR2012} and \citet{lG2017}.} 
Their expected multi-utility theorem shows that \ref{Cl} is enough to ensure 
that any mixture preorder on $M$ has a mixture-preserving multi-representation that consists of expectational functions: 
functions of the form $p \mapsto \int_X u \, \mathrm{d} p$ for some $u \in C(X)$.%
\footnote{Their result contains more detail than this. For discussion and further elaboration, see \citet{oE2008} and \citet{HOR2019}.}
They raise the question of whether this result would hold if \ref{Cl} was weakened to \ref{Con} or \ref{MC}, 
noting only that \ref{MC} is enough when $X$ is a finite set.%
\footnote{\label{n:SB DMO}This follows from the 
result about finite dimensionality due to \citet{SB1998} noted above, since every $P(X)$ with $X$ finite is a finite-dimensional mixture space (of dimension $|X|-1$). 
The \citeauthor{SB1998} result is slightly stronger though, as 
not every finite-dimensional mixture space is isomorphic to some $P(X)$. For example, $(0,1)$ is a one-dimensional mixture space but it is not isomorphic to $P(\{0,1\})\cong [0,1]$.}
Our Theorem \ref{t:klee} shows that \ref{Cl} {\em cannot} be weakened to \ref{MC} in their expected multi-utility theorem, since when $X$ is infinite, $P(X)$ has uncountable dimension. 
We do not know whether \ref{Cl} can be weakened to \ref{Con} in their result, but Theorem \ref{t:top} shows that there can be no general inference from \ref{Con} to \ref{Cl}. 

There is large body of literature on the general question of when a preorder on an arbitrary topological space has a continuous multi-representation (a condition we call \ref{CMR}). 
In requiring a mixing-structure, along with mixture-preserving multi-representations, the focus of this article has been different. In the general setting, it is well-known that being closed is not sufficient for \ref{CMR}. One source of counterexamples is a topological vector space (and hence mixture space): $L^p[0,1]$, with the usual norm, with $0 < p < 1$, which has no non-zero continuous linear functionals \citep[\S 1.47]{R3}. As far as we know, the strongest necessary condition for 
\ref{CMR} to hold is given by \citet{BH2016}, under the assumption that the topology is first countable. 
Bosi and Herden remark that they do not see any possibility for satisfactorily avoiding that assumption. Turning back to our setting, the weakest sufficient condition we have for \ref{MR} to hold is the conjunction of \ref{MC} and another type of countability condition, \ref{CDom}. 
Despite the fact that \ref{CDom} is clearly a long way from necessary for \ref{MR}, we likewise do not see a satisfactory strategy for weakening it.

\section{Countable domination}\label{s:CD}
We now discuss our axiom \ref{CDom}, and provide some examples. First, we show that it is a natural weakening of the well-known Archimedean axiom, and connect it with the idea of Archimedean classes. Second, we explain how it 
 weakens the assumption that $M$ has countable dimension.

\subsection{Countable domination as a weak Archimedean axiom}\label{s:weak Arch}
To better understand \ref{CDom}, we now introduce two more axioms that are in the same natural class. As we will explain, the axioms in this class can be interpreted as constraining the order complexity of mixture preorders. 

Given a preordered mixture space $(M, \se)$, 
let $\sGraph \subset \Graph$ consists of pairs $(x,y)$  with $x\s y$. 
Our first axiom is the following. 
\newlist{Ar}{enumerate}{1}
\setlist[Ar]{label=\textbf{Ar},ref={\text{Ar}}}
\begin{Ar}[align=parleft, leftmargin=!,itemsep=0pt,labelsep=14pt] 
        \item\label{Ar} 
        Every $(x,y) \in \sGraph$ 
        weakly dominates every $(s,t)\in \Graph$.  
\end{Ar} 
Recall the {\em Archimedean axiom}, stated by \citet{vNM1953}: if $x \s y$ and $y \s z$, then  $x\aa z \s y$  and $y \s x\bb z$ for some $\aa$ and $\bb$ in $(0,1)$. 
It is straightforward to show that for mixture preorders, \ref{Ar} is equivalent to the Archimedean axiom. 

Our second axiom  is notable because of its close connection to  the approach of \citet{SB1998}; see  section~\ref{s:related lit}. 
We call this apparently novel axiom {\em singleton domination}.
\newlist{SDom}{enumerate}{1}
\setlist[SDom]{label=\textbf{SD},ref={\text{SD}}}

\begin{SDom}[align=parleft, leftmargin=!,itemsep=0pt,labelsep=14pt] 
        \item\label{SDom}  There is some $(x,y)\in \Graph$ that weakly dominates every $(s,t) \in \Graph$.  
\end{SDom}

Both of these axioms are stronger than \ref{CDom}:
\[
	\ref{Ar} \implies \ref{SDom} \implies \ref{CDom}.
\]
The first implication is trivial when $\sGraph$ is nonempty. When it is empty, 
both \ref{Ar} and \ref{SDom} hold automatically, in the latter case because
every $(x,x)$ in $\Graph$ weakly dominates every $(s,t)$ in $\Graph$. 
For the second implication, notice that \ref{SDom} is the special case of \ref{CDom} when $D$ is a singleton. 
The implications, however, are irreversible, as 
shown by the next example. Further examples contrasting \ref{Ar}, \ref{SDom} and \ref{CDom} will be given below.

\begin{example}\label{ex:CDnotSD}
Let $S$ be a non-empty set, and 
$M=\RRR^S_0$, the vector space of finitely-supported functions $S\to\RRR$. As a vector space, it is 
also a mixture space. Define a  mixture  preorder on $M$ by 
\[f\se g \iff f(s)\geq g(s)\text{ for all }s\in S.\]
Then $\se$ satisfies \ref{MC}. It satisfies \ref{Ar} if and only if $|S|= 1$. It satisfies \ref{SDom} if and only if $|S|$ is finite; it satisfies \ref{CDom} if and only if $|S|$ is countable. 
To illustrate when $|S|$ is countable, 
define  $D=\Set{(1_A,0)}{A\subset S, A\text{ finite}}$, where $1_A\in M$ is the characteristic function of $A$. 
Then $D$ is a countable subset of $\Graph$, and each $(f,g)\in \Graph$ is weakly dominated by 
the element 
$(1_{\supp (f-g)},0)$ 
of $D$. 
\end{example}

The axioms \ref{Ar}, \ref{SDom}, and \ref{CDom} can also be reformulated in terms of `Archimedean classes', an idea usually developed in the context of ordered groups or vector spaces  \citep[see e.g.][]{HW1952}. 
In the present context of preordered mixture spaces, let us say two pairs 
$(x,y)$ and $(s,t)$ in $\Graph$ 
are in the same \emph{Archimedean class} if each weakly dominates the other (this is an equivalence relation, since weak domination is a preorder). 
Write $[(x,y)]$ for the Archimedean class of $(x,y)$, and let 
$\ArchCl$ be the set of Archimedean classes 
in $\Graph$. What we call the {\em Archimedean structure} of a mixture preorder $\se$ is the partially ordered set $(\ArchCl, \geq)$ where 
$[(x,y)] \geq [(s,t)]$ if and only if $(s,t)$ weakly dominates $(x,y)$.%
\footnote{The direction of the inequality may be surprising, but it is standard in the related literature on valuation theory, and may be thought of as saying that 
$(s,t)$ comes earlier in order of importance than $(x,y)$.
}
Note that $\ArchCl$ always contains a maximal element, the single Archimedean class consisting of all pairs $(x,y)$ with $x\e y$.
As the following easily proved equivalences show,  
\ref{Ar}, \ref{SDom}, and \ref{CDom} can all be seen as placing limits on the 
complexity of the Archimedean structure.  
\begin{enumerate}
[label=(\alph*)]
\item\label{p:val:Ar} $\se$ satisfies \ref{Ar} if and only if $(\ArchCl, \geq)$ has at most two elements. 
\item\label{p:val:SD} $\se$ satisfies \ref{SDom} if and only if  $(\ArchCl, \geq)$  contains a minimum element. 
\item\label{p:val:CD} $\se$ satisfies \ref{CDom} if and only if $(\ArchCl, \geq)$ contains a countable coinitial subset.%
\footnote{Recall that a subset $S'$ of a preordered set $(S, \se_S)$ is {\em coinitial} if and only if, for every $s\in S$, there exists $s'\in S'$ with 
$s \se_S s'$.}
\end{enumerate}
Specifically, if \ref{CDom} holds with respect to a countable  $D\subset \Graph$, then $\Set{[(x,y)]}{(x,y)\in D}$ is a countable coinitial subset of $\ArchCl$.

There are of course many other ways of limiting the complexity of Archimedean structures, but these are the ones of immediate interest.  Appendix~\ref{s:weak dom} provides more formal discussion of Archimedean 
structures; here we illustrate with some examples.

\begin{example}\label{ex:arch} In Example~\ref{ex:CDnotSD}, 
for $(f,g)$, $(h,k) \in \Graph$, 
$(f,g)$ weakly dominates $(h,k)$  if and only if  $\supp(f-g)\supset\supp(h-k)$.
Therefore $[(f,g)]\mapsto \supp(f-g)$ is an isomorphism between the Archimedean structure $(\ArchCl,\geq)$ and the set of finite subsets of $S$, partially ordered by $\subset$. 
The results of appendix~\ref{s:weak dom} yield a different description. Consider the convex cone of positive functions,%
\footnote{Convex cones are defined and discussed in section~\ref{s:mspace}.} 
$C = \Set{f \in \RRR^S_0}{f \geq 0}$. For each finite $A\subset S$, $F_A=\Set{f\in C}{\supp(f)\subset  A}$ is a face of $C$.
If $f\in C$,  then $F_{\supp(f)}$ is the smallest face containing $f$;  this shows that the faces of the form $F_A$, with $A$ finite, are what we call the \emph{regular} faces of $C$. Clearly $A\subset B\iff F_A\subset F_B$. So we conclude that $(\ArchCl,\geq)$ is isomorphic to the set of regular faces of $C$, partially ordered by $\subset$.   Proposition \ref{p:faces} generalizes this description.  
It also notes that $(\ArchCl, \geq)$ has at most one minimal element, corresponding to the largest (thus $\subset$-minimal) face $C$, if it is regular. 
In the present example, 
it has a minimal element only if 
$S$ is finite.  
\end{example}

The following example of a lexicographically ordered vector space makes the structure of 
$(\ArchCl, \geq)$ 
particularly clear (but \ref{MC} is not generally satisfied): 
\begin{example}\label{ex:LFS}
Let $(S,\geq)$ be an  
ordered set, and 
as in Example~\ref{ex:CDnotSD}, let $M=\RRR^S_0$ 
be the set of finitely supported functions $S \to \RRR$. 
For distinct $f$ and $g$ in $M$, 
let $s(f,g)=\min\Set{s\in S}{f(s)\neq g(s)}$.   
Define a mixture preorder on $M$ by
\[f\se g\iff \text{either }f=g,\text{ or }f(s(f,g))\geq g(s(f,g)).\]
Let $\ArchCls\subset\ArchCl$ be the set of Archimedean classes of \emph{strictly} positive pairs, i.e.~the $[(f,g)]$ with $f\s g$. It merely omits the maximal element of $\ArchCl$. 
One can then see that $[(f,g)]\mapsto s(f,g)$ is an isomorphism of ordered sets between $\ArchCls$ and $S$. Thus \ref{Ar} holds if and only if $|S|\leq 1$; \ref{SDom} holds if and only if $S$ contains a minimal element, e.g. if $S=\NNN$; and \ref{CDom} holds if and only if $S$ contains a countable coinitial subset, e.g.~if $S=\RRR$.
\end{example}

\begin{rem}
Most of our examples in this section concern vector spaces. However, this is only for simplicity. Indeed, if $(M,\se)$ is a preordered mixture space (a vector space or otherwise), and $M'$ is any mixture space of the same dimension, then there is a mixture preorder on $M'$ with the same Archimedean structure as $\se$, and which satisfies \ref{MC} or \ref{MR} if and only if $\se$ does. (This follows from Propositions  \ref{p:faces}\ref{p:faces:iso}  and \ref{p:cone} below.)
\end{rem}

\subsection{Countable domination and countable dimension}\label{s:count}
As already mentioned, \ref{CDom} strictly weakens the requirement that the dimension of $M$ be countable; 
we prove the following in appendix \ref{s:aux proofs}:

\begin{prop} \label{p:dim}
If a preordered mixture space has countable dimension, then it satisfies \ref{CDom}.  The converse does not hold, even for mixture preorders that satisfy \ref{MC}. 
\end{prop}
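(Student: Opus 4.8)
The statement has two parts: countable dimension implies \ref{CDom}, and the converse fails even under \ref{MC}. For the first and substantive part, the plan is to pass to the convex-cone picture. Using the embedding of section~\ref{s:mspace}, I would realize $M$ as a convex subset of a vector space, let $V_0$ be the linear span of the differences $\{x-y : x,y\in M\}$, and let $C$ be the convex cone generated by $\{x-y : x\se y\}$, so that (by the cone construction, Proposition~\ref{p:cone}) $x \se y \iff x-y \in C$ and $\dim V_0 = \dim M$. A direct computation, writing $x\aa t - y\aa s = \aa(x-y)-(1-\aa)(s-t)$, shows that $(x,y)$ weakly dominates $(s,t)$ exactly when $\mu(x-y)-(s-t) \in C$ for some $\mu > 0$. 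I would record this reformulation first, since it turns \ref{CDom} into a statement purely about the cone $C$.

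Two auxiliary facts then drive the argument. First, a \emph{realizability lemma}: every nonzero element of $C$ is a positive multiple of a genuine positive difference. Indeed, applying \ref{SI} twice at the midpoints gives $x_1\tfrac12 x_2 \se y_1\tfrac12 y_2$ whenever $x_1\se y_1$ and $x_2\se y_2$, so any finite positive combination $\sum_i \lambda_i(x_i-y_i)$ equals $\Lambda(x-y)$, where $\Lambda=\sum_i\lambda_i$, $x=\sum_i(\lambda_i/\Lambda)x_i \in M$, $y=\sum_i(\lambda_i/\Lambda)y_i \in M$, and $x\se y$. Second, since $\dim V_0$ is countable I would write $V_0=\bigcup_n W_n$ as an increasing union of finite-dimensional subspaces and set $C_n = C\cap W_n$. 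Each $C_n$ is a convex cone in a finite-dimensional space, so when $C_n\neq\{0\}$ it has a nonzero relative-interior point $r_n$; the definition of relative interior gives, for each $v\in C_n$, some $\mu>0$ with $\mu r_n - v \in C_n \subseteq C$. Combining this with the realizability lemma to write $r_n=\Lambda_n(x_n-y_n)$ with $x_n\se y_n$, and rescaling via the reformulation above, $(x_n,y_n)$ weakly dominates every $(s,t)\in\Graph$ whose difference $s-t$ lies in $W_n$ (the slices with $C_n=\{0\}$ contributing only trivially dominated pairs $(s,s)$). Since every positive difference lies in some $W_n$, the countable set $D=\{(x_n,y_n)\}_n$ witnesses \ref{CDom}.

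For the converse, I would exhibit an uncountable-dimensional mixture space satisfying both \ref{MC} and \ref{CDom}. Take any vector space $V$ of uncountable dimension and any nonzero linear functional $\phi\colon V\to\RRR$, and define $x\se y \iff \phi(x)\geq\phi(y)$. This is a mixture preorder with the singleton mixture-preserving multi-representation $\{\phi\}$, so it satisfies \ref{MR} and hence \ref{MC}; and a short computation shows that any pair $(x,y)$ with $\phi(x)>\phi(y)$ weakly dominates every element of $\Graph$, so \ref{SDom}, and a fortiori \ref{CDom}, holds, while $\dim V$ is uncountable.

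The main obstacle is the first part, and within it the requirement that the dominating elements lie in $\Graph$: the natural dominating vectors are relative-interior points of the finite-dimensional cone slices $C_n$, which are a priori only cone elements, not differences of points of $M$. The realizability lemma is exactly what removes this obstruction, and checking that a relative-interior point of $C_n$ dominates all of $C_n$ (not merely points near it) is the other step requiring care.
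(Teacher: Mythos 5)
Your proof is correct, but the construction of the countable dominating set is genuinely different from the paper's. Both arguments pass to an efficient embedding and use the cone reformulation of weak dominance (your displayed criterion is exactly the paper's Lemma~\ref{l:wdom}, and your ``realizability lemma'' is already built into the definition of the positive cone, equation~\eqref{eq:CseVdef}). From there the paper deduces the first claim in two lines from Corollary~\ref{c:CD dim}: $\Span(C)$ has countable dimension and is cofinal in itself, so \ref{CDom} holds; the underlying construction (in the proofs of Corollary~\ref{c:CD dim} and Proposition~\ref{p:misc}\ref{p:misc:CD}) takes a countable basis $b_i$ of the cofinal subspace, writes $b_i=x_i-y_i$ with $x_i,y_i\in C$, and uses non-negative integer combinations of the $x_i$ as a countable cofinal set. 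You instead exhaust $\Span(C)$ by finite-dimensional subspaces $W_n$ and take relative-interior points of the slices $C\cap W_n$; in effect you show that each finite-dimensional slice satisfies singleton domination (cf.\ Proposition~\ref{p:misc}\ref{p:misc:SD}, which characterizes \ref{SDom} by nonemptiness of the relative algebraic interior) and then take a countable union. Your route needs the fact that a nonempty convex set in finite dimensions has nonempty relative interior, whereas the paper's needs only the basis decomposition plus an Archimedean-style integer bound; both are elementary, and yours is arguably more geometrically transparent while the paper's is shorter given the machinery it has already set up. Two small points. First, your set $D=\{(x_n,y_n)\}_n$ is empty in the degenerate case $C=\{0\}$, and \ref{CDom} as stated requires every $(s,t)\in\Graph$ to be dominated by \emph{some} element of $D$, so you should throw in one diagonal pair $(z,z)$ to handle that case. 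Second, your counterexample to the converse (a complete preorder induced by a single nonzero linear functional on an uncountable-dimensional space) is valid but more degenerate than the paper's Examples~\ref{ex:combo} and~\ref{ex:SDnotAr}: it relies on massive indifference, which is exactly the phenomenon Example~\ref{ex:combo} isolates, whereas Example~\ref{ex:SDnotAr} shows the converse fails even without any such decomposition.
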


We first illustrate why the converse of Proposition~\ref{p:dim} fails, in particular 
for mixture preorders that satisfy \ref{MC}. 
One reason is that the dimension of a mixture space can always be increased by introducing extra dimensions of indifference or incomparability,  as the following example shows.

\begin{example}\label{ex:combo}
Let $(M_1,\se_1)$, $(M_2,\se_2)$, $(M_3,\se_3)$ be preordered mixture spaces.  Assume that $\se_2$ is complete indifference ($x\e_2 y$ for all $x,y\in M_2$), and $\se_3$ is complete incomparability ($x\se_3 y$ only 
if $x=y$ for $x,y \in M_3$).
Note that $\se_2$ and $\se_3$ both satisfy \ref{MC}. 
Define a preordered mixture space $(M,\se)$ by letting $M$ be the product $M = M_1\times M_2\times M_3$, 
with the mixture operation defined component-wise, 
and $\se$ be the product preorder. Thus in this case
\[
(x_1,x_2,x_3)\se (y_1,y_2,y_3)\iff x_1\se_1 y_1 \text{ and }x_3=y_3.
\]
It is easy to check that $\se$ satisfies \ref{MR}, \ref{MC}, or \ref{CDom}  if and only if $\se_1$ does, and that $\dim M = \dim M_1+\dim M_2+\dim M_3$. 
Suppose that $\se_1$ satisfies \ref{MC} and that $M_1$ has countable dimension. 
By Theorem~\ref{t:count} and Proposition~\ref{p:dim}, $\se_1$ will satisfy \ref{MR} and \ref{CDom}. 
Thus $\se$ will also satisfy \ref{MC}, \ref{MR} and \ref{CDom}, but $M$ may have arbitrarily high dimension.  
\end{example}

However, the following example shows that we can have 
\ref{MC} and \ref{CDom} (and hence \ref{MR}), 
and arbitrarily high dimension, even if there is no decomposition of the type just illustrated. 
\begin{example}\label{ex:SDnotAr}
Let $W$ be a nontrivial 
normed vector space, and $M=W\times \RRR$: as a vector space, $M$ is also a mixture space. 
Define a mixture preorder on $M$ by 
\[(v,a)\se (w,b) \iff |v-w|\leq a-b.\]
Then $\se$ satisfies 
\ref{MC}, \ref{SDom}, and hence \ref{CDom}, but not \ref{Ar}. We can take $D=\{(0,1;0,0)\}$. 
In this case, however, $\dim M=\dim W+1$, which can be arbitrarily large. There is no nontrivial indifference ($x\e y\implies x=y$). Although there is incomparability,  note that, for any $x,y\in M$, there is some $z\in M$ with $z\se x$ and $z\se y$. This would not be true if $M$ were a product of preordered mixture spaces with a nontrivial, completely incomparable factor. 

For a similar example in which \ref{MC} and \ref{CDom} hold, but \ref{SDom} (and hence \ref{Ar}) does not, take the $M$ just described and let $M'=M^\NNN_0$, the set of finitely supported functions $\NNN \to M$. Define a mixture preorder $\se'$ on $M'$ by $f \se' g$ if and only if $f(n) \se g(n)$ for all $n$.
In analogy to Example~\ref{ex:CDnotSD} (in the case of $S$ countably infinite), \ref{CDom} holds for $\se'$ with respect to  $D= \Set{(0,1_A;0,0)}{A \subset \NNN, A \text{ finite}}$.   
\end{example}


Turning to ways in which Proposition~\ref{p:dim} may be strengthened,  Example~\ref{ex:combo} may suggest the conjecture 
that \ref{CDom} holds if, for every $x\in M$, the mixture sets 
$\Set{y\in M}{y\se x}$ and
$\Set{y\in M}{x\se y}$ have countable dimension. However, Example \ref{ex:klee2} will provide a counterexample to this conjecture; in it, those sets even have finite dimension. 
Nevertheless, as we explain in Remark~\ref{rem:CD}, there is a precise sense in which \ref{CDom} is a dimensional restriction.

\section{Proofs of main results and discussion}\label{s:discuss}

\subsection{From mixture spaces to vector spaces}\label{s:mspace}
Our motivation for studying mixture spaces was given in the introduction. 
However, at a technical level, we will use a standard method to reduce questions about mixture spaces to equivalent, but mathematically more convenient, questions about 
vector spaces. In the context of multi-representations, this reduction was first used in \citet{SB1998}.\footnote{
Besides \citet{SB1998}, we refer the reader to \citet{pM2001} for a careful study of the embedding it relies on, and to a text such as \citet{eO2007} for the vectorial concepts.}

It follows from a standard embedding theorem%
\footnote{See \citet[\S 3]{mH1954}. A more general embedding theorem was given in \citet[Thm. 2]{mS1949}, but Hausner's result is easier to apply directly.} 
that any mixture space $M$ can be embedded in a (real) vector space $V$, in such a way that $V$ is the affine hull of $M$ (so $V=\Span(M-M)$), and the mixture operation on $M$ coincides with that on $V$: $x\aa y= \aa x+(1-\aa)y$. $M$ is, therefore, a convex subset of $V$,
and from this it is easy to show
\begin{equation}\label{eq:Vspan}
	V=\Set{\lambda(x-y)}{\lambda>0, \,x,y\in M}.
\end{equation}
We follow Shapley and Baucells in calling such an embedding {\em efficient}. Efficient embeddings are essentially unique: if $M\subset V$ and $M\subset V'$ are efficient embeddings, then there is a unique affine isomorphism $V\to V'$ that is the identity map on $M$.

Recall that a {\em linear preorder} $\seV$ on a vector space $V$ is a preorder on $V$ that is compatible with vector addition and positive scalar multiplication;  that is, $v \seV v' \iff \ll v + w \seV \ll v' +w$ for all $v,v',w\in V$ and $\ll >0$. Let $M \subset V$ be an efficient embedding. (Considering $V$ as a mixture space, a linear preorder is the same as a mixture preorder.)
As Shapley and Baucells explain,
  there are natural one-to-one correspondences between mixture preorders $\se$ on $M$, convex cones
$C\subset V$, and linear preorders $\seV$ on $V$, such that, for all $x,y\in M$,%
\footnote{Since terminology varies slightly: $C\subset V$ is a convex cone if and only if $C$ is 
nonempty, 
convex and $[0,\infty)C=C$. 
\label{n:SB1998}We note that although Shapley and Baucells start with axioms that are different from ours (see appendix~\ref{s:WCon}), they first derive \ref{SI} from their axioms, then use \ref{SI} to construct the correspondences we describe here.
The correspondence between $\se$ and $C$ is  stated in
their equations (11) and (12);  the well-known correspondence between $C$ and $\seV$ follows if we consider $V$ as a mixture space. 
} 
\begin{equation}\label{eq:mspace}
x \se y \iff x-y \in C \iff x \seV y.
\end{equation}	
This formula explicitly defines $\se$ in terms of $\seV$ or $C$, while the next formulae explicitly define $C$ in terms of $\se$, and $\seV$ in terms of $C$:
\begin{equation}\label{eq:CseVdef}
	C=\Set{\lambda(x-y)}{\lambda>0, \,x\se y} \qquad v \seV 0 \iff v \in C.
\end{equation}
We then call $C$ the {\em positive cone} of $\se$, 
and $\seV$ the {\em linear extension} of $\se$. 

Finally, mixture-preserving functions $u\colon M\to \RRR$
correspond one-to-one with affine functions $\tilde u\colon V\to \RRR$, in such a way that 
$\tilde u$ extends $u$, that is, 
$\tilde u\restr{M}=u$. 
Moreover, a set $\UU$ of mixture-preserving functions $M\to\RRR$ is a multi-representation of $\se$ if and only if  $\Set{\tilde u}{u\in \UU}$ is a mixture-preserving multi-representation of $\seV$.  It follows from \eqref{eq:CseVdef} that an equivalent condition in terms of $C$ is 

\begin{equation}\label{eq:intersection}
	C=\bigcap_{u\in\UU} \Set{v\in V}{\tilde u(v)\geq \tilde u(0)}.
\end{equation}



\subsection{Proofs of main results}\label{s:main proofs}
We now prove our main results in terms of a series of auxiliary results. We outline the ideas on which the auxiliary results are based, but unless otherwise stated, we defer their full proofs to appendix \ref{s:aux proofs}. Given the existence of efficient embeddings, 
our positive results mainly rely on standard extension and separation techniques in vector spaces. The proofs of the negative results are more striking, and we describe the counterexamples on which they are based.

\subsubsection{Preliminaries}\label{s:prelim}
Recall that a subset $S$ of a vector space  $V$ is {\em algebraically closed} 
 if $v \in S$ whenever 
  $(v,w]\subset S$. (In standard notation, $(v,w] =\Set{(1-\aa)v+\aa w}{ \aa\in (0,1]}$.) 
We say that $S\subset V$ is {\em weakly closed} in $V$ if it is closed in the weak topology on $V$ 
(see Remark \ref{rem:weak}). 
We prove the following proposition in appendix~\ref{s:aux proofs}.

\begin{prop}\label{p:cone}
\hspace{0pt}
Let $(M,\se)$ be a preordered mixture space, $M\subset V$ an efficient embedding, and $C\subset V$ the positive cone. 
 \begin{enumerate}
[nosep, label=(\roman*)]
\item\label{p:cone:dim} $\dim M$ equals the vector-space dimension of $V$.
\item\label{p:cone:MC} $\se$ satisfies \ref{MC} if and only if $C$ is algebraically closed.
\item\label{p:cone:MR}  $\se$ satisfies \ref{MR} if and only if $C$ is weakly closed in $V$.  
\end{enumerate}
\end{prop}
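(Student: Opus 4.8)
The three parts are largely independent and all run through the dictionary \eqref{eq:mspace}--\eqref{eq:intersection} between $\se$, the cone $C$, and the linear extension $\seV$. I would treat them in turn. For \ref{p:cone:dim}, the plan is to match the mixture-theoretic notions with affine ones inside $V$. First I would check that for any $A\subseteq M$ the smallest mixture subspace $M(A)$ is exactly the convex hull of $A$ computed in $V$ (it lies in $M$ because $M$ is convex): a mixture subspace is precisely a subset closed under binary convex combinations $x\aa y$, and these generate all finite convex combinations. Hence $A$ is mixture independent iff, for all disjoint nonempty $A_1,A_2\subseteq A$, the convex hulls of $A_1$ and $A_2$ are disjoint; by the standard convex-hull criterion for affine independence (whose finite case is Radon's theorem, applied to both directions), this is equivalent to $A$ being affinely independent in $V$. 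Since $\mathrm{aff}(M)=V$, a maximal affinely independent subset of $M$ is an affine basis of $V$, and every affine basis of a space of dimension $d$ has $d+1$ elements; thus $\dim M=|A|-1=\dim V$, which also shows $\dim M$ is well defined.

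For \ref{p:cone:MC}, recall that for mixture preorders \ref{MC} is equivalent to the Herstein--Milnor requirement that $\Set{\aa\in[0,1]}{x\aa y\se z}$ be closed for all $x,y,z$. A direct computation gives $x\aa y-z=(1-\aa)(y-z)+\aa(x-z)$, so as $\aa$ runs over $(0,1]$ the point $x\aa y-z$ traces the half-open segment $(y-z,\,x-z]$. The easy direction is ($\Leftarrow$): if $C$ is algebraically closed and $x\aa y\s z$ for all $\aa\in(0,1]$, then (strict implies weak) $(y-z,\,x-z]\subseteq C$, whence $y-z\in C$, i.e.~$y\se z$. For ($\Rightarrow$) I need the converse, namely to produce $v\in C$ from $(v,w]\subseteq C$ and \ref{MC}. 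The key device is a realization lemma: for any $v,w\in V$ there exist $z\in M$ and $t>0$ with $y:=z+tv\in M$ and $x:=z+tw\in M$. Granting it, $x\aa y-z=t\bigl((1-\aa)v+\aa w\bigr)\in C$ for all $\aa\in(0,1]$, since $C$ is a cone; so $\Set{\aa\in[0,1]}{x\aa y\se z}$ contains $(0,1]$, and being closed it contains $0$, giving $y\se z$, i.e.~$tv\in C$, i.e.~$v\in C$.

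The realization lemma is the main obstacle, because $M$ need not have nonempty relative algebraic interior in infinite dimensions, so one cannot simply move in the $v$- and $w$-directions from an arbitrary point of $M$. I would prove it by reduction to finite dimensions: using \eqref{eq:Vspan} and $\mathrm{aff}(M)=V$, write $v$ and $w$ as finite linear combinations of differences of points of $M$, and let $M_0$ be the convex hull of those finitely many points together with one extra base point. Then $M_0$ is a finite-dimensional polytope contained in $M$ whose affine direction space contains both $v$ and $w$, and any point $z$ in the (necessarily nonempty) relative interior of $M_0$ admits small moves $z+tv,\,z+tw\in M_0\subseteq M$, as required.

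For \ref{p:cone:MR}, I would read off the criterion \eqref{eq:intersection}: $\se$ satisfies \ref{MR} iff $C=\bigcap_{u\in\UU}\Set{v\in V}{\tilde u(v)\geq\tilde u(0)}$ for some nonempty family of affine $\tilde u$. Since $\tilde u(v)\geq\tilde u(0)$ is the same as $\phi(v)\geq0$ for the linear part $\phi$ of $\tilde u$, and since by Remark~\ref{rem:weak} the weak topology is $\ss(V,V^*)$ for the algebraic dual $V^*$ (so each such $\phi$ is weakly continuous), \ref{MR} says exactly that $C$ is an intersection of homogeneous weakly closed half-spaces. The direction ($\Rightarrow$) is then immediate, as any such intersection is weakly closed. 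For ($\Leftarrow$) I would invoke Hahn--Banach separation in the locally convex space $(V,\ss(V,V^*))$: for each $v_0\notin C$, a weakly continuous linear $\phi$ separates $v_0$ from the weakly closed convex set $C$, and, $C$ being a cone through $0$, $\phi$ may be normalized so that $\phi\geq0$ on $C$ while $\phi(v_0)<0$. Letting $\UU$ consist of the corresponding mixture-preserving functions recovers \eqref{eq:intersection}; the only edge case is $C=V$ (total indifference), handled by $\UU=\{0\}$. This last step is routine functional analysis, so the delicate point of the proposition remains the realization lemma underlying \ref{p:cone:MC}.
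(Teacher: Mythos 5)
Parts \ref{p:cone:dim} and \ref{p:cone:MR} of your proposal are correct and essentially coincide with the paper's proofs: the paper also reduces mixture independence to linear independence of $A-a_0$ via explicit Radon-type manipulations of convex combinations, and also derives \ref{p:cone:MR} from \eqref{eq:intersection} plus strong separation in the locally convex space $(V,\sigma(V,V^*))$. In part \ref{p:cone:MC} you correctly identify the realization lemma as the crux of the forward direction; the paper proves it by an explicit three-point construction ($x_0=\bb y+(1-\bb)t$, $x_1=\bb x+(1-\bb)t$, $x_2=\bb y+(1-\bb)s$ with $\bb=\ll/(\ll+\mm)$, using that \eqref{eq:Vspan} writes each of $v,w$ as a \emph{single} scaled difference), whereas your relative-interior-of-a-subpolytope argument is a valid alternative.

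There is, however, a genuine circularity in the forward direction of \ref{p:cone:MC}. Having arranged $x\aa y\se z$ for all $\aa\in(0,1]$, you conclude $y\se z$ by the closedness of $\Set{\aa\in[0,1]}{x\aa y\se z}$ at $\aa=0$. That closedness is the Herstein--Milnor axiom \ref{HM}, not \ref{MC}: the antecedent of \ref{MC} requires the \emph{strict} preference $x\aa y\s z$, and your construction only delivers the weak one. The equivalence of \ref{MC} and \ref{HM} for mixture preorders is established in this paper only in Remark~\ref{r:MC}, via Lemma~\ref{l:WCon}\ref{l:WCon:all}, whose proof explicitly invokes Proposition~\ref{p:cone}\ref{p:cone:MC}; so ``recalling'' it here assumes what is to be proved. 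The paper avoids this by a case split on the segment $(v,w]$: if every $z'\in(v,w]$ satisfies $z'\sV 0$, then $x\aa y\s z$ for all $\aa\in(0,1]$ and \ref{MC} applies verbatim; if instead some $z'\in(v,w]$ satisfies $z'\eV 0$, then $-z'\in C$, and with $z''=\tfrac12 v+\tfrac12 z'\in(v,w]\subset C$ one has $v=2z''-z'=z''+(z''-z')\in C$ by the cone property alone, with no continuity needed. (Alternatively one can prove the needed weak-antecedent version directly from \ref{SI}: if $x\aa_0 y\e z$ for some $\aa_0$, then mixing with $y$ gives $z\bb y\se x(\aa_0\bb)y\se z$ for $\bb\in(0,1)$, and cancelling $z$ via \ref{SI} yields $y\se z$.) Either patch closes the gap; as written, the step is unjustified.
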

\noindent Part~\ref{p:cone:dim} shows that our definition of the dimension of $M$ in section~\ref{s:main} is equivalent to a more standard characterisation \citep[see e.g.][]{pM2001}. 
Part~\ref{p:cone:MC} is  almost the same as \citet[Thm. 1.6]{SB1998}, but since our axioms are slightly different, we provide a proof. In fact, we will use \ref{p:cone:MC} to show that our axioms are equivalent to theirs, in Appendix \ref{s:WCon}. Part~\ref{p:cone:MR} is a routine application of the strong separating hyperplane theorem.

\subsubsection{Theorems~\ref{t:MC not suff} and~\ref{t:klee}}

\begin{proof}[{\bf Proof of Theorem~\ref{t:MC not suff}}]
The proof that \ref{MR} implies \ref{MC} is straightforward. Indeed, suppose that $\se$ has a mixture-preserving multi-representation $\UU$. Suppose given $x,y,z\in M$ such that $x\aa y\s z$ for all $\aa\in(0,1]$. Then, for any $u\in\UU$, $u(x\aa y) \geq u(z)$.  But $u(x\aa y)
=\aa u(x)+(1-\aa)u(y)$. 
In the limit $\aa\to 0$, we find $u(y)\geq u(z)$. Since this is true for all $u\in\UU$, we must have $y\se z$, as required for \ref{MC}.

The fact that the converse fails is immediate from Theorem~\ref{t:klee}, to which we now turn.
\end{proof}

The proof of Theorem \ref{t:klee} appeals to the following proposition, further discussed below.   

\begin{prop}\label{p:klee}
Let $V$ be a vector space of uncountable dimension. There exists a convex cone in $V$ that is algebraically closed but not weakly closed in $V$. 
\end{prop}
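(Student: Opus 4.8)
The plan is to reduce to a single uncountable dimension and then adapt the construction of \citet{vK1953}. First I would observe that it suffices to treat the case $\dim V=\aleph_1$. Indeed, if $\dim V$ is uncountable, split $V=V_1\oplus V_2$ with $\dim V_1=\aleph_1$, and suppose we have produced in $V_1$ a convex cone $C_1$ that is algebraically closed but not weakly closed, witnessed by a point $p\in V_1$ lying in the weak closure of $C_1$ but not in $C_1$. Let $\pi\colon V\to V_1$ be the projection along $V_2$ and set $C=\pi^{-1}(C_1)=C_1+V_2$. This is a convex cone, and it is algebraically closed because the preimage of an algebraically closed set under a linear map is algebraically closed; moreover $C\cap V_1=C_1$, so $p\notin C$. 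Finally, any linear functional $f$ nonnegative on $C$ must vanish on the subspace $V_2\subseteq C$ and hence restrict to a functional on $V_1$ nonnegative on $C_1$; since $p$ lies in the weak closure (equivalently, the bipolar) of $C_1$, we get $f(p)\geq 0$. Thus $p$ lies in the weak closure of $C$ but not in $C$, so $C$ is not weakly closed.

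It then remains to build the example in a space $V_1$ with $\dim V_1=\aleph_1$. The governing reformulation is that, for a convex cone, being weakly closed is equivalent to equalling its bipolar $C^{\ast\ast}=\Set{v}{f(v)\geq 0\text{ for every }f\text{ with }f|_C\geq 0}$; so I must exhibit a proper, algebraically closed convex cone $C$ together with a point $p\in C^{\ast\ast}\setminus C$. The cleanest instance is the extreme one of a \emph{weakly dense} proper cone, namely one whose only nonnegative linear functional is $0$ (so that $C^{\ast\ast}=V_1$): any proper algebraically closed cone with this property is automatically not weakly closed.

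The actual cone is where I would follow \citet{vK1953}. Fixing a basis $\Set{e_\alpha}{\alpha<\omega_1}$ of $V_1$, one builds by transfinite recursion a convex cone that is algebraically closed yet admits no nonzero nonnegative functional (equivalently, has a weak-limit point lying outside it). Verifying algebraic closedness is the routine half: a convex subset of a vector space is algebraically closed if and only if its intersection with every finite-dimensional subspace is closed, since in finite dimensions a convex set that is closed along every line is closed. Thus algebraic closedness is a purely local, finite-dimensional property, which one reads off directly from the defining inequalities of the cone.

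The main obstacle — and the whole reason for the hypothesis of uncountable dimension — is non-weak-closedness: one must show that \emph{no} nonzero linear functional is nonnegative on $C$ (equivalently, that the distinguished point cannot be separated from $C$ by any functional, so that the bipolar jumps up to a strictly larger set). This requires spreading the cone across all $\aleph_1$ coordinates and defeating the uncountably many candidate separating functionals $f\in\RRR^{\omega_1}$ simultaneously, an uncountable pigeonhole / pressing-down phenomenon with no countable counterpart. That this step must fail when $\dim V$ is countable is guaranteed in advance by Theorem~\ref{t:count} together with Proposition~\ref{p:cone}, which say that in countable dimension every algebraically closed cone is already weakly closed; so uncountability here is not merely convenient but indispensable.
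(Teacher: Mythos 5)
Your proposal correctly identifies the overall strategy (adapt the construction of \citet{vK1953}; detect the failure of weak closedness via separation and bipolars), and your reduction to the case $\dim V=\aleph_1$ is valid, though unnecessary --- the construction can be carried out verbatim over any uncountable basis. But there is a genuine gap: the existence of the cone, which is the entire content of the proposition, is never established. You write that ``one builds by transfinite recursion a convex cone that is algebraically closed yet admits no nonzero nonnegative functional,'' but no such recursion is specified, neither property is verified, and the target you set yourself is strictly stronger than what is needed: a \emph{weakly dense} proper algebraically closed cone (one whose only nonnegative functional is $0$) is not what the paper produces, and its existence is not obvious. Indeed, for the cone $K$ described below, the functional sending every basis vector to $1$ is nonzero and nonnegative on $K$; the paper only needs a single point of $\ov K\setminus K$, not density.

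For comparison, the paper's cone is explicit and involves no recursion: fix an uncountable basis $B$, pick $b_0\in B$, set $B_1=B\setminus\{b_0\}$, put $y_A=|A|^{-2}\sum_{b\in A}b$ for each finite nonempty $A\subset B_1$, and let $K=\cone\Set{y_A+b_0}{A\subset B_1 \text{ finite, nonempty}}$. Algebraic closedness holds because each slice $K\cap\Span(A\cup\{b_0\})$ is a finitely generated, hence algebraically closed, cone. Non-closedness is witnessed by $b_0\in\ov K\setminus K$, and the argument handles \emph{one} putative separating functional $f$ at a time: if $f(K)\subset[0,\infty)$ and $f(b_0)<0$, then since the uncountable set $B_1$ is not a countable union of finite sets, some $n$ admits infinitely many $b\in B_1$ with $f(b)<n$, whence $f(y_A+b_0)<f(b_0)+n/|A|<0$ for a suitably large finite $A$ of such elements --- a contradiction. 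There is no need to ``defeat the uncountably many candidate functionals simultaneously'' or to invoke any pressing-down phenomenon. As it stands, your proposal is a plan for a proof rather than a proof: the step it defers to Klee is precisely the step the proposition asks for.
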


\begin{proof}[{\bf Proof of Theorem~\ref{t:klee}}]
Let $M \subset V$ be an efficient embedding of a mixture space $M$ of uncountable dimension, so that, by Proposition~\ref{p:cone}\ref{p:cone:dim}, $V$ also has uncountable dimension. By Proposition~\ref{p:klee}, $V$ contains a convex cone that is algebraically closed but not weakly closed. Using \eqref{eq:mspace}, this cone defines a mixture preorder on $M$. By Proposition~\ref{p:cone} parts \ref{p:cone:MC} and \ref{p:cone:MR}, this mixture preorder satisfies \ref{MC} but not \ref{MR}.
\end{proof}

We prove Proposition~\ref{p:klee} in appendix \ref{s:aux proofs}. The proof 
relies on following example, 
which is based on \citet{vK1953}. 
Klee showed that if a vector space has uncountable dimension, then it contains a convex subset that is algebraically closed but not weakly closed 
(see \citet[pp. 194--195]{gK1969} for a discussion in more modern terminology). We modify Klee's construction
to obtain a convex cone with similar properties.  

\begin{example}\label{ex:klee}
Let $V$ be a vector space with an uncountable basis $B$. 
Endow $V$ with the weak topology. 
Given a subset $S$ of $V$, we write $\cone(S)$ for the convex cone in $V$ generated by $S$, that is, the smallest convex cone that contains $S$. 
It consists of all linear combinations of $S$ with non-negative coefficients. 
Choose $b_0\in B$, and let $B_1=B\setminus\{b_0\}$. For each finite, non-empty subset $A\subset B_1$, let $y_A = |A|^{-2}\sum_{b\in A} b$.   
Define a convex cone 
\[
	K = \cone{\Set{y_A + b_0}{A\subset B_1 \text{ is nonempty and finite}}}.
\]
The proof of Proposition~\ref{p:klee} shows that $K$ is algebraically closed but not 
closed. In fact, this generalizes slightly: 
the same argument, using separating hyperplanes, shows that $K$ is not closed with respect to \emph{any} locally convex topology on $V$. 
\end{example}

\subsubsection{Theorem~\ref{t:count}}
The proof rests on the following, which  
provides a converse to the result of Klee just mentioned.

\begin{prop}\label{p:count}
Let $V$ be a vector space of countable dimension.  Every convex  set 
in $V$ that is algebraically closed is weakly closed in $V$. 
\end{prop}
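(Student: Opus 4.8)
The plan is to reduce closedness to a strict separation statement and then build the separating functional by exhausting $V$ with finite-dimensional subspaces. Concretely, I would prove: for a convex, algebraically closed $S\subseteq V$ and any $p\notin S$, there is an affine functional $h\colon V\to\RRR$ with $h(p)<0\le h(s)$ for all $s\in S$. Since the weak topology is by definition the coarsest one making every affine (equivalently, mixture-preserving) functional continuous (Remark~\ref{rem:weak}), each half-space $\{h\ge 0\}$ is weakly closed; and because $h_p(p)<0$ excludes each external point $p$ from its own half-space while every such $h_p\ge 0$ on $S$, we get $S=\bigcap_{p\notin S}\{h_p\ge 0\}$, an intersection of weakly closed sets. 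So the whole Proposition comes down to producing one separating $h$ per external point. (Via Proposition~\ref{p:cone}\ref{p:cone:MC}--\ref{p:cone:MR}, this is exactly the input needed for Theorem~\ref{t:count}.)

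To construct $h$ I would use countability through an exhaustion. Fix a countable basis and let $V_1\subseteq V_2\subseteq\cdots$ be the associated increasing chain of finite-dimensional subspaces with $V=\bigcup_n V_n$ and $p\in V_1$. Each slice $S_n:=S\cap V_n$ is convex and algebraically closed in $V_n$; since algebraic closedness coincides with ordinary closedness for convex subsets of a finite-dimensional space, $S_n$ is closed, and as $p\notin S_n$ the finite-dimensional separation theorem yields an affine $h_n$ on $V_n$ with $h_n(p)=-1$ and $h_n\ge 0$ on $S_n$. Let $P_n$ be the (nonempty, closed, convex) set of all such functionals inside the finite-dimensional space of affine maps $V_n\to\RRR$. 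Restriction $h\mapsto h|_{V_n}$ gives maps $\rho_n\colon P_{n+1}\to P_n$, so the $P_n$ form an inverse sequence, and any coherent choice $h_n\in P_n$ with $\rho_n(h_{n+1})=h_n$ glues to a single affine functional $h$ on $V=\bigcup_n V_n$ satisfying $h(p)=-1<0$ and $h\ge 0$ on $S=\bigcup_n S_n$, as required.

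The entire difficulty is thus concentrated in showing that the inverse limit $\varprojlim_n P_n$ is nonempty, i.e.\ that the slice-separations can be chosen compatibly; this is where the index set being a \emph{sequence} is indispensable. When $\mathrm{dist}(p,S_n)$ is bounded below by some $d>0$ (a uniform separation margin), the argument is easy: choosing a Euclidean structure on each $V_n$ and normalizing to the compact slice $\{\,\|h_n\|\le 1,\ -h_n(p)\ge d\,\}$, the restriction maps carry nonempty compact convex sets to nonempty compact convex sets, and the inverse limit of such a tower of compacta is automatically nonempty. The genuinely hard case is $\mathrm{dist}(p,S_n)\to 0$: here $S$ need not be closed in any norm and the separating functional may be norm-discontinuous (as already occurs for a single half-space defined by an unbounded linear functional), so the $P_n$ can be unbounded and no compactness is available.

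I expect this last case to be the main obstacle, and I would handle it by a Mittag--Leffler argument on the inverse sequence: pass to the stabilized images $P_n^{\infty}:=\bigcap_{m\ge n}\rho^m_n(P_m)$, on which the connecting maps are surjective, and then build the coherent sequence recursively. The delicate point is that each $P_n^{\infty}$ must be shown nonempty, since a decreasing intersection of unbounded convex sets in a fixed $\RRR^{k}$ can be empty; one controls this using convexity of the images together with a recession-cone analysis, and it is exactly this control that fails in uncountable dimension. That failure is what Klee's construction exhibits (Proposition~\ref{p:klee}, Example~\ref{ex:klee}), where the cone is algebraically closed but not weakly closed, so the argument has the satisfying feature of isolating countable dimension as precisely the hypothesis that forces coherent separation.
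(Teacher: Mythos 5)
Your opening reduction is exactly the paper's: it suffices to produce, for each $p\notin S$, an affine functional $h$ with $h(p)<0\le h|_S$, since the resulting half-spaces are weakly closed and intersect to $S$. Your finite-dimensional ingredient (algebraically closed convex sets in finite dimensions are closed, so each slice $S_n$ can be strictly separated from $p$) is also the first step of the paper's argument. The divergence, and the problem, is in how the slice separations are assembled into a single functional on $V$. The paper does this by citing the Kakutani--Klee theorem that the finite topology on a countable-dimensional space is a locally convex vector topology: closedness of $S$ in the finite topology is immediate from the finite-dimensional step, and the strong separating hyperplane theorem for locally convex spaces then hands you $h$ in one stroke. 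You instead propose to build $h$ by hand as a point of $\varprojlim_n P_n$, and you correctly identify that the nonemptiness of this inverse limit is ``the entire difficulty'' --- but you do not prove it. The appeal to ``convexity of the images together with a recession-cone analysis'' is a placeholder, not an argument, and it sits exactly where all the content of the proposition lives: the restrictions of any global separating functional would populate every $P_n^{\infty}$, so nonemptiness of the stabilized images is essentially equivalent to the statement being proved.

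Concretely, two things are missing. First, $P_n^{\infty}=\bigcap_{m\ge n}\rho_n^m(P_m)$ is a decreasing intersection of nonempty convex subsets of a fixed finite-dimensional space, and such intersections can be empty (take $\RRR\times[m,\infty)$ in $\RRR^2$); the images $\rho_n^m(P_m)$ need not even be closed, and no compactness is available once $p$ can be approximated by $S_n$ arbitrarily well in whatever norms you place on the $V_n$. Second, even granting $P_n^{\infty}\neq\emptyset$ for all $n$, your claim that the connecting maps $P_{n+1}^{\infty}\to P_n^{\infty}$ are surjective is not automatic: for inverse sequences of sets this surjectivity follows from the Mittag--Leffler \emph{stabilization} condition (or from compactness), not from nonemptiness of the intersections alone, and without it the recursion building a coherent sequence does not close. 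A correct completion along your lines would in effect reprove Kakutani--Klee or K\"othe's lemma; as written, the proposal reformulates the problem at the point of difficulty rather than solving it. The fix is either to supply that analysis in full or to do what the paper does and route the separation through the local convexity of the finite topology in countable dimension.
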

This was proved using the algebraic version of the separating hyperplane theorem in  \citet[(3) on p. 194]{gK1969}. In appendix~\ref{s:aux proofs} 
we provide a slightly different proof: to apply the separating hyperplane theorem, we use a result of \citet{vK1953}, that in a vector space of countable dimension, the finite topology is locally convex.

\begin{proof}[{\bf Proof of Theorem~\ref{t:count}}] Suppose that \ref{MC} holds and that $M$ has countable dimension. Given an efficient embedding $M\subset V$, $V$ also has countable dimension, by Proposition~\ref{p:cone}\ref{p:cone:dim}. By Proposition \ref{p:cone}\ref{p:cone:MC}, the positive cone $C$ is algebraically closed, so, by Proposition \ref{p:count}, it is weakly closed. Therefore, by Proposition \ref{p:cone}\ref{p:cone:MR}, $\se$ satisfies \ref{MR}. 

For a counterexample to the converse implication, let $M$ be an uncountable-dimensional vector space with the preorder of complete indifference: $x\e y$ for all $x,y\in M$. This satisfies \ref{MR} despite having uncountable dimension. (Examples \ref{ex:combo}, \ref{ex:SDnotAr} 
and \ref{ex:cantor} 
provide less trivial examples.)
\end{proof}


\subsubsection{Theorem~\ref{t:CD}}\label{s:t:CD}
We first interpret \ref{CDom} and, for future reference, \ref{SDom}, in terms of the positive cone.  
For further discussion of Archimedean structure along similar lines, see appendix~\ref{s:weak dom}. 
 Let $V$ be a vector space with linear preorder $\seV$; let $C$ be any subset of $V$.  Recall that the {\em relative algebraic interior} of $C$ consists of those $v\in C$ with the following property: 
for every $w \in \aff(C)$, 
the affine hull of $C$, 
there is some $\ee >0$ such that $[v, v+\ee w] \subset S$. 
 
Recall also that a set $S$ is \emph{cofinal in $C$} (with respect to $\seV$) if $S\subset C$ and, for all $v\in C$, there is some $s\in S$ with $s \seV v$. 



\begin{prop}\label{p:misc}
Let $(M\se)$ be a preordered mixture space, $M\subset V$ an efficient embedding, $C$ the positive cone, and $\seV$ the  linear extension. 
\begin{enumerate}
[nosep, label =(\roman*)]
\item\label{p:misc:SD} $\se$ satisfies \ref{SDom} if and only if $C$ has a nonempty relative algebraic interior.
\item\label{p:misc:CD} $\se$ satisfies \ref{CDom} if and only if there is a countable set that is cofinal in $C$.
\end{enumerate}
\end{prop}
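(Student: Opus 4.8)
The plan is to prove both equivalences by directly translating the combinatorial notion of weak domination on $\Graph$ into cone-theoretic language via the efficient embedding, using the formula $C=\Set{\lambda(x-y)}{\lambda>0,\,x\se y}$ from \eqref{eq:CseVdef} and the defining equivalence $x\se y\iff x-y\in C$ from \eqref{eq:mspace}. The first step, common to both parts, is to record how weak domination looks in $V$. By definition $(x,y)$ weakly dominates $(s,t)$ means $x\aa t\se y\aa s$ for some $\aa\in(0,1)$, which unwinds to $\aa(x-y)+(1-\aa)(t-s)\in C$. Writing $v=x-y\in C$ and $w=s-t\in C$ (both are in $C$ since $x\se y$ and $s\se t$), this says $\aa v-(1-\aa)w\in C$, i.e. $v$ dominates $w$ iff $\ll v-w\in C$ for some $\ll>0$. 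So weak domination of pairs corresponds exactly to the relation ``$\ll v - w\in C$ for some $\ll>0$'' on elements of $C$, and since every element of $C$ is a positive multiple of some $x-y$ with $x\se y$, this relation descends to all of $C$ (scaling $v$ and $w$ by positive constants does not change the relation).

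For part \ref{p:misc:CD}, I would show that a countable $D\subset\Graph$ witnessing \ref{CDom} corresponds to a countable subset of $C$ that is cofinal with respect to the domination relation just described, and conversely. Concretely, from $D=\{(x_n,y_n)\}$ put $v_n=x_n-y_n\in C$; the countable set $\{v_n\}$ has the property that every $w\in C$ satisfies $\ll v_n-w\in C$ for some $n$ and some $\ll>0$. The remaining point is to identify this ``finite-multiple domination'' relation on $C$ with cofinality with respect to $\seV$. Since $\seV$ restricted to $C$ is given by $w\seV w'\iff w-w'\in C$, and $C$ is a cone closed under positive scaling, one checks that $\{v_n\}$ being cofinal in $C$ (there is $n$ with $v_n\seV w$, i.e. $v_n-w\in C$) is equivalent up to positive rescaling to the domination condition; replacing each $v_n$ by all its positive rational multiples keeps the set countable and bridges the gap between ``$v_n\seV w$'' and ``$\ll v_n - w\in C$ for some $\ll>0$''. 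This gives a countable cofinal set in $C$, and the converse is symmetric.

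For part \ref{p:misc:SD}, the key is to connect ``a single pair dominates everything'' with the relative algebraic interior being nonempty. A pair $(x,y)$ with $v=x-y$ weakly dominating every $(s,t)\in\Graph$ translates, by the first step, to: for every $w\in C$ there is $\ll>0$ with $\ll v-w\in C$. I claim this is precisely the statement that $v$ lies in the relative algebraic interior of $C$. Indeed, $v\in\rai(C)$ means that for every $u\in\aff(C)=\Span(C-C)$ there is $\ee>0$ with $v+\ee u\in C$; writing an arbitrary direction in $\aff(C)$ as a difference of elements of $C$ and using that $C$ is a cone, one shows the ``absorbs every point of $C$ from $v$'' condition matches the domination condition after clearing denominators. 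The main obstacle, and the step I would take most care over, is exactly this last identification: translating between the relative-algebraic-interior condition (phrased with directions in $\aff C$ and small $\ee$) and the domination condition (phrased with large multiples $\ll$ of a fixed $v$ absorbing points of $C$). The reconciliation uses that $C$ generates $\aff(C)$ as $C-C$ and that $C$ is scale-invariant, so ``$v$ can be pushed a little in every affine direction and stay in $C$'' is equivalent to ``a large multiple of $v$ dominates every point of $C$''; I would verify both inclusions explicitly, handling the degenerate subspace directions (where $u$ and $-u$ both point along $\aff C$) with the cone property. With that lemma in hand, part \ref{p:misc:SD} follows immediately.
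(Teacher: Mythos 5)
Your proposal is correct and follows essentially the same route as the paper: the inline translation of weak domination into the condition ``$\ll v-w\in C$ for some $\ll>0$'' is exactly the content of the paper's Lemma~\ref{l:wdom}, the passage between countable dominating sets and countable cofinal sets via positive (rational, versus the paper's integer) multiples matches the proof of part~\ref{p:misc:CD}, and the identification of ``one element dominates all of $C$'' with membership in $\rai(C)$ via $\aff(C)=C-C$ and the cone property is the paper's proof of part~\ref{p:misc:SD}. The only steps you flag as delicate are handled in the paper exactly as you sketch, so no changes are needed.
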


We will also use the following standard extension theorem, due to \citet{lK1937}. For a proof, see \citet[Thm. 1.36]{AT2007}. 
\begin{thm}[Kantorovich]\label{t:kantorovich}
Let $V$ be a vector space with a linear preorder $\seV$.  
Let $W$ be a linear subspace that is  cofinal in $V$. 
Then any increasing 
linear functional on $W$ extends to 
an increasing linear functional on $V$. 
\end{thm}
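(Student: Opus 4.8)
The plan is to recognize this as a Hahn--Banach-style extension carried out one dimension at a time, with Zorn's lemma handling the transfinite bookkeeping and the cofinality hypothesis supplying the crucial finiteness of the bounds at each step. First I would reformulate the conclusion in terms of the positive cone $C=\Set{v\in V}{v\seV 0}$: a linear functional $g$ on a subspace $U$ is increasing precisely when $g(v)\geq 0$ for every $v\in C\cap U$, since $u\seV u'$ is the same as $u-u'\in C$. So it suffices to extend $f$ to a linear functional on all of $V$ that is nonnegative on $C$.

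Next I would apply Zorn's lemma to the poset of pairs $(U,g)$, where $W\subseteq U\subseteq V$ is a linear subspace and $g$ is an increasing linear functional on $U$ extending $f$, ordered by extension. Unions of chains furnish upper bounds, so a maximal element $(U^*,g^*)$ exists, and the goal becomes showing $U^*=V$. If not, I pick $x\in V\setminus U^*$ and attempt to extend $g^*$ to $U^*+\RRR x$ by assigning a value $c=\tilde g(x)$.

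The heart of the argument is this single-step extension. For the extension to remain increasing I need $g^*(u)+\lambda c\geq 0$ whenever $u+\lambda x\in C$; splitting on the sign of $\lambda$ and using that $C$ is a cone, this reduces to choosing a real $c$ with
\[
\sup\Set{-g^*(a)}{a\in U^*,\ a+x\in C}\ \leq\ c\ \leq\ \inf\Set{g^*(b)}{b\in U^*,\ b-x\in C}.
\]
The inequality between the two bounds is automatic: if $a+x\in C$ and $b-x\in C$, then $(a+x)+(b-x)=a+b\in C$, because a convex cone is closed under addition, and since $a+b\in U^*$ this gives $g^*(a+b)\geq 0$, i.e. $-g^*(a)\leq g^*(b)$.

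The main obstacle --- and the only place the hypothesis is actually used --- is guaranteeing that a \emph{finite} real $c$ exists, i.e. that neither bound degenerates to $\pm\infty$. This is exactly what cofinality of $W$ delivers: applying it to $-x$ produces $w\in W\subseteq U^*$ with $w\seV -x$, so $w+x\in C$ and the left-hand set is nonempty; applying it to $x$ produces $w'\in W\subseteq U^*$ with $w'\seV x$, so $w'-x\in C$ and the right-hand set is nonempty. Given one element of each set, the cross inequality above bounds the supremum from above (by $g^*(w'-\text{witness})$) and the infimum from below, so both bounds are finite and the closed interval $[\sup,\inf]$ is nonempty; any $c$ in it works. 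The resulting proper extension contradicts maximality, forcing $U^*=V$ and completing the proof.
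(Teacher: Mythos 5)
Your proof is correct. One thing to be aware of: the paper does not actually prove Theorem~\ref{t:kantorovich} — it quotes it as a classical result of Kantorovich and refers the reader to Aliprantis and Tourky (2007, Thm.~1.36) — so there is no in-paper argument to compare against. Your argument is the standard self-contained one: Zorn's lemma reduces the problem to a single one-dimensional extension, which is Hahn--Banach with the dominating sublinear functional $p(v)=\inf\{f(w): w\in W,\ w\seV v\}$ left implicit in your sup/inf bounds. Each step checks out: increasingness of a linear functional on a subspace $U$ is correctly rephrased as nonnegativity on $C\cap U$ for the positive cone $C=\{v\in V : v\seV 0\}$; the compatibility $\sup\leq\inf$ follows because $a+x\in C$ and $b-x\in C$ give $a+b\in C$ (the positive cone of a linear preorder is closed under addition); and cofinality of $W$ — in the paper's sense that every $v\in V$ is dominated by some $w\in W$ — is used exactly where it must be, applied to $-x$ and to $x$ to make both constraint sets nonempty, whence the cross inequality makes both bounds finite and the interval of admissible $c$ nonempty. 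The textbook route packages the same estimates into one application of the Hahn--Banach theorem beneath $p$; your inlined version is equivalent and, if anything, more transparent about the role of cofinality.
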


\begin{proof}[{\bf Proof of Theorem \ref{t:CD}}]
We first give a counter-example to the reverse implication;  that is, we give an example 
of a mixture preorder that satisfies \ref{MR} (and therefore \ref{MC}) but not \ref{CDom}.  

\begin{example}\label{ex:cantor} Let $M=\RRR^\NNN$, the vector space of functions $\NNN \to \RRR$. Define a mixture preorder on $M$ by $f \se g \Leftrightarrow f(n) \geq g(n)$ for all $n \in \NNN$. This clearly satisfies \ref{MR} (the canonical projections $\RRR^\NNN\to \RRR$ provide a multi-representation), but it violates \ref{CDom}. 
{\em Proof:} In this case, the positive cone $C$ consists of the $f\in M$ with $f(n)\geq 0$ for all $n$. 
Suppose that \ref{CDom} holds; by  Proposition~\ref{p:misc}\ref{p:misc:CD}, there is a 
countable subset $\{f_1,f_2,\ldots\}$ cofinal in $C$. 
Let $f(k)=f_k(k)+1\in C$. Then for no $k$ is it true that $f_k\se f$, a contradiction. 
 \end{example}

Now let $(M,\se)$ be a preordered mixture space, satisfying \ref{MC} and \ref{CDom}; we have to show it satisfies \ref{MR}. Let $M \subset V$ be an efficient embedding, $C$ the positive cone, and $\seV$ the linear extension. For any subspace $W\subset V$ we let $\se_W$ be the restriction of $\seV$ to $W$, a linear preorder with positive cone $C_W=C\cap W$. 

By Proposition~\ref{p:misc}\ref{p:misc:CD}, 
there is a countable 
set 
$Z$ cofinal in $C$.
Given $w \in V\setminus C$, set $Z_w = \Span(Z \cup \{w\})$. 
By Proposition~\ref{p:cone}\ref{p:cone:MC}, $C$ is algebraically closed.  It follows that $C_{Z_w}$ is also algebraically closed. 
Since $Z_w$ has countable dimension, 
$C_{Z_w}$ is weakly closed in $Z_w$, 
by Proposition~\ref{p:count}. 
By the strong separating hyperplane theorem 
\citep[Cor. 5.84]{AB2006}, 
there is a linear functional $L'_w$ on $Z_w$ such that $L'_w(C_{Z_w}) \subset [0, \infty)$  and $L'_w(w) < 0$. 
Because $L'_w(C_{Z_w}) \subset [0, \infty)$,
$L'_w$ is  an increasing linear functional on $Z_w$. 

Let  $Y_w =\Span(C \cup \{w\})$. We claim that $Z_w$ 
is cofinal in $Y_w$. 
Indeed, let $y\in Y_w$. We can write it in the form $y=\ll w+ \sum_{c\in C} \ll_c c$, with $\ll,\ll_c\in\RRR$ and finitely many  $\ll_c$ being non-zero. 
Since $Z$ is cofinal in $C$,
we can find, for each $c\in C$, some $z_c\in Z$ with $z_c\seV c$. Since $c\seV 0$, it follows that $|\ll_c|z_c\seV \ll_c c$. Therefore $\ll w+\sum_{c\in C}|\ll_c| z_c\seV y$. Since the left-hand side of this formula is an element of $Z_w$, $Z_w$ is cofinal in $Y_w$. 

By Theorem~\ref{t:kantorovich}, $L'_w$ extends from $Z_w$ to 
an increasing linear functional $L''_w$ on $Y_w$. Arbitrarily extend $L''_w$ to a linear functional $L_w$ on $V$. 
By construction, $L_w(C)\subset [0,\infty)$ and $L_w(w)<0$. Therefore 
$C=\bigcap_{w\in{V\setminus C}}\Set{v\in V}{L_w(v)\geq 0}$. It follows  from \eqref{eq:intersection} that  $\UU = \Set{L_w\restr{M}}{w \in V\setminus C}$ is a mixture-preserving multi-representation of $\se$. 
\end{proof}

\begin{rem}\label{rem:CD}
The following variation on Proposition \ref{p:misc}\ref{p:misc:CD}, 
also proved in appendix~\ref{s:aux proofs}, 
explains the sense in which \ref{CDom} is a dimensional restriction, generalizing the countable dimensionality condition used in Theorem \ref{t:count}. 
\begin{cor}\label{c:CD dim}
Let $(M\se)$ be a preordered mixture space, $M\subset V$ an efficient embedding, $C$ the positive cone, and $\seV$ the  linear extension. 
Then $\se$ satisfies \ref{CDom} if and only if there is a subspace that is cofinal in $\Span C$ and that has countable dimension. 
\end{cor}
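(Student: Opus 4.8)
The goal is to prove Corollary~\ref{c:CD dim}, which characterizes \ref{CDom} as the existence of a countable-dimensional subspace cofinal in $\Span C$. By Proposition~\ref{p:misc}\ref{p:misc:CD}, \ref{CDom} is equivalent to the existence of a countable set $Z$ cofinal in $C$, so my plan is to show that this is in turn equivalent to the existence of a countable-dimensional cofinal subspace of $\Span C$. Note first that, by the definition \eqref{eq:CseVdef} of the positive cone and \eqref{eq:Vspan}, $\Span C = V$ whenever $C$ is the full positive cone; but since the corollary is stated for $\Span C$ rather than $V$, I should keep the argument internal to $\Span C$ and use that $\seV$ restricts to a linear preorder there with positive cone $C$ and that $C$ is cofinal in $\Span C$ with respect to this restriction.

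For the forward direction, suppose \ref{CDom} holds, so by Proposition~\ref{p:misc}\ref{p:misc:CD} there is a countable set $Z = \{z_1, z_2, \dots\}$ cofinal in $C$. The plan is to set $W = \Span Z$. This is a subspace of $\Span C$ of countable dimension (it is spanned by a countable set). I then claim $W$ is cofinal in $\Span C$. The key computation mirrors the cofinality argument already carried out in the proof of Theorem~\ref{t:CD}: given $v \in \Span C$, write $v = \sum_{c} \ll_c c$ as a finite linear combination of elements of $C$; for each such $c$ pick $z_c \in Z$ with $z_c \seV c$, whence $|\ll_c| z_c \seV \ll_c c$ (using $c \seV 0$ and linearity of $\seV$ under positive scaling and addition), and summing gives $\sum_c |\ll_c| z_c \seV v$ with the left side in $W$. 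So $W$ is cofinal in $\Span C$ and has countable dimension.

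For the converse, suppose $W \subset \Span C$ is a cofinal subspace of countable dimension. I want to produce a countable set cofinal in $C$ and invoke Proposition~\ref{p:misc}\ref{p:misc:CD}. The natural move is to take a countable basis $\{w_1, w_2, \dots\}$ of $W$ and use the rational-combination trick: the set $Z$ of all finite rational-coefficient linear combinations of the $w_i$ is countable, and it should be cofinal in $C$. The point is that cofinality of $W$ in $\Span C$ gives, for each $v \in C$, some $w \in W$ with $w \seV v$; I then need to approximate $w$ from above within $Z$, i.e.\ find a rational combination $z \in Z$ with $z \seV w$. This last step is the main obstacle: it requires something like continuity or an Archimedean-style property to pass from real to rational coefficients while preserving the inequality $\seV$. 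I expect the clean route is to instead intersect: since $W$ is cofinal in $\Span C$, the cone $C_W = C \cap W$ is cofinal in $C$, and $C_W$ lives in the countable-dimensional space $W$; I can then apply the forward direction's logic in reverse, or more directly observe that a countable set cofinal in $C_W$ (with respect to $\se_W$) is automatically cofinal in $C$. Producing a countable cofinal subset of $C_W$ inside a countable-dimensional space is the crux, and I would handle it by taking $Z$ to be a suitable countable dense-from-above family; the delicacy is ensuring the domination inequalities survive, which is where I would lean on the cofinality hypothesis for $W$ rather than on any topological density.
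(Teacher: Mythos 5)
Your forward direction is correct and is essentially the paper's argument: span the countable cofinal set $Z\subset C$ and verify cofinality of $\Span Z$ in $\Span C$ by dominating each coefficient separately (the paper does the same thing marginally more slickly, writing $v=x-y$ with $x,y\in C$ and noting that any $s\in Z$ with $s\seV x$ already satisfies $s\seV v$).

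The converse direction, however, has a genuine gap: you correctly diagnose that the rational-coefficient trick fails (there is no continuity to let you pass from real to rational coefficients while preserving $\seV$), and your reduction to ``$C_W=C\cap W$ is cofinal in $C$'' is valid, but it leaves you facing exactly the original problem --- produce a countable set cofinal in a convex cone sitting inside a countable-dimensional space --- and ``a suitable countable dense-from-above family'' is not an argument. The missing idea is to replace rational \emph{approximation} by integer \emph{domination}, which uses only the order structure. Concretely: take a countable basis $b_1,b_2,\ldots$ of the cofinal subspace $W$. Since $W\subset \Span C = C - C$ (as $C$ is a convex cone), each $b_i$ can be written as $x_i-y_i$ with $x_i,y_i\in C$; then $x_i\seV b_i$ and $y_i \seV -b_i$, so for any integer $\ll>|\ll_i|$ one has $\ll(x_i+y_i)\seV \ll_i b_i$ regardless of the sign of $\ll_i$. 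Hence the countable set of non-negative integer combinations of the $x_i$ and $y_i$ dominates every finite real combination $z=\sum_i\ll_i b_i$, and in particular dominates, for each $v\in C$, the element $z\in W$ with $z\seV v$ supplied by cofinality. That countable set is contained in $C$ and cofinal in it, and Proposition~\ref{p:misc}\ref{p:misc:CD} finishes the proof. Without this (or an equivalent) construction, your converse does not close; note also that you cannot instead appeal to Proposition~\ref{p:dim}, since the paper derives that proposition \emph{from} this corollary.
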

\noindent
To illustrate: in Example~\ref{ex:SDnotAr}, $\Span(C) = M$, which may have arbitrarily high dimension, but $\Span \{(0,1)\}$ is a one-dimensional cofinal subspace.  
\end{rem}

\subsubsection{Theorem~\ref{t:top}}\label{s:top}
We begin with a mostly well-known observation that 
generalizes some of the claims in 
Theorem~\ref{t:top}.  
Say that a preorder $\se$ on an arbitrary topological space $M$ has a {\em continuous multi-representation} if it satisfies

\newlist{CMR}{enumerate}{1}
\setlist[CMR]{label=\textbf{CMR},ref={\text{CMR}}}

\begin{CMR}[align=parleft, leftmargin=!,itemsep=0pt,labelsep=14pt] 
\item\label{CMR} There is a nonempty 
set $\UU$ of continuous 
functions $M\to\RRR$, such that for all $x,y\in M$, 
\[
	x \se y \iff u(x) \geq u(y) \text{ for all } u \in \UU.
\]
\end{CMR}

\begin{lemma}\label{l:top} Let $\se$ be a preorder on a topological space $M$.  Then
$
\ref{CMR} \implies \ref{Cl} \implies \ref{Con}. 
$
Moreover, suppose $M$ is a mixture space such that, for each $x, y \in M$, the map $f_{x,y}\colon [0,1]\to M$ given by $\aa\mapsto x\aa y$ is continuous. Then $\ref{Con} \implies \ref{MC}.$
\end{lemma}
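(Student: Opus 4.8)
The plan is to treat the three implications in turn, since each reduces to a standard continuity fact; the first two use only the topological structure, while the third invokes the continuity of the mixing paths $f_{x,y}$. For $\ref{CMR}\implies\ref{Cl}$, I would write the graph as an intersection of preimages: if $\UU$ is a multi-representation by continuous functions, then
\[
\Graph=\bigcap_{u\in\UU}\{(x,y)\in M\times M : u(x)-u(y)\geq 0\}.
\]
Each map $(x,y)\mapsto u(x)-u(y)$ is continuous on $M\times M$, being the difference of $u$ composed with the two coordinate projections, so each set in the intersection is the preimage of the closed half-line $[0,\infty)$ and hence closed; an arbitrary intersection of closed sets is closed, which is $\ref{Cl}$.

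Next, for $\ref{Cl}\implies\ref{Con}$, I would fix $x\in M$ and use the continuous slice maps $y\mapsto(y,x)$ and $y\mapsto(x,y)$ from $M$ into $M\times M$. The lower and upper sets $\{y:y\se x\}$ and $\{y:x\se y\}$ are exactly the preimages of $\Graph$ under these two maps, so they are closed whenever $\Graph$ is, giving $\ref{Con}$.

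Finally, for $\ref{Con}\implies\ref{MC}$ under the stated hypothesis, suppose $x\aa y\s z$, hence in particular $x\aa y\se z$, for all $\aa\in(0,1]$. I would consider the single lower set $U=\{w\in M:w\se z\}$, which is closed by $\ref{Con}$, and pull it back along the continuous path $f_{x,y}$. Then $f_{x,y}^{-1}(U)$ is closed in $[0,1]$ and contains $(0,1]$; since the closure of $(0,1]$ in $[0,1]$ is all of $[0,1]$, it must also contain $0$. As $f_{x,y}(0)=x\,0\,y=y$, this yields $y\in U$, i.e.\ $y\se z$, which is $\ref{MC}$.

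The only step with genuine content is this last one, where the informal ``limit $\aa\to 0$'' must be made rigorous. The points I would be careful about are: that only the weak relation $x\aa y\se z$ is used, so the strictness in the hypothesis of $\ref{MC}$ is irrelevant to its weak conclusion; that closedness of the single lower set $U$ (rather than both sets in $\ref{Con}$) suffices; and that it is precisely the continuity hypothesis on $f_{x,y}$, together with the boundary value $f_{x,y}(0)=y$, that upgrades the open-interval hypothesis to the closed-interval conclusion. I would confirm $x\,0\,y=y$ from the mixture-space axioms, or cite it as a standard identity, before invoking it.
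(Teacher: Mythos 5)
Your proof is correct and follows essentially the same route as the paper's: the graph as an intersection of preimages $\tilde u^{-1}([0,\infty))$ for \ref{CMR}$\implies$\ref{Cl}, the slice maps $y\mapsto(y,x)$ and $y\mapsto(x,y)$ for \ref{Cl}$\implies$\ref{Con}, and pulling back the closed lower set $\Set{w}{w\se z}$ along $f_{x,y}$ to capture $\aa=0$ for \ref{Con}$\implies$\ref{MC}. Your added care in checking the boundary identity $x\,0\,y=y$ (which the paper takes for granted) is a correct and harmless refinement.
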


The proof of Lemma \ref{l:top} is in appendix \ref{s:aux proofs}. Here we use it to deduce Theorem \ref{t:top}.

\begin{proof}[{\bf Proof of Theorem~\ref{t:top}}]
If $M$ is a mixture space with the weak topology, then every mixture-preserving function $M\to\RRR$ is continuous; therefore \ref{MR} implies \ref{CMR}. Moreover, for each $x,y\in M$, the map $f_{x,y}\colon [0,1]\to M$ given by $\aa\mapsto x\aa y$ is continuous.  The implications stated in Theorem~\ref{t:top} are therefore immediate from Lemma~\ref{l:top}. 

To show that the third implication in Theorem \ref{t:top} cannot be reversed, we need an example that satisfies \ref{MC} but not \ref{Con}. 
We again appeal to Example \ref{ex:klee}. We take $M=V$ and let $\se$ be the mixture preorder with positive cone $C=K$. 
Recall that  $K$ is algebraically closed but not weakly closed
(as shown in proving Proposition~\ref{p:klee}).  
By Proposition \ref{p:cone}\ref{p:cone:MC}, $\se$ satisfies \ref{MC}. 
Since $K = \Set{x \in M}{x \se 0}$, $\se$ violates \ref{Con}.

Finally, we need to show that \ref{Con} does not imply \ref{Cl}. We isolate this claim as the following proposition and prove it separately.  
\end{proof}

\begin{prop}\label{p:klee2}
There is preordered mixture space $(M, \se)$ such that $\se$ is continuous but not closed in the weak topology on $M$.
\end{prop}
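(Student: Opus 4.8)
The goal is to construct a preordered mixture space $(M,\se)$ whose preorder is continuous (has closed upper and lower sets) in the weak topology but whose graph $\Graph$ is not closed in the product topology on $M \times M$. Via the efficient embedding and Proposition~\ref{p:cone}, the mixture-space side translates into finding a vector space $V$ and a convex cone $C$ such that the sets $\Set{x}{x\seV 0}=C$ and $\Set{x}{0\seV x}=-C$ are both weakly closed, yet the graph $\Set{(x,y)}{x-y\in C}$ is not closed in $V\times V$. Since weak closedness of $C$ already gives that each individual lower and upper set $\{x : x\se y\}=y+C$ and $\{x:y\se x\}=y-C$ is closed (translation is a homeomorphism), the point is to engineer a cone that is weakly closed but whose graph fails to be closed. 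The plan is to build $C$ so that continuity holds for fixed $y$, while a sequence (or net) $(x_n,y_n)$ with $x_n-y_n\in C$ converges to a limit $(x,y)$ with $x-y\notin C$; the mismatch is possible precisely because weak closedness of $C$ does not control the difference $x_n-y_n$ when \emph{both} coordinates move.

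Concretely, I would again exploit the Klee-type construction of Example~\ref{ex:klee}. Recall that there $V$ has an uncountable basis, $b_0$ is a distinguished basis vector, the vectors $y_A=|A|^{-2}\sum_{b\in A}b$ are defined for finite nonempty $A\subset B_1=B\setminus\{b_0\}$, and the cone $K=\cone\Set{y_A+b_0}{A\subset B_1\text{ nonempty finite}}$ is algebraically closed but not weakly closed. The failure of weak closure is witnessed by a net of elements of $K$ approaching $b_0$ (or a multiple of it) in the weak topology, with $b_0\notin K$. I would take $C$ to be a cone that is itself weakly closed, but arrange that its \emph{boundary behaviour} reproduces this Klee phenomenon in the graph. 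One natural device is to set $M=V\times V$ (or to work with a larger mixture space) and to define the cone in such a way that the relevant limit involves cancellation between the two coordinates: the differences $x_n-y_n$ stay in a weakly closed cone, but the pairs $(x_n,y_n)$ drift so that the limit pair has difference equal to the "bad" vector $b_0$ escaping the cone.

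The cleanest route is probably to define the preorder directly and verify the two topological facts separately. I would let $C\subset V$ be the weakly closed convex cone generated by the Klee family \emph{together with} the ray through $b_0$, i.e.\ effectively $\cl(K)$ in the weak topology; this $C$ is weakly closed by construction, so by the translation argument the preorder $\se$ with positive cone $C$ is continuous (satisfies \ref{Con}). For the failure of \ref{Cl}, I would produce an explicit net $(x_\lambda,y_\lambda)$ in $\Graph$, built from the $y_A+b_0$, whose coordinates converge weakly to points $x,y$ with $x-y$ equal to a specific limiting vector that lies outside $C$ even though each $x_\lambda-y_\lambda\in C$. The key is that in the product (weak) topology one may let both coordinates run off in "independent" basis directions, so that the difference can converge to something the cone excludes; this is exactly the kind of separate-versus-joint continuity gap that distinguishes \ref{Con} from \ref{Cl}.

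The main obstacle, and the step requiring the most care, is the second one: exhibiting a net in the graph whose limit leaves the graph, while simultaneously guaranteeing that $C$ is weakly closed so that \ref{Con} genuinely holds. These two demands pull in opposite directions, since weak closedness of $C$ is precisely what makes the diagonal/fixed-coordinate lower sets closed, and one must verify that the joint failure is not inadvertently ruled out. I expect the resolution to hinge on choosing basis directions for the moving coordinates that are weakly "invisible" to the finitely many coordinate functionals defining each basic weak-open neighbourhood, so that weak convergence of $(x_\lambda,y_\lambda)$ is easy to check coordinatewise, while the algebraic difference accumulates the forbidden vector $b_0$ in the limit. I would verify weak closedness of $C$ by a separating-functional argument analogous to the one used for Proposition~\ref{p:klee}, and verify non-closedness of $\Graph$ by the explicit net, confirming that its weak limit $(x,y)$ satisfies $x-y\notin C$, hence $(x,y)\notin\Graph$.
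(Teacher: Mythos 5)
There is a genuine gap, and it lies at the heart of your plan. You propose to work in a vector space $V$ (or $V\times V$) and to find a convex cone $C$ that is weakly closed while the graph $\Set{(x,y)}{x-y\in C}$ fails to be closed. This is impossible. The subtraction map $\sigma\colon V\times V\to V$, $\sigma(x,y)=x-y$, is linear, and for every linear functional $f$ on $V$ the composite $f\circ\sigma=(x,y)\mapsto f(x)-f(y)$ is continuous in the product of the weak topologies; since by Lemma~\ref{l:MS}\ref{MS:prod} that product topology is the weak topology on $V\times V$, $\sigma$ is weakly continuous, so $C$ weakly closed forces $\Graph=\sigma\inv(C)$ to be closed. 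This is exactly the content of Remark~\ref{rem:klee}: when $M$ is a vector space, \ref{Con}, \ref{Cl} and \ref{MR} are all equivalent, so no counterexample of the kind you describe exists there. Your own worry that the ``two demands pull in opposite directions'' is the symptom: in the vector-space setting they are literally contradictory. A secondary problem is your suggestion to take $C$ to be the weak closure of the Klee cone $K$: that adjoins $b_0$ to the positive cone and thereby destroys the very pathology ($b_0\in\ov K\setminus K$) you intend to exploit.

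The missing idea is that $M$ must be a proper convex subset, not a vector space, chosen so that the upper and lower sets at each point are forced into \emph{finite-dimensional} slices. The paper's Example~\ref{ex:klee2} does this by taking $M=\Set{(v,w)}{v\in V^+,\ w\in V_v}\subset V\times V$, where $V_v$ is the span of $b_0$ together with the finitely many basis vectors appearing in $v$, and by using the positive cone $K'=\Set{(0,w)}{w\in K}$. Then the upper set of $(v,w)$ is $\{v\}\times\bigl(w+(K\cap V_v)\bigr)$; since $K\cap V_v$ is a finitely generated cone in a finite-dimensional space, it is closed (by the first step of the proof of Proposition~\ref{p:klee} together with Proposition~\ref{p:count}), which gives \ref{Con} even though $K'$ itself is not weakly closed. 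The graph then fails to be closed because a net $(x_\aa,y_\aa;x_\aa,0)$, with $y_\aa\in K$ converging weakly to $b_0$ and $x_\aa$ the part of $y_\aa$ supported on $B_1$, converges to $(0,b_0;0,0)\notin\Graph$: letting the first coordinate vary lets the finite-dimensionality constraint escape in the limit. Your proposal contains the correct raw materials (the Klee cone, a net approaching $b_0$, the intuition that both coordinates must move), but without the device of cutting $M$ down so that each fixed-point comparison is finite-dimensional, the continuity half of the argument cannot be made to hold alongside the failure of closedness.
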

\noindent
The proof of Proposition~\ref{p:klee2}, given in appendix \ref{s:aux proofs}, involves the following modification of Example~\ref{ex:klee}. 


\begin{example}\label{ex:klee2}
Let $V$, $B$, and $K$ be as in Example~\ref{ex:klee}. 
Let $V^+ = \cone (B)$. 
For any $v\in V^+$, let $A_v\subset B_1$ be the set of elements of $B_1$ with respect to which $v$ has strictly positive coefficients. 
Let $V_v=\Span(A_v\cup\{b_0\})$, and 
\[
M=\Set{(v,w)}{v\in V^+, w\in V_v}\subset V\times V.
\]
This $M$, it is easy to check, is a convex set. 
Let $\se$ be the mixture preorder on $M$ with the positive cone
$K'=\Set{(0,w)}{w\in K}\subset V\times V$. 
That is, for all  $(x,y), (v,w) \in M \times M$,
\begin{equation}\label{eq:klee2}
(x,y) \se (v,w) \iff x-v=0,\, y-w\in K\cap V_v.
\end{equation} 
Equip $M$ with the weak topology. 
The proof of Proposition~\ref{p:klee2} consists in the verification that $\se$ is continuous but not closed. 
\end{example}

\begin{rem}\label{rem:klee} 
 Let $(M, \se)$ be a preordered mixture space with the weak topology. 
As already noted, by Theorems~\ref{t:CD} and~\ref{t:top}, the conditions \ref{MR}, \ref{Cl}, \ref{Con} and \ref{MC} are equivalent when \ref{CDom} holds. In addition, when $M$ is a vector space, the conditions  \ref{MR}, \ref{Cl}, and \ref{Con} (but not \ref{MC}) are equivalent. 
To show the equivalence, it is sufficient, by Theorem~\ref{t:top},  to show that \ref{Con} entails \ref{MR}. 
Since $M$ is a vector space, $\se$ is a linear preorder, with corresponding positive cone $C = \Set{x \in M}{x \se 0}$. But \ref{Con} implies that $C$ is closed, implying \ref{MR} by Proposition~\ref{p:cone}\ref{p:cone:MR}. 
\end{rem}

\section{Strict multi-representation and uniqueness}\label{s:refine}
We now briefly discuss two standard topics concerning mixture-preserving multi-representations.

\subsection{Strict multi-representation}\label{s:strict}

The pioneering study of expected utility without the completeness axiom of \citet{rA1962} focussed on the existence of a single real-valued, strictly increasing, mixture-preserving function  (as defined in section~\ref{s:main});  see also \citet{pF1982}. But such  a function does not  fully characterize an incomplete preorder,  and interest turned to the existence of mixture-preserving multi-representations,  which do.
One can try to combine these approaches by considering mixture-preserving multi-representations that consist entirely of strictly increasing functions: 
\newlist{SMR}{enumerate}{1}
\setlist[SMR]{label=\textbf{SMR},ref={\text{SMR}}}
\begin{SMR}[align=parleft, leftmargin=!,itemsep=0pt,labelsep=14pt] 
\item\label{SMR} There is a nonempty 
set $\UU$ of \emph{strictly increasing}
mixture-preserving functions $M\to\RRR$, such that for all $x,y\in M$, 
\[
	x \se y \iff u(x) \geq u(y) \text{ for all } u \in \UU.
\]
\end{SMR}
The advantages of such `strict' multi-representations have been emphasized by \citet{oE2014} and  \citet{lG2017}, although Evren uses a notion of multi-representation that is 
different from ours. 
We now present two basic results about extending \ref{MR} to \ref{SMR}. Since our earlier results gave sufficient conditions for \ref{MR}, results giving sufficient conditions for \ref{SMR} are implied. 

 
First, we note that if a mixture preorder satisfies \ref{MR}, then solving Aumann's problem---that is, finding a strictly increasing mixture-preserving function---is enough to guarantee \ref{SMR} as well.   

\begin{prop}\label{p:strict1}
Let $(M,\se)$ be a preordered mixture space. Then $\se$ satisfies \ref{SMR} if and only if  it satisfies \ref{MR} and there exists a strictly increasing mixture-preserving function $M\to\RRR$.
\end{prop}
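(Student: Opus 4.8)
The plan is to treat the two directions separately, with essentially all the content in the converse. The forward implication is immediate: if a set $\UU$ witnesses \ref{SMR}, then by definition it is already a multi-representation of $\se$, so \ref{MR} holds, and since $\UU$ is nonempty, any member of it is a strictly increasing mixture-preserving function $M \to \RRR$.

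For the converse, suppose $\UU$ is a multi-representation witnessing \ref{MR}, and let $w \colon M \to \RRR$ be a strictly increasing mixture-preserving function. The idea is to perturb each member of $\UU$ by an arbitrarily small positive multiple of $w$; concretely, I would set
\[
\UU' = \Set{u + \epsilon w}{u \in \UU,\ \epsilon > 0},
\]
which is nonempty. Each $u + \epsilon w$ is mixture preserving, because any finite linear combination of mixture-preserving functions is mixture preserving. Each $u + \epsilon w$ is also strictly increasing: if $x \se y$ then $u(x) \ge u(y)$ and $w(x) \ge w(y)$, whence $(u + \epsilon w)(x) \ge (u + \epsilon w)(y)$; and if in addition $x \s y$, then $w(x) > w(y)$ while $u(x) \ge u(y)$, so, using $\epsilon > 0$, $(u + \epsilon w)(x) > (u + \epsilon w)(y)$.

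It then remains to verify that $\UU'$ is a multi-representation of $\se$, i.e.\ that $x \se y$ iff $v(x) \ge v(y)$ for every $v \in \UU'$. The ``only if'' direction is exactly the increasingness just established. The ``if'' direction is the single delicate point, and the reason a solitary shifted function $u + w$ does not suffice: suppose $(u + \epsilon w)(x) \ge (u + \epsilon w)(y)$ for all $u \in \UU$ and all $\epsilon > 0$. Fixing $u$ and letting $\epsilon \to 0^{+}$ in $(u(x) - u(y)) + \epsilon (w(x) - w(y)) \ge 0$ gives $u(x) \ge u(y)$; since this holds for every $u \in \UU$ and $\UU$ is a multi-representation of $\se$, we conclude $x \se y$. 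I expect the main obstacle to lie not in any hard analysis but precisely in the choice of $\UU'$: one must add $w$ with \emph{all} positive weights simultaneously, so that the limiting argument recovers the original inequalities $u(x) \ge u(y)$, while each individual perturbed function retains strict monotonicity.
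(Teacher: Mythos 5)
Your proof is correct and takes essentially the same approach as the paper's. The paper sets $\UU' = \Set{u' + nu}{n \in \NNN, u \in \UU}$ and lets $n \to \infty$ to recover $u(x) \geq u(y)$; up to a positive rescaling (take $\epsilon = 1/n$) this is exactly your family, and the limiting argument is the same.
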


The second result extends the 
picture given by 
Theorems \ref{t:klee} and \ref{t:count} to representations by strictly increasing functions.

\begin{prop}\label{p:strict2} Let $M$ be a mixture space.  
\begin{enumerate}[nosep, label= (\roman*)]
\item\label{p:strict2:count-a} If $\dim M$ is countable, any mixture preorder on $M$ that satisfies \ref{MR} also satisfies \ref{SMR}.    
\item\label{p:strict2:uncount-a}
If $\dim M$ is uncountable, there is a mixture preorder on $M$ that satisfies \ref{MR} but not \ref{SMR}.  
\end{enumerate}
\end{prop}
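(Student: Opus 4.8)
The plan is to prove the two parts of Proposition~\ref{p:strict2} by reducing everything to the vector-space picture of section~\ref{s:mspace} and then leveraging Proposition~\ref{p:strict1}. Throughout, fix an efficient embedding $M\subset V$, let $C$ be the positive cone, and recall that a mixture-preserving function $M\to\RRR$ corresponds to an affine (equivalently, after translation, linear) functional on $V$, and that a multi-representation of $\se$ corresponds to a family of increasing linear functionals whose common zero set cuts out $C$ via \eqref{eq:intersection}. The key extra datum for strictness is that a single increasing linear functional $u$ is \emph{strictly} increasing precisely when $u(v)>0$ for every $v$ in $C$ that is not a ``indifference'' direction, i.e.\ every $v\in C$ with $-v\notin C$; equivalently $u$ is strictly increasing iff $\ker u\cap C = C\cap(-C)$, the lineality space of $C$. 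So the whole problem is about finding a single increasing functional that is strictly positive off the lineality space.

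For part~\ref{p:strict2:count-a}, I would invoke Proposition~\ref{p:strict1}: since $\se$ already satisfies \ref{MR}, it suffices to produce one strictly increasing mixture-preserving function. The plan is to build it as a countable convex combination $u=\sum_n 2^{-n}\|u_n\|^{-1}u_n$ of the functionals $u_n$ coming from a multi-representation $\UU$. When $\dim M$ is countable I can choose a countable multi-representation $\{u_1,u_2,\dots\}$ (each $u_n$ increasing), and for any $v\in C$ with $-v\notin C$ there is \emph{some} $u_n$ with $u_n(v)>0$ (otherwise every member of $\UU$ vanishes on $v$, forcing $v\in C\cap(-C)$, contradiction). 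The countable sum then has $u(v)>0$ for every such $v$, so $u$ is strictly increasing; the main technical care is choosing normalizations $\|u_n\|$ so the series converges to a well-defined linear functional on the countable-dimensional space, which is routine since on a countable-dimensional space one can evaluate the series coordinatewise against a fixed basis and only finitely many terms are nonzero on each basis vector—so convergence is automatic and no topology is needed. By Proposition~\ref{p:strict1}, \ref{SMR} follows.

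For part~\ref{p:strict2:uncount-a}, the plan is to exhibit a mixture preorder satisfying \ref{MR} but admitting \emph{no} strictly increasing mixture-preserving function, which by Proposition~\ref{p:strict1} kills \ref{SMR}. I would take $M=V=\RRR^{(B)}$ (finitely supported functions on an uncountable set $B$, a vector space of dimension $|B|$) with the coordinatewise preorder $f\se g\iff f(b)\ge g(b)$ for all $b$, exactly as in Example~\ref{ex:CDnotSD}. This satisfies \ref{MR} via the coordinate projections. The positive cone $C=\{f:f\ge0\}$ has trivial lineality space, so a strictly increasing functional would have to be a linear functional $u$ with $u(f)>0$ for every nonzero $f\ge0$, in particular $u(\mathbf 1_{\{b\}})>0$ for every $b\in B$. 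The obstruction I want to exploit is: such a $u$ assigns a strictly positive real to each of uncountably many coordinates, which is fine for a \emph{linear} functional on the finitely-supported space, so I must instead use the uncountability to block strict \emph{positivity} more subtly—the correct counterexample is the product $M=V=\RRR^{B}$ (all functions) rather than the finitely supported one, where \ref{MR} still holds via projections but, as in Example~\ref{ex:cantor}, \ref{CDom} fails, and a strictly increasing linear functional would have to be supported on uncountably many coordinates, contradicting that linear functionals on a product are finite combinations of projections; hence no single functional is strictly positive on the uncountably many coordinate directions at once.

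The main obstacle will be part~\ref{p:strict2:uncount-a}: pinning down the exact mechanism by which uncountable dimension forbids a strictly increasing functional. The essential point is that a strictly increasing functional must be nonzero on \emph{every} extreme ray of $C$, and when there are uncountably many independent extreme rays, a single real-linear functional cannot be strictly positive on all of them if linear functionals are constrained to finite support (as on a product space, where $\Hom(\RRR^B,\RRR)$ relevant to the weak structure is spanned by finitely many projections). I would make this precise by showing that for the chosen $C$ any increasing functional vanishes on all but finitely many coordinate directions, so it vanishes on some $\mathbf 1_{\{b\}}\in C\setminus(-C)$, proving no increasing functional is strictly increasing; Proposition~\ref{p:strict1} then yields the failure of \ref{SMR} while \ref{MR} holds by construction. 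I expect the delicate step to be correctly identifying the space and the class of admissible functionals so that \ref{MR} genuinely holds yet strictness genuinely fails.
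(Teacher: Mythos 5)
Your part~\ref{p:strict2:count-a} is essentially the paper's argument: extract a countable multi-representation (this is Lemma~\ref{l:card}, which requires a Lindel\"of-type covering argument rather than being immediate), form a weighted countable sum of its elements, and invoke Proposition~\ref{p:strict1}. One slip: your justification for convergence---``only finitely many terms are nonzero on each basis vector''---is false, since a fixed basis vector $v_j$ can have $u_n(v_j)\neq 0$ for every $n$. What is true is that each $x$ has finitely many nonzero coordinates, and one must rescale the $u_n$ so that, say, $|u_n(v_j)|\leq 1$ for all $j\leq n$; this diagonal normalization (exactly what the paper does) makes the series absolutely convergent. This is a fixable slip, and otherwise (i) matches the paper.

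Part~\ref{p:strict2:uncount-a} has genuine gaps. First, the proposition is universally quantified: for \emph{every} mixture space $M$ of uncountable dimension you must produce a preorder on $M$ itself. Your candidate $\RRR^B$ is one specific space, and its linear dimension is $2^{|B|}$ (Erd\H{o}s--Kaplansky), so it cannot be transported by an isomorphism of efficient embeddings to a space of arbitrary uncountable dimension (e.g.\ $\aleph_\omega$ is never of the form $2^\lambda$). Second, your stated mechanism is wrong: algebraic linear functionals on $\RRR^B$ are \emph{not} finite combinations of projections---only the product-topology-continuous ones are---and \ref{SMR} concerns arbitrary mixture-preserving functions, so ``any increasing functional vanishes on all but finitely many coordinate directions'' is unsupported. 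The coordinatewise order on $\RRR^B$ does in fact admit no strictly positive functional, but for a different reason: if $L(e_n)=c_n>0$ along a countable subfamily, then $h$ with $h(n)=2^n/c_n$ dominates $\sum_{n\leq N}(2^n/c_n)e_n$ for every $N$, forcing $L(h)\geq 2^{N+1}-2$ for all $N$, which is impossible. Note this argument needs elements of infinite support, so it does not survive restriction to a subspace of prescribed smaller dimension; your own first attempt $\RRR^{(B)}$ shows the ``uncountable antichain of positive rays'' obstruction is illusory. The paper sidesteps all of this with Example~\ref{ex:no strict}: on an arbitrary $M$ of uncountable dimension, pick a basis $\{v_\aa\}_{\aa<\kappa}\subset M$ and let $\se$ be represented by the nested functionals $u_\aa=\sum_{\bb\leq\aa}\pi_\bb$; then \ref{MR} holds by construction, while $v_\aa\s v_\bb$ for $\aa<\bb$ yields an uncountable well-ordered chain that no strictly increasing real-valued function can accommodate (it would create uncountably many pairwise disjoint nonempty open intervals in $\RRR$). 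You should replace your product-space argument with a construction of this order-theoretic kind.
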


In common with our earlier results,  these results show a sharp difference between the cases of countable and uncountable dimension. Theorem~\ref{t:count} and Proposition~\ref{p:strict2} together show that, when $\dim M$ is countable, \ref{MC} is equivalent to \ref{SMR}. But when $\dim M$ is uncountable, \ref{MC} is not sufficient even for \ref{MR}; and even if \ref{MR} is satisfied, \ref{SMR} may not be.


The proof of Proposition~\ref{p:strict1} is very simple. For Proposition~\ref{p:strict2}, the main idea of the proof 
of \ref{p:strict2:count-a}
is that countable dimension enables us to focus on multi-representations with countably many elements, as the following lemma shows. Such a countable multi-representation can be used to construct a strictly increasing function, and Proposition \ref{p:strict1} applies.


\begin{lemma}\label{l:card}
Let $(M,\se)$ be a preordered mixture space. If $\se$ has a mixture-preserving multi-representation $\UU$, then it has a mixture-preserving multi-representation $\UU' \subset \UU$ such that $| \UU'| \leq \max(\aleph_0, \dim M)$.
\end{lemma}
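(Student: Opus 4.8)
The plan is to pass to an efficient embedding $M\subset V$ and exploit the fact that the space of affine functionals on $V$, equipped with the topology of pointwise convergence, has \emph{weight} at most $\kappa:=\max(\aleph_0,\dim M)$; a dense subfamily of the given multi-representation will then do the job. Set $\kappa=\max(\aleph_0,\dim M)$ throughout.

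First I would record the setup. Let $M\subset V$ be an efficient embedding with positive cone $C$, and write each $u\in\UU$ as its affine extension $\tilde u\colon V\to\RRR$. By \eqref{eq:intersection}, a subset $\UU'\subseteq\UU$ is a mixture-preserving multi-representation of $\se$ exactly when $C=\bigcap_{u\in\UU'}\Set{v\in V}{\tilde u(v)\geq\tilde u(0)}$. Since $C=\bigcap_{u\in\UU}\Set{v\in V}{\tilde u(v)\geq\tilde u(0)}$ and passing to a smaller index set only enlarges the intersection, it suffices to find $\UU'\subseteq\UU$ with $|\UU'|\leq\kappa$ such that, for every $w\in V$, the conditions $\tilde u(w)\geq\tilde u(0)$ for all $u\in\UU'$ already force $\tilde u(w)\geq\tilde u(0)$ for all $u\in\UU$.

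The key step is a weight estimate. Let $\mathcal A$ be the set of all affine functionals $V\to\RRR$, topologized by pointwise convergence on $V$. Fix a Hamel basis $\{e_\beta\}_{\beta<\dim V}$ of $V$. An affine functional is determined by its values at $0$ and at the $e_\beta$, and every evaluation $\tilde u\mapsto\tilde u(v)$ depends continuously on finitely many of these coordinates; hence the sets $\Set{\tilde u}{p<\tilde u(e_\beta)<q}$ and $\Set{\tilde u}{p<\tilde u(0)<q}$, with $\beta<\dim V$ and $p<q$ rational, form a subbase for the topology of $\mathcal A$. By Proposition~\ref{p:cone}\ref{p:cone:dim} we have $\dim V=\dim M$, so there are at most $\kappa$ finite intersections of subbasic sets, i.e.\ $\mathcal A$ has a base of cardinality at most $\kappa$. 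Since weight bounds hereditary density ($d(Y)\leq w(Y)\leq w(\mathcal A)$ for any subspace $Y$), the subset $S=\Set{\tilde u}{u\in\UU}\subseteq\mathcal A$ admits a dense subset of cardinality at most $\kappa$. Choosing one $u\in\UU$ for each of its elements yields $\UU'\subseteq\UU$ with $|\UU'|\leq\kappa$ and $\Set{\tilde u}{u\in\UU'}$ dense in $S$. To finish, fix $w\in V$ with $\tilde u(w)\geq\tilde u(0)$ for all $u\in\UU'$ and let $u_1\in\UU$ be arbitrary; by density, for each $\varepsilon>0$ there is $u'\in\UU'$ with $|\tilde{u'}(w)-\tilde{u_1}(w)|<\varepsilon$ and $|\tilde{u'}(0)-\tilde{u_1}(0)|<\varepsilon$, and since $\tilde{u'}(w)\geq\tilde{u'}(0)$ we get $\tilde{u_1}(w)\geq\tilde{u_1}(0)-2\varepsilon$, whence $\tilde{u_1}(w)\geq\tilde{u_1}(0)$ as $\varepsilon\to0$. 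Thus the two intersections agree and $\UU'$ is a multi-representation.

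The main obstacle is precisely the weight/density estimate for $\mathcal A$. The naive subbase indexed by all $v\in V$ has cardinality $|V|$, which can exceed $\kappa$ (e.g.\ when $\dim M<\mathfrak c$), so it is essential to cut down to a Hamel basis and observe that pointwise convergence on all of $V$ coincides with pointwise convergence on the basis together with the constant term. (Equivalently, one could work directly in $M$ using a maximal mixture-independent subset $A$ of size $\dim M+1$, on which every mixture-preserving function is determined; I prefer the embedding for brevity.) Once the weight bound is in hand, the embedding, the criterion \eqref{eq:intersection}, and the pointwise-closure argument are all routine, the only nontrivial point being that the inequality $\tilde u(w)\geq\tilde u(0)$ is preserved under pointwise limits because evaluation at $w$ and at $0$ are continuous on $\mathcal A$.
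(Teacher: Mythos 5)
Your proof is correct, but it takes a genuinely different route from the paper's. The paper works inside $V$: by \eqref{eq:intersection} the open half-spaces $A_u=\Set{v\in V}{\tilde u(v)<\tilde u(0)}$ cover $V\setminus C$, and it extracts a subcover of size $\max(\aleph_0,\dim M)$ by applying Lindel\"of's lemma on each finite-dimensional subspace $\Span P$ (for $P$ a finite subset of a basis) and taking the union over the $\kappa$-many such $P$. You instead work in the space of affine functionals: you bound the weight of $\mathcal A$ under pointwise convergence by $\kappa$ via the coordinates $\tilde u(0),\tilde u(e_\beta)$, take a dense subfamily of $\Set{\tilde u}{u\in\UU}$ of size at most $\kappa$, and observe that the closed conditions $\tilde u(w)\geq\tilde u(0)$ pass to pointwise limits. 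Both hinge on the same cardinality bound $\dim V=\dim M$ from Proposition~\ref{p:cone}, and in a sense they are dual: a small subcover of $V\setminus C$ by half-spaces versus a small pointwise-dense set of functionals. Your version avoids the paper's case split between finite and infinite dimension and the two-stage covering argument, at the cost of invoking the weight/density machinery for product topologies (including the observation, which you rightly flag, that the naive subbase indexed by all of $V$ is too large and must be cut down to a Hamel basis). It also yields the slightly stronger structural fact that $\UU'$ can be taken pointwise-dense in $\UU$, which sits naturally alongside the closure operation appearing in the uniqueness result (Proposition~\ref{p:unique}). The paper's argument is marginally more elementary, resting only on Lindel\"of's lemma for Euclidean spaces.
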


The proof 
of Proposition \ref{p:strict2}\ref{p:strict2:uncount-a}
rests on the following example. 

\begin{example}\label{ex:no strict}
Assume that $\dim M$ is uncountable. Let $M\subset V$ be an efficient embedding, so $\dim V$ is uncountable. 
For some uncountable ordinal $\kappa$, we can choose a basis $\Set{v_\aa}{\aa < \kappa} \subset M$ for $V$ indexed by ordinals $\aa$ smaller than $\kappa$. 
For each $\bb<\kappa$, let $\pi_\bb$ be the unique linear functional on $V$ such that $\pi_\bb(v_\aa) = 
1$ if $\aa=\bb$ and $\pi_\bb(v_\aa)=0$ otherwise. 
For each $\aa < \kappa$, define a 
mixture-preserving function $u_\aa$ on $M$ by $u_\aa(x) = \sum_{\bb \leq \aa}\pi_\bb(x)$. 
This is well-defined, since for each $x$ in $V$, and hence $M$, 
$\pi_\bb(x)$ is nonzero for only finitely many $\bb$. 
Let $\UU = \Set{u_\aa}{\aa<\kappa}$, and let $\se$ be the mixture preorder on $M$ that it represents. 
The proof of Proposition~\ref{p:strict2}\ref{p:strict2:uncount-a} shows that $\se$ does not have a strictly increasing  function, mixture-preserving or otherwise. 
\end{example}



\subsubsection{Related literature}\label{s:strict related lit}

Suppose that $M$ is a preordered mixture space of uncountable dimension. \citet{rA1962} showed that the continuity condition \ref{Au} (see  note~\ref{n:aumann}), which is weaker than \ref{MC}, is not sufficient for the existence of a strictly increasing, mixture-preserving function.   
Propositions \ref{p:strict1} and \ref{p:strict2} together strengthen this result: the existence of a mixture-preserving multi-representation
(a condition stronger than \ref{MC}, and also stronger than \ref{Con} for the weak topology) is not sufficient either.

As we discussed in section~\ref{s:related lit}, \citet{DMO2004} consider mixture preorders on the set of probability measures on a compact metric space, and assume \ref{Cl} with respect to the narrow topology. Besides proving the existence of a mixture-preserving, and indeed expectational, multi-representation, they also prove in their Proposition 3 the existence of a strictly increasing expectational function. 
\citet{lG2017} uses this to prove the existence of a multi-representation by strictly increasing expectational functions. Our proof of Proposition \ref{p:strict1} is based on a similar technique.

\citet{oE2014} also considers probability measures on a compact metric space. He does not focus on multi-representations in our sense, but nonetheless gives conditions under which a preorder can be represented  by a set of strictly increasing functions 
in a different sense, which may have some advantages. We note that Evren's approach is essentially incompatible with ours (and with the one of \citeauthor{DMO2004}), insofar as his main continuity axiom, `open-continuity,' rarely holds when \ref{MC} does: a mixture preorder that satisfies both is either complete or symmetric.  

\subsection{Uniqueness}\label{s:unique}



Finally, we give a uniqueness result for mixture-preserving multi-representations. It is very similar to the uniqueness theorem of \citet{DMO2004}, but worked out in our setting of abstract mixture spaces.

Given a mixture space $M$, we let $M^*$ be the vector space of all 
real-valued mixture-preserving functions on $M$. Let $\CC \subset M^*$ be the subspace of constant functions. 
We give $M^*$ the topology of pointwise convergence: 
the coarsest topology such that for each $x \in M$, the function $M^* \to \RRR$ given by $f \mapsto f(x)$ is continuous. 
We write $\ov S$ for the closure of a subset $S$ of $M^*$. 

\begin{prop}\label{p:unique} Let $M$ be a mixture space. Two nonempty sets $\UU, \UU' \subset M^*$ represent the same preorder on $M$ 
if and only if $\ov{\cone{(\UU \cup \CC})} = \ov{\cone{(\UU' \cup \CC)}}$. 
\end{prop}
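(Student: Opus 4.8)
The plan is to characterize, for any nonempty $\UU\subset M^*$, the \emph{entire} family of mixture-preserving functions that are increasing with respect to the represented preorder, and to show that this family is exactly $\ov{\cone(\UU\cup\CC)}$. Write $\se_\UU$ for the preorder that $\UU$ represents, so that $x\se_\UU y$ iff $u(x)\ge u(y)$ for all $u\in\UU$, and let $\II(\se_\UU)=\Set{f\in M^*}{x\se_\UU y\implies f(x)\ge f(y)}$ be the set of increasing mixture-preserving functions. Note that $\II(\se_\UU)$ is a convex cone containing $\CC$ (every constant is increasing), and that it is closed in the topology of pointwise convergence, since a pointwise limit of increasing functions is increasing. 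Two preliminary observations reduce the proposition to a single equality of cones. First, $\UU$ and $\II(\se_\UU)$ represent the same preorder: the inclusion $\UU\subseteq\II(\se_\UU)$ gives one direction, and $x\se_\UU y$ forces $f(x)\ge f(y)$ for every $f\in\II(\se_\UU)$ by definition of $\II$. Hence $\se_\UU=\se_{\II(\se_\UU)}$, so $\se_\UU=\se_{\UU'}$ if and only if $\II(\se_\UU)=\II(\se_{\UU'})$. It therefore suffices to prove the key identity
\[
\ov{\cone(\UU\cup\CC)}=\II(\se_\UU).
\]

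For the inclusion $\subseteq$, observe that $\II(\se_\UU)$ is a closed convex cone containing $\UU\cup\CC$, hence contains the closed cone they generate. The reverse inclusion $\supseteq$ is the crux, and I would argue by contradiction via separation. The topology of pointwise convergence makes $M^*$ a locally convex space (it is induced by the seminorms $f\mapsto|f(x)|$), so if some increasing $v\in\II(\se_\UU)$ failed to lie in the closed convex cone $K:=\ov{\cone(\UU\cup\CC)}$, the separating hyperplane theorem would supply a continuous linear functional $\phi$ on $M^*$ with $\phi(v)<0\le\phi(k)$ for all $k\in K$ (homogeneity of the cone upgrades strict separation to nonnegativity of $\phi$ on $K$). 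The continuous linear functionals for pointwise convergence are exactly the finite linear combinations of evaluations, so $\phi(f)=\sum_{i=1}^n c_i f(x_i)$ for some points $x_i\in M$ and scalars $c_i$. Because $\CC\subset K$ is a linear subspace, $\phi$ vanishes on $\CC$; applied to the constant function $1$ this gives $\sum_i c_i=0$.

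The vanishing of $\sum_i c_i$ is precisely what converts $\phi$ into a comparison of two points of $M$. Setting $\lambda=\sum_{c_i>0}c_i=\sum_{c_i<0}|c_i|$, which is positive (otherwise $\phi=0$, contradicting $\phi(v)<0$), I form the finite mixtures $a=\sum_{c_i>0}(c_i/\lambda)x_i$ and $b=\sum_{c_i<0}(|c_i|/\lambda)x_i$ in $M$; these are well-defined mixture points (e.g.\ via an efficient embedding, under which $M$ is convex and each $f\in M^*$ is affine), and additivity of $f$ on mixtures gives $\phi(f)=\lambda(f(a)-f(b))$ for all $f\in M^*$. Now $\phi(u)\ge0$ for every $u\in\UU$ yields $u(a)\ge u(b)$ for all $u\in\UU$, i.e.\ $a\se_\UU b$; since $v$ is increasing, $v(a)\ge v(b)$, whence $\phi(v)=\lambda(v(a)-v(b))\ge0$, contradicting $\phi(v)<0$. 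This establishes the key identity, and combining it with the reduction above proves the proposition, since $\ov{\cone(\UU\cup\CC)}=\ov{\cone(\UU'\cup\CC)}$ iff $\II(\se_\UU)=\II(\se_{\UU'})$ iff $\se_\UU=\se_{\UU'}$. I expect the main obstacle to be exactly this $\supseteq$ step: one must identify the dual of $M^*$ under pointwise convergence and then exploit $\sum_i c_i=0$ to realize the separating functional as a single $\se_\UU$-comparison $a\se_\UU b$.
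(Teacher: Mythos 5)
Your proposal is correct and follows essentially the same route as the paper's proof: reduce the statement to the identity $\ov{\cone(\UU\cup\CC)}=\Set{f\in M^*}{f\text{ is increasing}}$, identify the continuous dual of $M^*$ under pointwise convergence with the span of the evaluation functionals, separate a putative increasing $v\notin\ov{\cone(\UU\cup\CC)}$ by such a functional, and use its vanishing on the constants to rewrite it as a single comparison $\lambda(f(a)-f(b))$ of two mixture points, yielding the contradiction. The only cosmetic difference is that the paper first combines the evaluation terms into $\ll\phi(x)-\mm\phi(y)$ via mixture-preservation and then deduces $\ll=\mm$ from nonnegativity on constants, whereas you extract $\sum_i c_i=0$ first and then form the mixtures; the substance is identical.
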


It is easy to check that if $\UU$ represents $\se$, then the subset of functions in $M^*$ that are increasing with respect to $\se$ is the unique {\em maximal} mixture-preserving multi-representation of $\se$. Proposition \ref{p:unique} is equivalent to the claim that the closure of the convex cone containing $\UU$ and the constant functions {\em is} this maximal multi-representation.

\appendix

\section{Independence and weak continuity}\label{s:WCon}


In this appendix, we clarify how our basic axioms, \ref{SI} and \ref{MC}, are related to others common in the literature on expected utility without completeness, 
as mentioned in section \ref{s:related lit}.

Let $M$ be a mixture space, and consider the following axioms for a preorder $\se$ on $M$.

\newlist{Ind}{enumerate}{1}
\setlist[Ind]{label=\textbf{Ind},ref={\text{Ind}}}

\begin{Ind}[align=parleft, leftmargin=!,itemsep=0pt,labelsep=14pt] 
        \item\label{Ind} For $x$, $y$, $z \in M$, and $\aa \in (0,1)$,
$x\se y \implies  x \aa z \se  y\aa z$.    
\end{Ind}

\newlist{WCon}{enumerate}{1}
\setlist[WCon]{label=\textbf{WCon},ref={\text{WCon}}}

\begin{WCon}[align=parleft, leftmargin=!,itemsep=0pt,labelsep=14pt] 
        \item\label{WCon}  For $x$, $y$, $z$, $w \in M$, $\Set{\aa \in [0,1]}{x\aa y \se  z\aa w}$ is closed. 
\end{WCon}
The first is the independence axiom of expected utility theory. The second is axiom P4 of \citet{SB1998}, 
and 
is called `weak continuity' by \citet{DMO2004}. 
Some relationships are clarified by 
the following lemma.
\begin{lemma}\label{l:WCon} Let $\se$ be a preorder on a mixture space $M$. 
\begin{enumerate}
[nosep, label = (\roman*)]
\item\label{l:WCon:WCon} \ref{WCon} $\implies$ \ref{MC}; 
\item\label{l:WCon:MC} \ref{MC} \& \ref{Ind} $\centernot\implies$ \ref{WCon}; 
\item\label{l:WCon:all} \ref{WCon} \& \ref{Ind} $\iff$ \ref{MC} \& \ref{SI}. 
\end{enumerate}
\end{lemma}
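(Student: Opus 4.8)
My plan is to dispose first of the routine implications and then attack the two substantive claims — the equivalence in (iii) and the counterexample in (ii) — which are really two sides of the same coin.

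For \textbf{(i)} and the easy halves of \textbf{(iii)}, I would specialize \ref{WCon} by collapsing a mixture to a point. Applying \ref{WCon} to the four points $x,y,z,z$ and using the mixture axiom $z\aa z=z$ shows that $\{\aa\in[0,1]:x\aa y\se z\}$ is closed. Under the hypothesis of \ref{MC}, namely $x\aa y\s z$ for all $\aa\in(0,1]$, this set contains $(0,1]$, hence contains its closure $[0,1]$; evaluating at $\aa=0$, where $x\,0\,y=y$, yields $y\se z$, which is \ref{MC}. The same computation gives \ref{MC} from \ref{WCon} in the forward half of (iii), and \ref{Ind} is literally one direction of \ref{SI}, so the forward half of (iii) reduces to proving \ref{SI}.

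For the backward half of (iii), \ref{MC}\,\&\,\ref{SI}$\THEN$\ref{WCon}, I would pass to the vector space. Since \ref{SI} holds I may take an efficient embedding $M\subset V$ with positive cone $C$ and linear extension, so that by \eqref{eq:mspace} we have $p\se r\iff p-r\in C$ for $p,r\in M$. Writing $p=x\aa y$, $r=z\aa w$, the set in \ref{WCon} becomes
\[
\{\aa\in[0,1]:\aa(x-z)+(1-\aa)(y-w)\in C\},
\]
the preimage of $C$ under an affine map $[0,1]\to V$. By Proposition~\ref{p:cone}\ref{p:cone:MC}, \ref{MC} makes $C$ algebraically closed, and an algebraically closed convex set meets every line in a topologically closed set; hence this preimage is closed in $[0,1]$, giving \ref{WCon}.

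The substance of the lemma is the implication \ref{WCon}\,\&\,\ref{Ind}$\THEN$\ref{SI}, and the only nontrivial content there is the converse half of \ref{SI}: that $x\aa_0 z\se y\aa_0 z$ for a single $\aa_0\in(0,1)$ forces $x\se y$. The difference $(x\aa_0 z)-(y\aa_0 z)=\aa_0(x-y)$ is parallel to the desired difference $x-y$, so I would try to ``walk'' from $y$ to $x$ along the segment $[y,x]$ in short collinear steps: each step $p\mapsto p+\ee(x-y)$ can be written as $a\,\cc\,w\se b\,\cc\,w$ with $a=x\aa_0 z$, $b=y\aa_0 z$, a suitable $\cc\in(0,1)$, and $w$ obtained by extrapolating $p$ away from $b$, so that \ref{Ind} licenses it \emph{provided} $w\in M$; transitivity then accumulates the steps into $x\se y$. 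The point of the chaining is that \ref{Ind} by itself only ever shrinks a positive difference (its ``coefficient'' is capped by $\aa_0$), and it is transitivity that lets one build the difference back up to $x-y$. Structurally this amounts to showing that $x\se y\iff x-y\in C$ for the scale-invariant set $C=\{\lambda(a-b):\lambda>0,\ a,b\in M,\ a\se b\}\cup\{0\}$, after which \ref{SI} is immediate since $x\aa z\se y\aa z\iff\aa(x-y)\in C\iff x-y\in C\iff x\se y$; this also reconstructs, under our axioms, the passage from \ref{Ind} to \ref{SI} used by Shapley and Baucells.

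The main obstacle is the boundary case of this chaining: when $[y,x]$ grazes the algebraic boundary of $M$, the auxiliary points $w$ leave $M$ and the steps are no longer directly available, so \ref{Ind} and transitivity alone suffice only when the segment lies in the relative algebraic interior. Closing this gap is exactly the job of continuity: I would approximate $(x,y)$ by interior pairs to which the chaining applies and then invoke the closedness furnished by \ref{WCon} (equivalently the Herstein--Milnor form of \ref{MC} it yields) to pass to the limit; the bookkeeping needed to keep the intermediate base points inside the shrunken copies of $M$ where steps exist is the delicate part of the whole lemma. Finally, for \textbf{(ii)} I would exhibit a mixture space satisfying \ref{MC} and \ref{Ind} but not \ref{SI} — by the half of (iii) just discussed, any such example automatically violates \ref{WCon}. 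The cleanest is $M=[0,1]$ with $s\se t\iff s=t$ or ($s>t$ and $t>0$), so that $0$ is incomparable to every positive point while $\se$ is the usual order on $(0,1]$. One checks \ref{Ind} and the (Aumann) axiom \ref{MC} directly; \ref{SI} fails because $0.5\,\tfrac12\,1\se 0\,\tfrac12\,1$ (that is, $0.75\se 0.5$) while $0.5\not\se 0$; and \ref{WCon} fails since, for the four points $c,c,1,0$ with $0<c<1$, the set $\{\aa:c\se\aa\}=(0,c]$ is not closed.
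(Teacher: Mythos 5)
Most of your proposal tracks the paper's own argument: part \ref{l:WCon:WCon} by setting $w=z$ in \ref{WCon}, and the right-to-left half of \ref{l:WCon:all} by passing to an efficient embedding, invoking Proposition~\ref{p:cone}\ref{p:cone:MC} to get algebraic closedness of $C$, and observing that the set in \ref{WCon} is the preimage of $C\cap[y-w,x-z]$ under an affine map, hence closed. Your counterexample for \ref{l:WCon:MC} ($M=[0,1]$ with the usual order on $(0,1]$ and $0$ incomparable to everything positive) differs from the paper's ($1\s x\e y$ for $x,y\in[0,1)$, a \emph{complete} preorder), but it checks out: \ref{Ind} and \ref{MC} hold, \ref{SI} fails at $0.75\se 0.5$ versus $0.5\nGeq 0$, and \ref{WCon} fails directly on the set $(0,c]$. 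Both examples do the job.

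The genuine gap is in the left-to-right half of \ref{l:WCon:all}, specifically the implication $\ref{WCon}\,\&\,\ref{Ind}\implies\ref{SI}$, which you correctly identify as the substance of the lemma. The paper does not prove this at all: it cites \citet[Lem.~1.2]{SB1998} and is careful to note that Shapley and Baucells derive \ref{SI} from \ref{WCon} and \ref{Ind} \emph{before} constructing the cone correspondence, so no circularity arises. You instead sketch a from-scratch chaining argument --- walk from $y$ to $x$ in steps $p\mapsto p+\ee(x-y)$ licensed by \ref{Ind}, accumulate by transitivity, and handle the algebraic boundary of $M$ by approximating with interior pairs and passing to the limit via the closedness from \ref{WCon} --- but you explicitly defer "the bookkeeping needed to keep the intermediate base points inside the shrunken copies of $M$" and the limit argument itself. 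That deferred material is precisely where the work lies: one must check that the auxiliary points $w=(p-\cc b)/(1-\cc)$ stay in $M$, which requires restricting to the (at most two-dimensional, hence relatively-algebraically-solid) mixture subspace generated by $x,y,z$ and choosing $\ee$ uniformly along the segment, and then the boundary case needs a separate continuity argument that you do not supply. As written, the central implication of part \ref{l:WCon:all} is therefore asserted rather than proved. The quickest repair is the paper's: invoke the Shapley--Baucells lemma. Alternatively, complete the chaining argument within the finite-dimensional mixture subspace $M(\{x,y,z\})$, where the relative algebraic interior is nonempty and the approximation step can be made precise.
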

Thus \ref{MC} is weaker than \ref{WCon}, \ref{SI} is stronger than \ref{Ind}, and following \citet{SB1998}, we could have focused on 
the package of  \ref{Ind} and \ref{WCon} instead of  \ref{SI} and \ref{MC}. 
We have emphasized the latter combination partly because \ref{MC} seems simpler and more intuitive than \ref{WCon}, and partly because  
\ref{SI} is arguably the central idea of expected utility: if $M$ is a convex set of probability measures, 
\ref{SI} is necessary and sufficient for a preorder on $M$ to have an 
vector-valued 
expectational representation \citep[Lem. 4.3]{MMT2020}.

\newlist{HM}{enumerate}{1}
\setlist[HM]{label=\textbf{HM},ref={\text{HM}}}

\begin{rem}\label{r:MC} Intermediate between \ref{MC} and \ref{WCon} is the Herstein-Milnor axiom 
\begin{HM}[align=parleft, leftmargin=!,itemsep=0pt,labelsep=34pt] 
        \item\label{HM}  For $x$, $y$, $z\in M$, $\Set{\aa \in [0,1]}{x \aa y \se z}$  and $\Set{\aa \in [0,1]}{z \se x\aa y}$  are closed.  
\end{HM}
Since it is clear that $\ref{WCon}\implies\ref{HM}\implies\ref{MC}$, Lemma
\ref{l:WCon}\ref{l:WCon:all} 
shows that all three of these conditions are equivalent for mixture preorders (i.e.~assuming \ref{SI}). Such an equivalence between \ref{MC} and \ref{HM} was already noted by \citet{rA1962},  without proof. 
\end{rem}

\begin{proof}[{\bf Proof of Lemma~\ref{l:WCon}}]
\ref{l:WCon:WCon} Take $w = z$ in the statement of \ref{WCon}.

\ref{l:WCon:MC} 
Take $M = [0,1]$ and define $\se$ by $1 \s x \e y$ for all $x$,  $y \in [0,1)$.
This $\se$ is easily seen to satisfy \ref{MC} and \ref{Ind}, but not \ref{SI}.  
\citet[Lem. 1.2]{SB1998} is that \ref{WCon} \& \ref{Ind} $\implies$ \ref{SI}. It follows that $\se$ violates  \ref{WCon} (as one can check with $x=y=z=0$, $w=1$).

\ref{l:WCon:all} The 
left-to-right 
direction is immediate from \ref{l:WCon:WCon} and the result by Shapley and Baucells just mentioned. 
For the 
right-to-left direction, assume \ref{MC} and \ref{SI}. \ref{SI} obviously entails \ref{Ind}, so it remains to derive \ref{WCon}. 
It is possible to give a direct proof, using only the mixture space axioms. 
However, a shorter proof is available in terms of an efficient embedding. 
%
We emphasize that this involves no circularity, as \citet{SB1998} derived  the results concerning efficient embeddings that we presented in section~\ref{s:mspace} using only \ref{SI}, having first derived it from \ref{WCon} and \ref{Ind}; see note~\ref{n:SB1998}.

Assume, then, that $M \subset V$ is an efficient embedding, with $C$ the positive cone. 
By Proposition~\ref{p:cone}\ref{p:cone:MC}, 
whose proof does not depend on the present result, 
$C$ is algebraically closed. Consider the set 
$I=\Set{\aa\in[0,1]}{\aa x + (1-\aa)y \se \aa z + (1-\aa)w}$,  as in the statement of \ref{WCon}. 
Define $f(\aa)=\aa(x-z) + (1-\aa)(y-w)$. Thus $f$ maps $[0,1]$ onto the line segment $I'=[y-w,x-z]=\Set{\aa(x-z)+(1-\aa)(y-w)}{\aa\in[0,1]}$.  Since $C$ is convex, $I'\cap C$ is a (possibly empty) line segment; since $C$ is algebraically closed, this line segment, if not empty, contains its end points. But by \eqref{eq:mspace}, $\aa\in I\iff f(\aa) \in C$, so $I=f\inv(I'\cap C)$. It follows that $I$ is a closed interval, implying \ref{WCon}.
\end{proof}

\section{Weak dominance and Archimedean  structures}\label{s:weak dom}

Let $(M,\se)$ be a preordered mixture space. In this appendix we prove some general facts about weak dominance that we used in section \ref{s:CD}. Primarily, we show that weak dominance is a preorder on $\Graph$. 
This enables us to define the Archimedean structure $(\ArchCl,\geq)$ as in section \ref{s:weak Arch}: $\ArchCl$ consists of equivalence classes in $\Graph$ under the symmetric part of the weak dominance preorder.
While it is not difficult to check the preordering property directly, we proceed in a way that highlights a geometrical interpretation of the Archimedean structure: 
it  is closely related to the lattice of faces of the  positive  cone $C$ defined by an efficient embedding $M\subset V$ (cf.~section \ref{s:mspace}).  This was illustrated in Example~\ref{ex:arch}. 

Recall that a non-empty convex subcone $F\subset C$ is called a \emph{face} of $C$ if, for all $x,y\in C$, $x+y\in F\implies x,y\in F$. 
The set $\FF$ of faces is partially ordered by inclusion, and indeed it is a complete lattice.\footnote{See \citet{gB1973}, from which we take our simple definition of a face of a convex cone; it is compatible with the standard definition of the face of a convex set.}
This means in particular that, 
for any $v\in C$, there is a smallest face $\Phi(v)$  containing $v$.   
Let us say that $F\in\FF$ is \emph{regular} if $F$ is not the union of its proper subfaces: equivalently, $F=\Phi(v)$ for some $v\in C$. Let $\FF_r\subset \FF$ be the set of regular faces. 

\begin{prop}\label{p:faces}~
\begin{enumerate}[nosep, label =(\roman*)]
\item\label{p:faces:basic} For any $(x,y),(s,t)\in\Graph$, $(x,y)$ weakly dominates $(s,t)$ if and only if $\Phi(x-y)\supset \Phi(s-t)$.
\item\label{p:faces:preorder} Weak dominance is a preorder on $\Graph$.
\item\label{p:faces:iso} $(\ArchCl,\geq)$ is isomorphic to $(\FF_r, \subset)$ as a partially ordered set. 
\item\label{p:faces:min} {} Any $[(x,y)]\in\ArchCl$ is minimal if and only if $\Phi(x-y)=C$. In particular, $\ArchCl$ contains at most one minimal element.
\end{enumerate}
\end{prop}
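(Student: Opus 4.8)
The plan is to translate weak dominance into the language of the positive cone $C$ coming from an efficient embedding $M\subset V$, and then to reduce everything to a single characterisation of the minimal face $\Phi(v)$. Fix $(x,y),(s,t)\in\Graph$ and write $u=x-y$, $w=s-t$; by \eqref{eq:mspace} we have $u,w\in C$. First I would record the computation that, for $\aa\in(0,1)$, $x\aa t\se y\aa s$ is equivalent, via \eqref{eq:mspace}, to $\aa u-(1-\aa)w\in C$, hence (dividing by $\aa>0$ and using that $C$ is a cone) to $u-\bb w\in C$ with $\bb=(1-\aa)/\aa$. Since $\bb$ ranges over $(0,\infty)$ as $\aa$ ranges over $(0,1)$, this shows that $(x,y)$ weakly dominates $(s,t)$ if and only if $u-\bb w\in C$ for some $\bb>0$.

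The engine of the whole proof is the identity
\[
\Phi(v)=\Set{c\in C}{v-\ee c\in C\text{ for some }\ee>0}\qquad(v\in C).
\]
Writing $G(v)$ for the right-hand side, I would verify that $G(v)$ contains $v$, is a convex subcone (the one nontrivial point, closure under addition, follows by taking a suitable convex combination of $v-\ee_1c_1$ and $v-\ee_2c_2$ to clear denominators), and is a face (if $a,b\in C$ and $v-\ee(a+b)\in C$, then $v-\ee a=(v-\ee(a+b))+\ee b\in C$, so $a\in G(v)$, and symmetrically $b\in G(v)$); and finally that it is the smallest face containing $v$, since for any face $F\ni v$ and any $c\in G(v)$ the decomposition $v=(v-\ee c)+\ee c$ forces $\ee c\in F$. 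Granting this, part \ref{p:faces:basic} is immediate: the condition ``$u-\bb w\in C$ for some $\bb>0$'' says exactly that $w\in\Phi(u)$, and $w\in\Phi(u)\iff\Phi(w)\subset\Phi(u)$ because $\Phi(w)$ is the smallest face containing $w$. Part \ref{p:faces:preorder} then follows with no extra work, since $\supset$ is reflexive and transitive and, by \ref{p:faces:basic}, weak dominance is its pullback along $(x,y)\mapsto\Phi(x-y)$.

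For part \ref{p:faces:iso} I would show that $\phi\colon[(x,y)]\mapsto\Phi(x-y)$ is the claimed isomorphism. It is well defined and injective because, by \ref{p:faces:basic}, two pairs lie in one Archimedean class exactly when their images under $\Phi$ coincide. Its image is all of $\FF_r$: every regular face is $\Phi(v)$ for some $v\in C$, and by \eqref{eq:CseVdef} such a $v$ equals $\ll(x-y)$ for some $\ll>0$ and $(x,y)\in\Graph$, whence $\Phi(v)=\Phi(x-y)$ by scale-invariance of $\Phi$. Order is respected because, unwinding the definitions, $[(x,y)]\geq[(s,t)]$ means $(s,t)$ weakly dominates $(x,y)$, which by \ref{p:faces:basic} is the relation $\Phi(x-y)\subset\Phi(s-t)$.

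Part \ref{p:faces:min} I would argue directly. Using part \ref{p:faces:basic} and the definition of the order, $[(x,y)]$ is minimal precisely when no regular face strictly contains $\Phi(x-y)$, i.e.~when $\Phi(x-y)$ is $\subset$-maximal in $\FF_r$. If $\Phi(x-y)=C$ this clearly holds, as no face can strictly contain $C$. Conversely, if $F=\Phi(v)$ is regular with $F\neq C$, choose $w\in C\setminus F$; then $v+w\in C$ and, since the face $\Phi(v+w)$ contains $v+w$, it contains both $v$ and $w$, so $\Phi(v+w)\supsetneq F$ is a strictly larger regular face, and $F$ is not maximal. Hence minimality of $[(x,y)]$ is equivalent to $\Phi(x-y)=C$; and because all minimal classes then share the single image $C$, injectivity of $\phi$ forces at most one. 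The main obstacle is the minimal-face identity---checking both that $G(v)$ is a genuine face and that it is the smallest one---and, secondarily, the enlargement step $v\mapsto v+w$ in part \ref{p:faces:min}; once these are in hand the remaining parts are bookkeeping through \eqref{eq:mspace} and \eqref{eq:CseVdef}.
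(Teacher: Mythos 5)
Your proof is correct and follows essentially the same route as the paper: translate weak dominance via the efficient embedding into the condition $u-\bb w\in C$ for some $\bb>0$, identify this with $w\in\Phi(u)$, and deduce (ii)--(iv) from the resulting order isomorphism, with the same $v\mapsto v+w$ enlargement step for (iv). The only difference is that where the paper cites \citet[Lemma 2.8]{gB1973} for the characterisation $\Phi(v)=\Set{c\in C}{v-\ee c\in C\text{ for some }\ee>0}$ (noting that Barker's finite-dimensionality assumption is not needed), you prove it directly, which makes the argument self-contained.
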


\begin{proof}
For \ref{p:faces:basic},  
suppose that $(x,y)$ weakly dominates $(s,t)$. Then there exists $\aa\in(0,1)$ such that $\aa x+(1-\aa) t\se \aa y+(1-\aa)s$. Let $\ll=\frac{1-\aa}{\aa}$. It follows 
from \eqref{eq:CseVdef}
that 
$(x-y)-\ll(s-t)\in C$. 
At this point we appeal to 
\citet[Lemma 2.8]{gB1973}: $w\in \Phi(v)$ if and only if there exists $\ll>0$ such that $v-\ll w\in C$. (We note that Barker's lemma does not use his standing assumption of finite-dimensionality.) 
In our case, we find $s-t\in \Phi(x-y)$, and therefore 
$\Phi(s-t)\subset \Phi(x-y)$. 
The argument is reversible.

Part \ref{p:faces:preorder} now follows from the fact that `$\supset$' is a preorder on $\FF$. 


Now for part \ref{p:faces:iso}. 
It follows from part \ref{p:faces:basic} that $[(x,y)]\mapsto \Phi(x-y)$ is a well-defined, order-preserving, injective function $\ArchCl\to \FF_r$, and we just have to show it is surjective, i.e. that every regular face is of the form $\Phi(x-y)$ with $(x,y)\in\Graph$. Every regular face is of the form $\Phi(v)$, with $v\in C$, and, by \eqref{eq:CseVdef}, every such $v$ is of the form $\ll(x-y)$ with $\ll>0$ and $(x,y)\in\Graph$. Since every face containing $v$ contains $\tfrac1\ll v$, and vice versa, we find that $\Phi(v)=\Phi(x-y)$.    


For \ref{p:faces:min}, $C$ is the minimal  face of $C$
with respect to the preorder `$\subset$' (i.e.~it is set-theoretically the \emph{largest} face). So, if $\Phi(x-y)=C$, then certainly $C$ is a minimal regular face, and therefore $[(x,y)]$ is minimal. Conversely, if $[(x,y)]$ is minimal, then $\Phi(x-y)$ is a minimal regular face. It remains to show that, if there is a minimal regular face, then it is $C$. Suppose $\Phi(v)$ is a minimal regular face. 
Note that, for any $w\in C$, any face containing $\Phi(\tfrac12v+\tfrac12w)$ contains  $v+w$, and therefore contains  both $v$ and $w$. Therefore $\Phi(v)\subset \Phi(\tfrac12v+\tfrac12w)$.  Since $\Phi(v)$ is minimal regular,  $\Phi(v)=\Phi(\tfrac12v+\tfrac12w)\ni w$.  
That is, $\Phi(v)$ contains every $w\in C$; so $\Phi(v)=C$. 
\end{proof}

\section{Proofs of auxiliary results}\label{s:aux proofs}

\begin{proof}[\bf Proof of Proposition~\ref{p:dim}]
For the first claim, suppose that $(M,\se)$ is a preordered mixture space of countable dimension. We appeal to some results from section \ref{s:discuss}, the proofs of which do not depend on this one.  In the terminology of section \ref{s:mspace}, let $M\subset V$ be an efficient embedding, 
with $C$ the positive cone. 
Proposition \ref{p:cone}\ref{p:cone:dim} shows that $V$ has countable dimension. Therefore its subspace $\Span(C)$ has countable dimension.  Corollary \ref{c:CD dim} then tells us that $\se$ satisfies \ref{CDom} (note that $\Span(C)$ is a cofinal subspace of itself). 

The second claim, that the converse does not hold, 
even for mixture preorders that satisfy \ref{MC}, 
is illustrated by Examples \ref{ex:combo} and \ref{ex:SDnotAr}.  
\end{proof}

\begin{proof}[\bf Proof of Proposition~\ref{p:cone}.] 

For \ref{p:cone:dim},  
let $A\subset M$  be nonempty.  Fix any $a_0\in A$ and let  $A' = \Set{a-a_0}{a\in A\setminus\{a_0\}}$. Since 
$M \subset V$ is an efficient embedding, $V = \Span(M-M) = \Span(M - \{a_0\})$. 
Thus, $A'$ is a basis for $V$ if and only if it is linearly independent and maximal among linearly independent subsets of $M-\{a_0\}$.  We claim that $A'$ is linearly independent if and only if $A$ is mixture independent. It follows that $A'$ is a basis for $V$ if and only if $A$ is a maximal mixture-independent subset of $M$. Since $|A'|=|A|-1$, it follows that the vector-space dimension of $V$ equals the mixture-space dimension of $M$.

To prove the claim, first suppose that $A$ is \emph{not} mixture independent.  There must be nonempty 
$A_1, A_2\subset A$ such that $A_1\cap A_2=\emptyset$ but $M(A_1)\cap M(A_2)\neq \emptyset$. Given the embedding of $M$ into $V$, $M(A_1)$ equals the convex hull of $A_1$; it consists of all convex combinations of elements of $A_1$. Since $M(A_1)\cap M(A_2)\neq\emptyset$, 
there is an equality between two convex combinations of the form
\[
\textstyle{\sum_i \aa_i x_i = \sum_i \bb_i y_i}
\]
with 
$\aa_i,\bb_i\in[0,1]$, $x_i\in A_1$, $y_i\in A_2$, and $\sum \aa_i= \sum\bb_i=1$. But then we also have
\[
\textstyle{\sum_i \aa_i (x_i-a_0) = \sum_i \bb_i (y_i-a_0)}
\]
showing that $A'$ is linearly dependent.  
Conversely, suppose that $A'$ is linearly dependent. Then there are disjoint, finite $A_1, A_2\subset A\setminus\{a_0\}$ and an equation of the form 
\[
	\textstyle{\sum_i \ll_i(a_i - a_0) = \sum_i \mu_i(b_i - a_0)}
\]
where at most one of the sums is empty (in which case it is zero), 
with all $\ll_i, \mm_i >0$, 
$a_i \in A_1$, $b_i \in A_2$.  
Without loss of generality, we can assume that $\ll\coloneqq \sum_i \ll_i\geq \sum_i \mm_i\eqqcolon \mu$, 
so that $A_1$ is nonempty. 
Moving all terms involving $a_0$ to the right-hand side, and dividing by $\ll$, we have 
\[
\textstyle{\sum_i \frac{\ll_i}{\ll} a_i  = \sum_i \frac{\mu_i}{\ll} b_i+\frac{\ll-\mu}{\ll} a_0.}
 \]
This shows that $M(A_1)\cap M(A_2\cup\{a_0\})\neq\emptyset$. Therefore $A$ is not mixture 
independent.

Now for part \ref{p:cone:MC}. Suppose first that $\se$ satisfies \ref{MC}.  
Let $(v,w]\subset C$, so that, by \eqref{eq:CseVdef}, $z\seV 0$ for every $z\in(v,w]$. To show that $C$ is algebraically closed, we have to show $v \in C$. Suppose first that $z\eV0$ for some $z\in (v,w]$. Then $-z \eV 0$, so, by \eqref{eq:CseVdef} again, $-z\in C$. Let $z'=\tfrac12 v+\tfrac 12z\in(v,w]$. We have $v=2z'-z$. Since both $z'$ and $-z$ are in $C$, and $C$ is a convex cone,  it follows that $v\in C$, as desired.  
We are thus reduced to the case where $z\sV 0$ for every $z\in (v,w]$.  

Now we claim that there exists $\ll_0>0$ and $x_0,x_1,x_2\in M$ such that $v=\ll_0(x_1-x_0)$ and $w=\ll_0(x_2-x_0)$. 
Since $M\subset V$ is an efficient embedding, 
using \eqref{eq:Vspan} we can 
write $v = \ll (x-y)$ and $w = \mm(s-t)$ for some $\ll, \mm >0$ and $x,y,s,t \in M$.  
Set $\bb=\ll/(\ll+\mm)$, so $1-\bb=\mm/(\ll+\mu)$. 
The claim is easily verified with 
\[\ll_0=\ll+\mm,\quad 
x_0=\bb y+(1-\bb)t,\quad x_1=\bb x+(1-\bb)t,\quad x_2=\bb y+(1-\bb)s.\] 

Any $z\in(v,w]$ can be written as $z=(1-\aa)v+\aa w$, with $\aa\in (0,1]$. It follows that 
\[z=(1-\aa)\ll_0(x_1-x_0)+\aa\ll_0(x_2-x_0)=\ll_0((1-\aa)x_1 +\aa x_2 -x_0).
\] Since, as in the first step, $z\sV 0$, it follows that $(1-\aa)x_1+\aa x_2\sV x_0$. Then, by \eqref{eq:mspace}, 
$(1-\aa)x_1+\aa x_2\s x_0$. This holds for all $\aa\in(0,1]$, so 
 \ref{MC} gives us $x_1\se x_0$. Therefore, by \eqref{eq:CseVdef}, $v=\ll_0(x_1-x_0)\in C$.

Conversely, suppose that $C$ is algebraically closed. To show that $\se$ satisfies \ref{MC}, 
suppose that $\aa x+(1-\aa) y \s z$ for all $\aa\in(0,1]$. Then by \eqref{eq:mspace}, $\aa(x-z)+(1-\aa)(y-z)\in C$ for all such $\aa$. Since $C$ is algebraically closed, it follows that $y-z\in C$. 
By \eqref{eq:mspace}, $y\se z$, validating \ref{MC}.

For \ref{p:cone:MR}, let $V$ have the weak topology. Suppose first that $\se$ has a 
mixture-preserving multi-representation 
$\UU$. Then \eqref{eq:intersection} presents $C$ as the intersection of 
closed sets, 
so it is closed. 

Conversely, 
suppose that $C$ is closed. 
If $C = V$, then by \eqref{eq:CseVdef} $\se$ is the indifference relation, which has a mixture-preserving multi-representation consisting of 
a single constant function. 
Assume then $C \neq V$. 
The weak topology on $V$ is locally convex, 
so by the strong separating hyperplane theorem 
\citep[Cor. 5.84]{AB2006}, 
for any $v \notin C$, there exists a linear functional $L_v \colon V \to \RRR$ such that $L_v(C) \subset [0, \infty)$ and $L_v(v) <0$. Let $\LL = \Set{L_v}{v \notin C}$. Then by \eqref{eq:mspace}, 
\[
	x \Geq y \iff  x-y \in C \iff L(x) \geq L(y) \text{ for all } L \in \LL. 
\]
It follows that the restriction of $\LL$ to $M$ is a  mixture-preserving multi-representation of $\se$. 
\end{proof}

\begin{proof}[{\bf Proof of Proposition \ref{p:klee}}]
We show that the cone $K$ defined in Example~\ref{ex:klee} is algebraically closed but not 
closed (recall that $V$ has the weak topology).

As a first step, we show that, for any finite, non-empty $A\subset B_1$, the subcone $K\cap\Span(A\cup\{b_0\})$ of $K$ is algebraically closed.  
Any convex cone generated by finitely many elements is algebraically closed \citep[see e.g.][G.1.6, Thm.~1]{eO2007}, so it suffices to prove 
\begin{equation}\label{eq:klee-a-pf}
K\cap\Span(A\cup\{b_0\}) = \cone\Set{y_{A'}+b_0}{A' \neq \varnothing, A'\subset A}.
\end{equation}
The inclusion of the right-hand side in the left is obvious. Conversely, suppose $v$ is a member of the left-hand side. We may assume $v \neq 0$. 
Since $v \in K$, it may be written 
\begin{equation}\label{eq:ee}
v = \sum_{k=1}^n \ll_k (y_{A_k}+b_0)
\end{equation}
where $n$ is a positive integer, each coefficient $\ll_k$ is strictly positive, and each $A_k$ is a finite, nonempty subset of $B_1$. It follows that $v$ is a linear combination, with all coefficients strictly positive, of every member of $\bigcup_{k=1}^nA_k \cup \{b_0\}$. Since $v \in \Span(A\cup\{b_0\})$, this is only possible if $A_k \subset A$ for each $k$. Therefore \eqref{eq:ee} presents $v$ as a  member of the right-hand side of \eqref{eq:klee-a-pf}.

We can now show that $K$ itself is algebraically closed. 
Suppose given a half-open line segment $(v_0,v_1]\subset K$; we have to show $v_0\in K$. We can find a finite set of basis elements $A\subset B_1$ such that $v_0,v_1\in \Span(A\cup\{b_0\})$, and therefore such that $(v_0,v_1]\subset \Span(A\cup\{b_0\})$. Since $K\cap \Span(A\cup\{b_0\})$ is algebraically closed, it contains $v_0$; therefore $v_0\in K$, as desired.

Finally, we show that $K$ is not 
 closed. 
In this proof, let 
$\ov K$ 
denote the 
closure of $K$.
Note that $b_0\notin K$; we show that $b_0$ is nonetheless in $\ov K$. 
Suppose for a contradiction $b_0 \notin  \ov K$. 
By the strong separating hyperplane theorem 
there exists a linear functional 
$f\colon V\to \RRR$ such that $f(b_0)<0$ but 
$f(K) \subset [0,\infty)$. 
Now, since $B_1$ is uncountable, 
there exists some $n\in\NNN$ for which there are infinitely many $b\in B_1$ with $f(b)<n$. Let $A$ be a nonempty, finite set of such $b$. 
Then $f(y_A)<|A|^{-2}\sum_{b\in A} n= n/|A|$. 
Therefore $f(y_A+b_0)<f(b_0)+n/|A|$. Since $|A|$ may be chosen to be arbitrarily large, and $f(b_0)<0$,  
we can find some $y_A$ such that $f(y_A + b_0) <0$, contrary to $f(K) \subset [0, \infty)$. We conclude that $b_0\in\ov K$.
\end{proof}

\begin{proof}[{\bf Proof of Proposition \ref{p:count}}]
Let $C$ be an algebraically closed convex subset of a vector space $V$. We may assume $C$ is nonempty; we want to show it is 
closed when $V$ is endowed with the weak topology.

First consider the case when $\dim V$ is finite. 
The weak topology on $V$ is 
then 
the same as the Euclidean topology. 
The following argument is based on \citet[\S11A(c)]{rH1975}. 
We use the fact that $C$, like any convex subset in a finite-dimensional vector space,  has a non-empty relative interior $\rInt C$ \citep[Lemma 7.33]{AB2006}. This is an open subset of $\aff C$. 
Translating $C$, we can assume that $0\in\rInt C$, in which case $\aff C=\Span C$. 
Let $x$ be in the closure of $C$, which is contained in $\aff C$. For any $\aa\in(0,1)$, $X=-\frac{1-\aa}{\aa}\rInt C$ is open in $\aff C$, so $x+X$ contains a point $x'\in C$. Then
 \[
\aa x\in \aa (x'-X)=\aa x'+ (1-\aa)\rInt C\subset C. 
\]
Thus $(x,0]\subset C$. Since $C$ is algebraically closed, $x\in C$; thus $C$ is 
closed.

Now suppose $V$ has countable dimension. By definition, a subset $X$ of $V$ is closed in the {\em finite} topology on $V$ if and only if $X\cap W$ is closed in the Euclidean topology in every finite-dimensional subspace $W$ of $V$. 
Since, for each finite-dimensional $W\subset V$, $C\cap W$ is algebraically closed, the preceding argument shows that $C$ is closed in the finite topology. 
By a result due to \citet{vK1953}, but stated more fully in \citet{KK1963}, the finite topology on a countable dimensional vector space makes it a locally convex topological vector space.  
By another version of the strong separating hyperplane theorem \citep[Cor. 5.80]{AB2006}, 
$C$ is the intersection of 
half-spaces that are closed in the weak topology. 
$C$ itself is therefore closed in the weak topology. 
\end{proof}

The proof of Proposition~\ref{p:misc} will use the following observation, given an efficient embedding $M\subset V$ of a preordered mixture space.

\begin{lemma}\label{l:wdom} Suppose given $(s,t)\in \Graph$, 
$x,y\in M$, 
and $\mu>0$.
The following are equivalent:
\begin{enumerate}
[nosep, label=(\roman*)]
\item\label{l:wdom:ll} There exists $\ll>0$ such that $\ll(x-y)\seV \mu(s-t)$.
\item\label{l:wdom:wdom} We have $(x,y)\in \Graph$, and $(x,y)$ weakly dominates $(s,t)$.
\end{enumerate}
\end{lemma}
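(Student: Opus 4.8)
The plan is to translate both conditions into statements about the positive cone $C$ and the linear extension $\seV$, after which everything reduces to rescaling by positive scalars. First I would record the cone-theoretic reformulation of weak dominance that was already extracted inside the proof of Proposition~\ref{p:faces}\ref{p:faces:basic}: for $(x,y),(s,t)\in\Graph$, the pair $(x,y)$ weakly dominates $(s,t)$ if and only if there is some $\ll>0$ with $(x-y)-\ll(s-t)\in C$, equivalently $(x-y)\seV\ll(s-t)$. The substitution $\ll=(1-\aa)/\aa$ is a bijection from $\aa\in(0,1)$ onto $\ll\in(0,\infty)$, so nothing is lost in this passage. I would also invoke the standard dictionary from \eqref{eq:mspace} and \eqref{eq:CseVdef}: for $v,w\in V$ one has $v\seV w\iff v-w\in C$, while $C$ is closed under positive scalar multiplication and, being a convex cone, under addition.

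For the direction (ii)$\implies$(i), assuming $(x,y)\in\Graph$ and weak dominance, the reformulation supplies $\ll'>0$ with $(x-y)\seV\ll'(s-t)$; multiplying through by the positive scalar $\mu/\ll'$ gives $(\mu/\ll')(x-y)\seV\mu(s-t)$, so $\ll=\mu/\ll'$ witnesses (i). For (i)$\implies$(ii) I would argue two claims. To see that $(x,y)\in\Graph$, rewrite the hypothesis $\ll(x-y)\seV\mu(s-t)$ as $\ll(x-y)-\mu(s-t)\in C$; since $(s,t)\in\Graph$ yields $s-t\in C$ and hence $\mu(s-t)\in C$, adding these two members of $C$ shows $\ll(x-y)\in C$, and dividing by $\ll>0$ gives $x-y\in C$, i.e.\ $x\se y$. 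To see weak dominance, divide the hypothesis by $\ll$ to obtain $(x-y)\seV(\mu/\ll)(s-t)$ and apply the reformulation with $\ll'=\mu/\ll$.

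The only step that uses anything beyond rescaling is the deduction that $x\se y$ in (i)$\implies$(ii): there one genuinely needs that $s\se t$ places $s-t$ in $C$ together with closure of $C$ under addition, so this conclusion is not purely formal. Everything else is bookkeeping with positive multiples, and the fixed $\mu>0$ plays no essential role—it is simply absorbed into the choice of $\ll$ or $\ll'$—which is precisely why the stated equivalence is insensitive to the value of $\mu$. I therefore expect no real obstacle beyond keeping the scalar factors straight.
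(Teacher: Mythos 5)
Your proof is correct and follows essentially the same route as the paper's: both unwind the definition of weak dominance via \eqref{eq:mspace} and \eqref{eq:CseVdef} into the cone condition $(x-y)-\ll'(s-t)\in C$ and then rescale by positive scalars, with the bijection $\aa\mapsto(1-\aa)/\aa$ handling the passage between mixture weights and cone coefficients. The only cosmetic differences are that you derive $x\se y$ by adding two elements of $C$ where the paper uses transitivity of $\seV$, and that you factor the argument through the reformulation already implicit in the proof of Proposition~\ref{p:faces}\ref{p:faces:basic}; neither affects correctness.
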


\begin{proof} 
We repeatedly use facts \eqref{eq:mspace} and \eqref{eq:CseVdef} about efficient embeddings. 
Suppose \ref{l:wdom:ll} holds. We have $(s,t)\in \Graph\implies s\se t\implies \mu(s-t)\seV 0\implies \ll(x-y)\seV0\implies x\se y \implies (x,y)\in \Graph$. 
Rearranging the inequality in \ref{l:wdom:ll}, and setting $\aa=\ll/(\ll+\mu)$, 
we find $\aa x+(1-\aa)t\se \aa y+(1-\aa)s$. 
Therefore $(x,y)$ weakly dominates $(s,t)$. Thus \ref{l:wdom:wdom} holds. Conversely, given \ref{l:wdom:wdom}, 
we have 
$\aa x+(1-\aa)t\se \aa y+(1-\aa)s$ for some $\aa\in(0,1)$. Rearranging, we obtain $\ll(x-y)\seV \mu(s-t)$ with $\ll=\aa\mu/(1-\aa)$. Thus \ref{l:wdom:ll} holds.
\end{proof}

\begin{proof}[{\bf Proof of Proposition~\ref{p:misc}}]
For~\ref{p:misc:SD}, it is a standard result that the algebraic interior of a convex cone consists of its order units; see e.g. \citet[Lemma 1.7]{AT2007}. The proof of~\ref{p:misc:SD} essentially translates this fact into a result about $M$ itself. We will rely on the basic facts \eqref{eq:mspace} and \eqref{eq:CseVdef} about efficient embeddings without further comment.

Suppose \ref{SDom} holds with respect to some $(x,y)\in \Graph$. Let $v=x-y\in C$.  We note that, since $C$ is a convex cone, 
$\aff(C)=\Span(C)=C-C$.  
Thus, given $w\in\aff(C)$, we can write $w=w_1-w_2$ with $w_1,w_2\in C$. 
Since $w_2\in C$,  we also have $w_2=\mu(s-t)$ for some $\mu>0$ and $(s,t)\in \Graph$.  By \ref{SDom}, $(x,y)$ weakly dominates $(s,t)$. So there exists, by Lemma \ref{l:wdom}, some $\ll>0$ such that $\ll v=\ll(x-y)\seV \mu(s-t)=w_2$. Therefore $v-\tfrac1\ll w_2\in C$. Since also $\tfrac1\ll w_1\in C$, we find that $v+\tfrac1\ll w_1-\tfrac1\ll w_2=v+\tfrac1\ll w\in C$. Since $C$ is convex, we deduce $[v,v+\tfrac1\ll w]\subset C$. Since $w\in\aff(C)$ was arbitary, this shows $v$ is in the relative algebraic interior $\rai(C)$.

Conversely, suppose that $\rai(C)$ is nonempty. 
Fix $v\in\rai(C)$; then $v=\ll(x-y)$ for some $\ll>0$ and $x\se y$. 
Given any $(s,t)\in \Graph$, we have $t-s\in -C\subset\aff(C)$. For some $\epsilon>0$, we must have $v+\epsilon(t-s)\in C$, so 
$\ll(x-y)\seV \epsilon(s-t)$.
By Lemma \ref{l:wdom}, we have $(x,y)\in \Graph$ and $(x,y)$ weakly dominates $(s,t)$. Therefore this $(x,y)$ weakly dominates every $(s,t) \in \Graph$, so \ref{SDom} holds.

For~\ref{p:misc:CD}, suppose \ref{CDom} holds, so that every $(s,t)\in \Graph$ is weakly dominated by an element of some countable set $D \subset \Graph$. Let
$S = \Set{n(x-y)}{n \in \NNN, (x,y) \in D}\subset C$. 
Since $D$ is countable, so is $S$.
We claim $S$ is cofinal in $C$.
Let  $w \in C$. 
We can write $w=\mu(s-t)$ with $\mu>0$, 
$s\se t$. 
Some $(x,y)\in D$ weakly dominates $(s,t)$. Therefore, by Lemma \ref{l:wdom},  there exists $\ll>0$ with $\ll(x-y)\seV \mu(s-t)=w$.  Choose an integer $n>\ll$. Then $n(x-y)\seV\ll(x-y)\seV w$. Since $n(x-y)\in S$,  $S$ is cofinal in $C$.  

Conversely, suppose that $S$ is a countable set, cofinal in $C$.  
For each $v \in S$,  
we can choose $\ll_v>0$ and  
$x_v, y_v \in M$ with $x_v \se y_v$ 
such that  $v=\ll_v(x_v-y_v)$. 
Let $D=\Set{(x_v,y_v)}{v\in S}$. 
Since $S$ is countable, so is $D$. To prove \ref{CDom}, we show that every $(s,t)\in \Graph$ is weakly dominated by an element of $D$.
Since $s \se t$, we have $s-t\in C$. 
Since $S$ is cofinal, there exists $v \in S$ such that  $v\seV s-t$. It follows from Lemma \ref{l:wdom} that $(x_v,y_v)$ weakly dominates $(s,t)$. 
\end{proof}

\begin{proof}[{\bf Proof of Corollary \ref{c:CD dim}}]
By Proposition \ref{p:misc}\ref{p:misc:CD}, it suffices to show that there is a countable 
set  cofinal in $C$ if and only if there is a countable-dimensional subspace cofinal in $\Span(C)$.

Suppose $S\subset C$ is countable and cofinal.  Let $Z=\Span(S)$. Because $C$ is a convex cone, any $v\in\Span(C)$ can be written in the form $v=x-y$ with $x,y\in C$. There is some $s\in S$ such that $s\seV x$; but then $s\seV v$. 
Since $s\in Z$, $Z$ is cofinal in $\Span(C)$. 
It has countable dimension since $S$ is countable.

Conversely, suppose a countable-dimensional subspace $Z$ is cofinal in $\Span(C)$. 
Let $b_1,b_2,\ldots$ be a countable (finite or infinite) basis for $Z$. Since $b_i\in \Span(C)$,  it can be written as $x_i-y_i$ with $x_i,y_i\in C$. Note that $x_i\seV b_i$. Let $S$ consist of all linear combinations of the $x_i$ with non-negative integer coefficients; it is a countable subset of $C$. Let $v\in C$. There exists $z\in Z$ such that $z\seV v$. We can write $z$ as a finite sum $z=\sum_i \ll_ib_i$, for some $\ll_i\in\RRR$. If $\ll$ is a positive integer greater than all the $\ll_i$, then $S\ni \sum_i \ll x_i\seV z\seV v$. Therefore $S$ is cofinal in $C$.
\end{proof}


\begin{proof}[{\bf Proof of Lemma~\ref{l:top}}]
The first claim, at least, is well-known; \citet{BH2016}, for example, provide two proofs of the first implication. But we give the short proofs for convenience.

To show $\ref{CMR}\implies\ref{Cl}$,  suppose $\UU$ is a continuous mixture-preserving multi-representation of $\se$.  For each $u\in \UU$, define $\tilde u\colon M^2\to \RRR$ by $\tilde u(x,y)=u(x)-u(y)$. This $\tilde u$ is continuous,
and $\Graph=\bigcap_{u\in \UU} \tilde u\inv([0,\infty))$. Thus $\Gamma_\se$ is the intersection of closed sets, so \ref{Cl} holds.

To show $\ref{Cl}\implies\ref{Con}$,  assume that $\Graph$ is closed.  Let $x \in M$. The map $f_x \colon M \to M^2$ given by $f_x(y) = (y,x)$ is continuous. Therefore, $\Set{y}{y \se x} = f_x\inv(\Graph)$ is closed. 
A similar argument shows that $\Set{y}{x \se y}$ is closed. Hence \ref{Con} holds.

For the second claim of the lemma, suppose $M$ is a mixture space and the maps $f_{x,y}$ are continuous. To show $\ref{Con}\implies\ref{MC}$, 
suppose that $\se$ is continuous.  
Suppose that $x \aa y \s z$ for all $\aa \in (0,1]$. Since $\Set{w}{w \se z}$ is closed, so is $f_{x,y}\inv(\Set{w}{w \se z})$. 
The latter contains $(0,1]$, so it also contains $0$. Thus 
$y \se z$, establishing \ref{MC}. 
\end{proof}

The next lemma records some basic facts about the weak topology that will be used in the proof of Proposition~\ref{p:klee2}.

\begin{lemma}\label{l:MS}
Let $M_1$ and $M_2$ be mixture spaces, each with the weak topology.
\begin{enumerate}
[nosep, label=(\roman*)]
\item\label{MS:con} Suppose $f \colon M_1 \to M_2$ is mixture-preserving. Then $f$ is continuous.
\item\label{MS:prod} The weak topology on $M_1\times M_2$ equals the product topology.%
\footnote{Here $M_1\times M_2$ is a mixture space with respect to the component-wise mixing operation: $(x_1,x_2)\aa(y_1,y_2)=(x_1\aa y_1,x_2\aa y_2)$.} 
\item\label{MS:sub} If $M_1$ is a mixture subspace of $M_2$,  then it is a topological subspace. 
\item\label{MS:vsub} If $M_2$ is a vector space and $M_1 \subset M_2$ is a linear subspace, 
then $M_1$ is closed in $M_2$.
\end{enumerate}
\end{lemma}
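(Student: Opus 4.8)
The plan is to treat the weak topology on a mixture space $M$ as the initial topology determined by the family of all mixture-preserving functions $M\to\RRR$, and to combine the universal property of initial topologies with the efficient embeddings of section~\ref{s:mspace}. Recall the universal property: a map into a space carrying an initial topology is continuous if and only if its composite with each of the defining functions is continuous. Part~\ref{MS:con} is then immediate. Given mixture-preserving $f\colon M_1\to M_2$, it suffices to check that $g\circ f$ is continuous for every mixture-preserving $g\colon M_2\to\RRR$; but a composite of mixture-preserving maps is again mixture-preserving, so $g\circ f$ is a mixture-preserving function $M_1\to\RRR$, hence continuous by the very definition of the weak topology on $M_1$.

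For \ref{MS:prod} I would prove the two inclusions of topologies separately. The product topology is the coarser of the two because the projections $\pr_i\colon M_1\times M_2\to M_i$ are mixture-preserving, so by \ref{MS:con} they are continuous for the weak topology on the product; hence the weak topology contains the product topology. For the reverse inclusion it suffices to show that every mixture-preserving $h\colon M_1\times M_2\to\RRR$ is continuous for the product topology. The key step is the additive decomposition $h(x_1,x_2)=h_1(x_1)+h_2(x_2)+c$, with $h_i$ mixture-preserving on $M_i$ and $c$ a constant. I would obtain this by passing to an efficient embedding $M_1\times M_2\subset V_1\times V_2$ (one checks $\aff(M_1\times M_2)=V_1\times V_2$, so the embedding is again efficient) and using that an affine functional on a product vector space splits as a sum of affine functionals of the two factors. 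Given the decomposition, $h$ is a sum of the product-continuous maps $h_i\circ\pr_i$ together with a constant, hence product-continuous.

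Part~\ref{MS:sub} follows the same pattern. The inclusion $M_1\hookrightarrow M_2$ is mixture-preserving, so by \ref{MS:con} it is continuous, giving that the subspace topology is contained in the weak topology of $M_1$. For the reverse inclusion I would show that every mixture-preserving $u\colon M_1\to\RRR$ is continuous for the subspace topology, by extending it to a mixture-preserving function on $M_2$: via efficient embeddings $u$ is the restriction of an affine functional on $\aff(M_1)$, which is an affine subspace of the ambient space of $M_2$; any affine functional on an affine subspace extends to one on the whole space, and restricting this extension to $M_2$ yields a mixture-preserving $\tilde u$ with $\tilde u\restr{M_1}=u$. Since $\tilde u$ is weakly continuous on $M_2$, its restriction $u$ is continuous for the subspace topology. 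Part~\ref{MS:vsub} is the most routine: a linear subspace $M_1$ of $V=M_2$ equals $\bigcap_L\ker L$ over all linear functionals $L$ on $V$ with $L\restr{M_1}=0$ (any $v\notin M_1$ is separated from $M_1$ by such a functional), and by Remark~\ref{rem:weak} every linear functional on $V$ is weakly continuous, so each $\ker L$ is closed and $M_1$ is an intersection of closed sets.

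I expect the main obstacle to be the decomposition step in \ref{MS:prod}: both verifying that $M_1\times M_2\subset V_1\times V_2$ is efficient and that mixture-preserving functions on the product split additively. This is exactly the content that prevents the weak topology on the product from being strictly finer than the product topology. The extension argument in \ref{MS:sub} is similar in spirit but easier, while \ref{MS:con} and \ref{MS:vsub} are essentially formal consequences of the universal property and Remark~\ref{rem:weak}.
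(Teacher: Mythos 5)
Your proposal is correct, and parts \ref{MS:con}, \ref{MS:sub} and \ref{MS:vsub} follow the paper's proof essentially verbatim: the universal property of the initial topology for \ref{MS:con}, extension of a mixture-preserving function on $M_1$ to one on $M_2$ via an affine extension in the ambient vector space for \ref{MS:sub}, and separation of any $v\notin M_1$ by a linear functional vanishing on $M_1$ for \ref{MS:vsub}. The one genuine divergence is in part \ref{MS:prod}, precisely at the step you flag as the main obstacle. The paper establishes the additive splitting intrinsically: it fixes base points $z_1\in M_1$, $z_2\in M_2$, defines $f_1(x_1)=f(x_1,z_2)$ and $f_2(x_2)=f(z_1,x_2)-f(z_1,z_2)$, and verifies $f=f_1\circ\pi_1+f_2\circ\pi_2$ by a direct two-line computation using only the mixture-preservation identity (comparing $f(x_1\tfrac12 z_1,\,z_2\tfrac12 x_2)$ with $f(z_1\tfrac12 x_1,\,z_2\tfrac12 x_2)$). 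You instead pass to efficient embeddings $M_i\subset V_i$, check that $M_1\times M_2\subset V_1\times V_2$ is again efficient (which does hold, since $(M_1\times M_2)-(M_1\times M_2)=(M_1-M_1)\times(M_2-M_2)$ spans $V_1\times V_2$), and split the extending affine functional on the product vector space. Both arguments are valid; the paper's is more elementary and self-contained, needing no appeal to section~\ref{s:mspace}, while yours buys an explicit formula-free decomposition at the cost of verifying efficiency of the product embedding and the existence of the affine extension. Either way the remaining steps go through unchanged.
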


\begin{proof} 
 \ref{MS:con} By definition of the weak topology on $M_2$, a function $f\colon X\to M_2$ from an arbitrary topological space $X$ is continuous if and only if $g\circ f$ is continuous for every mixture-preserving $g\colon M_2\to\RRR$. Our $f\colon M_1\to M_2$ is mixture preserving, so $g\circ f$ is mixture-preserving, and therefore continuous on $M_1$.

\ref{MS:prod} 
 The weak topology on $M_1\times M_2$ is the coarsest one such that every mixture-preserving $f\colon M_1\times M_2\to \RRR$ is continuous. 
The product topology is the coarsest one such that 
the projections $\pi_i$ of $M_1\times M_2$ onto $M_i$ are continuous. 
Equivalently, it is the coarsest one such that 
for all mixture-preserving $f_1\colon M_1\to\RRR$ and $f_2\colon M_2\to\RRR$, the function
$f_1\circ\pi_1+f_2\circ\pi_2\colon M_1\times M_2\to\RRR$ is continuous. Since the latter function is clearly mixture-preserving, it suffices to show that (conversely) every mixture-preserving $f$ is of this form.  

 Fix $z_1\in M_1$ and $z_2\in M_2$.
 For $x_i\in M_i$ define $f_1(x_1)= f(x_1,z_2)$ and $f_2(x_2)= f(z_1,x_2)-f(z_1,z_2)$. It is easy to check that $f_1,f_2$ so defined are mixture preserving. Moreover, using the mixture-preservation property of $f$,  
\begin{align*} 
f_1(x_1)+f_2(x_2)-f(x_1,x_2) &=
f(x_1,z_2)+f(z_1,x_2)-(f(z_1,z_2)+f(x_1,x_2))\\
&=
2f(x_1\tfrac12 z_1,z_2\tfrac12 x_2)-
2f(z_1\tfrac12 x_1,z_2\tfrac12 x_2))
\\
&=0.
\end{align*}
Therefore $f_1\circ \pi_1+f_2\circ\pi_2=f$, as desired.

\ref{MS:sub}
The claim is that the weak topology on $M_1$ coincides with the subspace topology 
inherited from $M_2$. 
The restriction to $M_1$ of a mixture-preserving function on $M_2$ is mixture preserving;  
it follows that the subspace topology on $M_1$ is contained in its weak topology. To show the converse, it suffices to show that any 
mixture-preserving $M_1\to \RRR$ extends to a mixture-preserving function $M_2\to\RRR$. To prove this using standard facts from linear algebra, we can first embed $M_2$ as a convex set in a vector space $V$ (see section \ref{s:mspace}); thus $M_1$ is also a convex subset of $V$. Any mixture-preserving function $f\colon M_1\to\RRR$ extends to an affine (i.e. linear plus constant) function on $V$; 
the restriction of this affine function to $M_2$ is a mixture-preserving extension of $f$.

\ref{MS:vsub}
For any $x\in M_2\setminus M_1$, there is a linear (hence mixture-preserving) function $g\colon M_2\to\RRR$ such that $g(M_1)=\{0\}$ and $g(x)=1$. Then $g\inv((0,\infty))$ is an open neighbourhood of $x$ disjoint from $M_1$. Thus $M_2\setminus M_1$ is open and $M_1$ is closed in $M_2$.
\end{proof}

\begin{proof}[{\bf Proof of Proposition~\ref{p:klee2}}]
We first show that the mixture preorder $\se$ defined in Example~\ref{ex:klee2} is continuous.  Fix $(v,w)\in M$.  
Let $U = \Set{(x,y)}{(x,y)\se (v,w)}$ and $L = \Set{(x,y)}{(v,w)\se (x,y)}$. We need to show that $U$ and $L$ are closed in $M$, 
which has the weak topology. The two cases are similar, so we consider the former. 

Let $K_v = K\cap V_v$. 
Define a 
function $f \colon M \to V \times V$ by $(x,y) \mapsto (x,y) - (v,w)$. 
It follows from \eqref{eq:klee2} that
$U = f\inv(\{0\} \times K_v)$. 
Give $V\times V$ the weak topology.  Since $f$ is mixture-preserving,  Lemma~\ref{l:MS}\ref{MS:con} tells us that $f$ is continuous. So, to show that $U$ is closed, it suffices to show that $\{0\}\times K_v$ is closed in $V\times V$. 

In the first step of proving Proposition~\ref{p:klee} we showed that $K_v$, that is, $K\cap \Span( A_v\cup \{b_0\})$,   is an algebraically closed convex cone. Thus $\{0\} \times K_v$ is an algebraically closed convex subset of $\{0\} \times V_v$. Since $V_v$, and hence $\{0\} \times V_v$, is a finite-dimensional vector space, 
Proposition~\ref{p:count} implies that 
$\{0\} \times K_v$ is 
closed in the weak topology on $\{0\} \times V_v$.
 
By Lemma \ref{l:MS}\ref{MS:sub}, $\{0\}\times V_v$, with the weak topology, is a topological subspace of $V\times V$. Moreover, it is a closed subspace,  by Lemma \ref{l:MS}\ref{MS:vsub}. In summary, $\{0\}\times K_v$ is closed in a closed subspace of $V\times V$; therefore it is closed in $V\times V$.

We now show that $\Graph$ is not closed in $M\times M$.  Note that $z=(0,b_0;0,0)$ is an element of $M\times M$, but not of $\Graph$. It suffices to show that $z$ is in the closure of $\Graph$ in $M\times M$. 
Therefore, it suffices to find a net $(z_\aa)$ in $\Graph$ converging to $z$ in $M\times M$.
Here $M$ has the weak topology and $M\times M$ has the resulting product topology.
Similarly, 
give $V$ the weak topology, 
and $V^2\times V^2$ the product topology. By Lemma \ref{l:MS}\ref{MS:prod}, both these product topologies are again the weak topologies; Lemma \ref{l:MS}\ref{MS:sub} then implies that $M\times M$ is a topological subspace of $V^2\times V^2$.
So it will suffice that $(z_\aa)$ converges to $z$ in $V^2\times V^2$.

Recall that $b_0$ is in the 
closure $\ov K$ of $K$ in $V$,
as proved  as the last step in the proof of Proposition~\ref{p:klee}.
Let $(y_\aa)$ be a net in $K$ converging to $b_0$. Note that, by definition, $K\subset\cone(B)=V^+$. Therefore  each $y_\aa$ can be written as $y_\aa=x_\aa+\ll_\aa b_0$, with $x_\aa\in\cone(B_1)$ and $\ll_\aa\geq 0$. Note $x_\aa\in V^+$ and $y_\aa\in V_{x_\aa}$, so $(x_\aa,y_\aa)$ is in $M$. Moreover, by \eqref{eq:klee2}, $(x_\aa,y_\aa)\se (x_\aa,0)$. Therefore 
$z_\aa\coloneqq(x_\aa, y_\aa;  x_\aa,0)$ is in $\Graph$.  

Now, \emph{any} element of $V$ can be written uniquely in the form $y=x+\ll b_0$ with $x\in\Span(B_1)$ and 
$\ll\in\RRR$.
Define a linear map $f\colon V\to V^2\times V^2$ by $f(y)=(x,y;x,0)$. 
Note $z_\aa=f(y_\aa)$. 
Since,  by Lemma \ref{l:MS}\ref{MS:con}, $f$ is continuous, we have $\lim_\aa z_\aa =f(b_0)=z$. 
\end{proof}


\begin{proof}[{\bf Proof of Proposition~\ref{p:strict1}}]
It is obvious that a preorder satisfying \ref{SMR} satisfies \ref{MR} and admits a strictly increasing mixture-preserving function. (Note that we require multi-representations to be nonempty.) 
Conversely,
let $u' \colon M \to \RRR$ be 
mixture-preserving and 
strictly increasing, and $\UU$ be a mixture-preserving multi-representation. 
Let $\UU' = \Set{u' + nu}{n \in \NNN, u \in \UU}$. First, note that for any $n \in \NNN$ and $u \in \UU$, $u' + nu$ is strictly increasing. Now suppose that $u'(x) + nu(x) \geq u'(y) + nu(y)$ for all $n \in \NNN$, $u \in \UU$. Since, for each $u$, $n$ can be arbitrarily large, we must have $u(x) \geq u(y)$. Since $\UU$ is a multi-representation, we find $x \se y$, so $\UU'$ is a mixture-preserving multi-representation containing only strictly increasing functions.
\end{proof}

\begin{proof}[{\bf Proof of Lemma~\ref{l:card}}]
Let $M\subset V$ be an efficient embedding, with positive cone $C \subset V$. 
Suppose given a mixture-preserving multi-representation $\UU$. For each $u\in \UU$, let $\tilde u$ be its extension to an affine function $V\to\RRR$, and let $A_u$ be the open half-space $A_u= \Set{v\in V}{\tilde u(v)<\tilde u(0)}$.  It follows from \eqref{eq:intersection} that $\AA=\{A_u : u\in\UU\}$ is an open cover of $V\setminus C$, in the weak topology  
on $V$. 

Consider first the case where $\dim M$ is finite,   and hence, by Proposition \ref{p:cone}\ref{p:cone:dim}, $\dim  V$ is finite.  Then the weak topology on $V$ coincides with the Euclidean topology, and 
$V$ is a second-countable topological space, as is its topological subspace 
$V \setminus C$.
By Lindel\"of's lemma, 
$\AA$ contains a countable subcover $\AA'$. We can write 
$\AA'=\Set{A_u}{u\in\UU'}$ 
for some countable subset $\UU'\subset\UU$.  Then
\begin{equation}\label{eq:CVintersection'}
	C = \bigcap_{u \in \UU'}\Set{v\in V}{\tilde u(v)\geq \tilde u(0)}.
\end{equation}
It follows from \eqref{eq:intersection} that  $\UU'$ is a mixture-preserving multi-representation of $\se$. Finally we note that $|\UU'| =\aleph_0\leq \max(\aleph_0, \dim M)$.  

Now suppose $\dim M = \dim V = \kappa$ for some infinite cardinal $\kappa$. Let $B$ be a basis of $V$, and let $\PP$ be the set of finite subsets of $B$; note that $|\PP|= \kappa$. For each $P\in\PP$,  $\AA_P\coloneqq \{A_u\cap\Span P:u\in\UU\}$ is an open cover of $\Span P \setminus C$ 
in the weak topology on $\Span P$. 
As in the previous paragraph, it contains a countable subcover $\AA'_P$, which we can write in the form $\AA'_P=\{A_u\cap \Span P: u\in \UU'_P\}$, with $\UU'_P\subset\UU$ countable. 
Let $\UU'=\bigcup_{P\in\PP} \UU'_P$.
Choose any $v\in V\setminus C$. It is in $\Span P$ for some $P$, and therefore it is in $A_u$ for some $u\in\UU'$. So $\UU'=\{A_u:u\in\UU'\}$ is an open cover of $V\setminus C$. For the same reason as before, $\UU'$ is  a mixture-preserving multi-representation of $\se$. 
Finally, since $|\PP|=\kappa$ and each $\UU'_P$ is countable, $|\UU'|=\kappa\leq \max(\aleph_0, \dim M)$. 
\end{proof}

\begin{proof}[{\bf Proof of Proposition~\ref{p:strict2}}]
For \ref{p:strict2:count-a}, 
assume that $\dim M$ is countable and let $\se$ be a mixture preorder on $M$ that has a mixture-preserving multi-representation; we have to show that it has one using only strictly increasing functions. Let $M\subset V$ be an efficient embedding, so, by Proposition \ref{p:cone}\ref{p:cone:dim},  $\dim V$ is countable. Since $V = \Span M$, we can pick a (finite or countably infinite) basis $B=\{v_1,v_2,\ldots\}\subset M$ of $V$. 
By Lemma~\ref{l:card},  $\se$ has a finite or countably infinite 
mixture-preserving multi-representation $\UU= \{u_1,u_2,\ldots\}$. 
Let $\tilde u_i\colon V\to \RRR$ be the unique extension of $u_i$ to an affine function; thus 
$L_i \coloneqq  \tilde u_i - \tilde u_i(0)$ is a linear functional on $V$. 
Rescaling the $u_i$ as necessary,  we can assume $|L_i(v_j)| \leq 1$ whenever $j \leq i$. We define a mixture-preserving function $u$ on $M$ by 
\[u(x) = \sum_{i=1}^{|\UU|} 2^{-i}L_i(x). 
\]
This is clearly well-defined when $|\UU|$ is finite.   
If $|\UU|$ is infinite, note that every $x\in M$ can be written in the form 
$x = \sum_{j=1}^{|B|} c_j v_j$, with finitely many nonzero $c_j\in\RRR$.    
It follows that $|L_i(x)| \leq 
\sum_{j=1}^{|B|} |c_j||L_i(v_j)|\leq
 \sum_{j=1}^{|B|} |c_j|$, for all sufficiently large $i$. 
Therefore the sum defining 
$u(x)$ 
is absolutely convergent, making 
$u$ a well-defined mixture-preserving function. 
It is also strictly increasing.    By Proposition \ref{p:strict1}, $\se$ has a mixture-preserving multi-representation using only strictly increasing functions.

For part \ref{p:strict2:uncount-a}, we show that the mixture preorder defined in Example~\ref{ex:no strict} satisfies \ref{MR} but not \ref{SMR}.


That preorder was defined by a mixture-preserving multi-representation, 
so it satisfies \ref{MR}.  We show that it does not admit any strictly-increasing function $M\to \RRR$. Suppose for contradiction that $u$ is such a function. In the notation of the example, for each $\aa<\kk$, define $f(\aa) = -u(v_\aa)$. 
Given $\aa < \bb < \kappa$, we have $v_\aa \s v_\bb$, and hence $u(v_\aa) > u(v_\bb)$. 
 This shows that $f$ is a strictly increasing function of $\aa$,
and hence there are uncountably many intervals $(f(\aa), f(\aa+1)) \subset \RRR$ that are nonempty, pairwise disjoint, and open.   
But that is impossible: each open interval must contain a rational number, of which there are countably many. 
\end{proof}

\begin{proof}[{\bf Proof of Proposition~\ref{p:unique}}]
Suppose a preorder $\se$ on $M$ is represented by $\UU \subset M^*$. Let $(M^*)^+ \subset M^*$ consist of the functions in $M^*$ that are increasing with respect to $\se$. Write $\KK = \cone{(\UU \cup \CC)}$. To prove the Proposition, it is sufficient to show that  $\ov \KK = (M^*)^+$. 

We first verify $\ov \KK \subset (M^*)^+$. It is obvious that $\KK \subset (M^*)^+$. Suppose $(f_\aa)$ is a net in $\KK$ converging to $f$, and suppose $x \se y$. Then $f_\aa(x) \geq f_\aa(y)$ for all $\aa$. 
Since $M^*$ has the topology of pointwise convergence, 
$\lim_\aa f_\aa(x)=f(x)$ and
$\lim_\aa f_\aa(y)=f(y)$; therefore $f(x)\geq f(y)$. Thus $f$ is increasing, i.e.~$f \in (M^*)^+$. 

Conversely, to show $(M^*)^+ \subset \ov \KK$, 
we first embed $M$ in $M^{**}$, the algebraic dual of $M^*$, 
via the mapping $\phi \colon M \to M^{**}$ given by $\phi(x)(f) = f(x)$. 
It is easy to check that $\phi$ is mixture-preserving (it is also injective, as shown in \citet{pM2001}, but we do not use this). 
The subspace  $\Span(\phi(M)) \subset M^{**}$ 
separates the points of $M^*$, 
so $(M^*, \Span(\phi(M))$ is a dual pair
of vector spaces. Moreover, the topology on $M^*$ is the weak topology with respect to this pairing, so 
it follows from the fundamental theorem of duality 
\citep[Thm. 5.93]{AB2006} that $\Span(\phi(M))$ is the continuous dual of $M^*$.

Suppose for a contradiction that $f \in (M^*)^+$ but $f \notin \ov \KK$. 
The vector space $M^*$ is locally convex, and since $\KK$ is a convex cone, 
we may use the 
strong separating hyperplane theorem 
\citep[Cor. 5.80]{AB2006} to obtain $F \in \Span(\phi(M))$ such that $F(\ov \KK) \subset [0, \infty)$ and $F(f) <0$. 
Write $F = \sum_{x \in M} \ll_x \phi(x) -  \sum_{x \in M} \mm_x \phi(x)$ for nonnegative $\ll_x, \mm_x\in\RRR$, only finitely many nonzero. 
Since $\phi$ is mixture preserving, 
we can combine terms to obtain
$F = \ll \phi(x) - \mm \phi(y)$ for some nonnegative $\ll, \mu\in\RRR$, and $x, y \in M$. 
Since $F$ is nonnegative on $\ov \KK$, and hence on the constant functions, we must have $\ll = \mu$. 
Thus $F(f) = \ll(f(x) - f(y)) < 0$. Since $f$ is increasing, it follows that $x \not\se y$. Thus for some $g \in \UU$,  $g(x) < g(y)$, implying that $F(g) < 0$. 
This is impossible since $g\in\ov\KK$. 
\end{proof}

\bibliographystyle{plainnat}

\end{document}